\title{A Quadratic Assignment Formulation\\of the Graph Edit Distance}
\author[1,4]{S{\'e}bastien Bougleux}
\author[1,4]{Luc Brun}
\author[1,2]{Vincenzo Carletti}
\author[2]{Pasquale Foggia}
\author[3,4]{Benoit~Ga{\"u}z{\`e}re}
\author[2]{Mario Vento}
\affil[1]{GREYC UMR 6072, CNRS - Universit{\'e} de Caen Normandie - ENSICAEN, France}
\affil[2]{MVIA, Dept. of Information, Electrical Engineering and Applied Mathematics, Univ. of Salerno, Italy}
\affil[3]{LITIS, INSA de Rouen, France}
\affil[4]{NormaSTIC FR CNRS 3638, France}
\DeclareMathOperator*{\argmin}{argmin}
\newtheorem{definition}{Definition}
\newtheorem{corollary}{Corollary}
\newtheorem{proposition}{Proposition}
\newtheorem{remarque}{Remark}
\newenvironment{proof}{\noindent{\bf Proof:}}{$\Box$\par}
\DeclareMathOperator{\kgraphs}{k-subgraphs}
\begin{document}
\maketitle
\begin{abstract}
  Computing efficiently a robust measure of similarity or
  dissimilarity between graphs is a major challenge in 
  Pattern Recognition. The Graph Edit Distance (GED) is a flexible measure
  of dissimilarity between graphs which arises in error-tolerant graph matching. It is defined from an optimal sequence of edit operations (edit path) transforming one
  graph into an other. Unfortunately, the exact computation of this
  measure is NP-hard. In the last decade, several approaches have been proposed to approximate the GED in polynomial time, mainly by solving linear programming problems. Among them, the bipartite GED has received much attention. It is deduced from a linear sum assignment of the nodes of the two graphs, which can be efficiently computed by Hungarian-type algorithms. However, edit operations on nodes and edges are not handled simultaneously, which limits the accuracy of the approximation. To overcome this limitation, we propose to extend the linear assignment model to a quadratic one, for directed or undirected graphs having labelized nodes and edges. This is realized through the definition of a family of edit paths induced by assignments between nodes. We formally show that the
  GED, restricted to the paths in this family, is equivalent to a
  quadratic assignment problem. Since this problem is NP-hard, we
  propose to compute an approximate solution by an adaptation of the Integer Projected Fixed Point method. Experiments show that the proposed approach is generally able to reach a more accurate
  approximation of the optimal GED than the bipartite GED, with a computational cost that is still affordable for graphs of non trivial sizes.
\end{abstract}

\section{Introduction}
The definition of efficient similarity or dissimilarity measures
between graphs is a key problem in structural pattern recognition \cite{Conte2004,Foggia2014,Vento2014}. This problem is nicely
addressed by the graph edit distance, which constitutes one of the most flexible
graph dissimilarity measure \cite{tsai74,Bunke1983,Sanfeliu1983,Bunke1999}. Given two graphs $G_1$ and
$G_2$, such a distance may be understood as a measure of the minimal amount of
distortion required to transform $G_1$ into $G_2$.  The graph edit
distance is defined from the notion of edit path which corresponds to
a sequence of elementary transformations of a graph into another. An edit operation is a transformation performed on the structure of a graph, here restricted to be elementary: node or edge insertion, removal and substitution. This is illustrated in Fig.~\ref{fig:edit_path}. Edit operations are
penalized by a real non-negative cost function $c_e(.)$, and the cost of the edit path is defined as the sum of all its elementary operation's
costs. An optimal edit path, transforming a graph $G_1\,{=}\,(V_1,E_1)$ into a graph $G_2\,{=}\,(V_2,E_2)$, have a minimal cost among all edit paths from $G_1$ to $G_2$. Its cost
defines the GED from $G_1$ to $G_2$:
% -------------------------------------------------------------------------------------
\begin{equation}
  \label{eq:distance-edition}
\text{GED}(G_1,G_2) = \min_{(e_1, \dots, e_k) \in \mathcal{P}(G_1,G_2)} \sum_{i=1}^k c_e(e_i).
\end{equation}
% -------------------------------------------------------------------------------------
where $\mathcal{P}(G_1,G_2)$ is the set of all edit paths from $G_1$ to $G_2$, and $e_i$ is an edit operation. In this paper, graphs are assumed to be simple and labeled.
% -------------------------------------------------------------------------------------
\begin{figure}[!t]
\begin{center}
\includegraphics[width=0.6\textwidth]{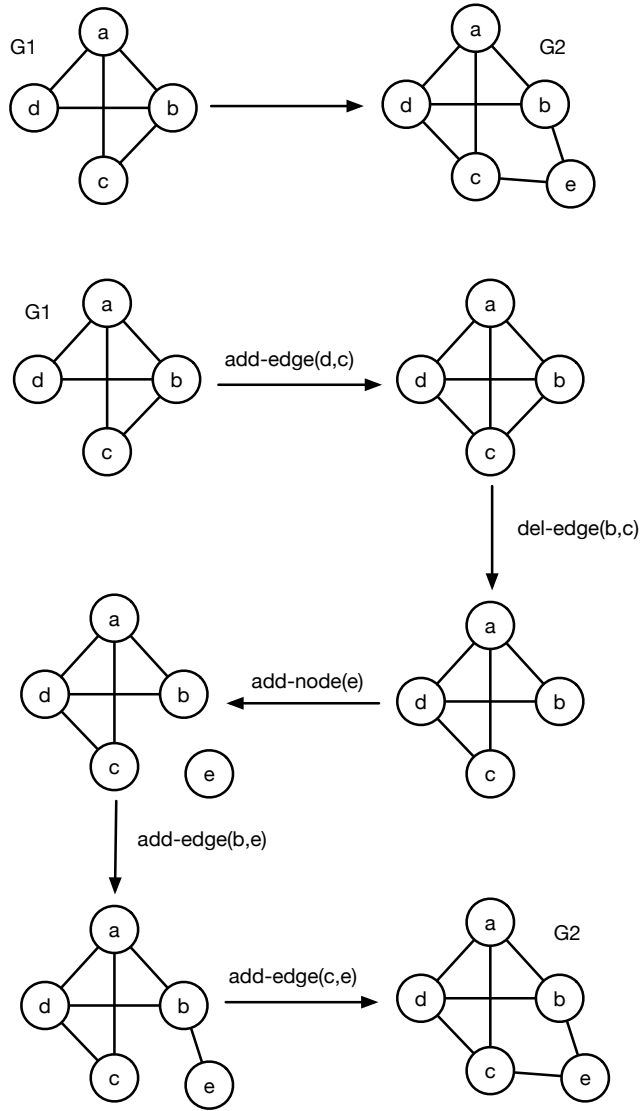}
\end{center}
\caption{A possible edit path to transform the graph $G_1$ into the graph $G_2$. 
If we assume that all the edit operation have an unitary cost, the overall cost of the transformation is equal to 5.}\label{fig:edit_path}
\end{figure}
% -------------------------------------------------------------------------------------

Computing the GED is NP-hard, in fact NP-complete, and its
approximation is APX-hard \cite{lin94}. A common approach consists in
representing the problem into a state space where the optimal solution
can be found using for example the $A^*$ algorithm, in exponential
time complexity. This is thus restricted to small graphs composed of
about 10 nodes.  Such a complexity compromises the suitability of the
GED in many practical applications where the graphs are usually one
order of magnitude bigger.  However, most of real world problems do
not require the computation of the exact GED, and the use of an
approximation is often sufficient. For this reason, the interest of
the scientific community has been focused on methods providing
efficient approximations of the GED, mainly linear and suboptimal
ones. 

In \cite{Justice2006}, the GED is modeled as 
a binary linear programming problem for graphs with labeled nodes and
unlabeled edges. The relaxation of the initial problem provides a
lower bound of the GED which however cannot be readily associated to
an edit path. An upper bound is also provided through an approximation
of the binary program. The resulting problem corresponds to a square
linear sum assignment problem (LSAP), \textit{i.e.} a weighted
bipartite graph matching, such that the nodes of graph $G_1$ can be
removed, substituted to the nodes of graph $G_2$, as well as the nodes
of $G_2$ can be inserted. A node assignment incorporating possible
removals and insertions is then defined as a bijective mapping, thus
representing a set of edit operations on nodes. Each edit operation is penalized by a cost. The cost of an assignment is then defined as the sum of the costs of its
corresponding edit operations on nodes. The LSAP consists in selecting
an assignment having a minimal cost, which can be computed in
polynomial time complexity, for instance with the Hungarian algorithm,
see \cite{sier01,bur09} for more details on linear programming and
LSAP. The resulting optimal node assignment allows to deduce, in a non ambiguous way, the edge operations that define an edit path, mostly not minimal but short. The cost of such a short path defines an approximate GED.

The same line of research has been followed
in~\cite{Riesen2007,Riesen2009}, but such that labels of edges are also taken into
account in the assignment through the cost of edit operations on nodes, which is neglected in the upper bound proposed in~\cite{Justice2006}. 
The resulting distance, called the
bipartite GED, has received much
attention~\cite{Zeng2009,Fankhauser2011,Serratosa2014,Gauzere2014a,Riesen2014,Riesen2014c,rie15,rie15prai,Ferrer2015,Carletti2015}. In order to select a
relevant assignment, a bag of patterns is attached to each node, and
each possible substitution is penalized by a cost that measures the
affinity between the bags, hence taking into account the edge
information. Similarly, node removals and insertions are
penalized by a cost measuring the importance of the bags. 

The definition of the bags of patterns is a key point, as well as the associated measure of affinity. Incident edges have been initially proposed in \cite{Riesen2007,Riesen2009}, and the cost between two nodes (or bags) is itself defined as the cost of the linear sum assignment of the patterns within the bags, following the same framework as the one defined for the nodes. The cost of substituting, removing and inserting the patterns depends on the original edit cost function $c_e$. The resulting optimal node assignment allows to deduce, in a non ambiguous way, the edge operations that define an edit path, mostly not minimal but short. The cost of such a short path defines the bipartite GED.

Remark that this approach assumes that an edit path may be deduced
from a sequence of edit operations applied on nodes only. As we will
see in this paper, this is possible because there is an equivalence
relation between assignments and edit paths. Intuitively, this is due to
the strong relationship that exists between GED and morphism between
graphs. Under special conditions on the cost of edit operations,
computing the GED is equivalent to compute a maximum common subgraph
of two graphs \cite{Bunke1997,Bunke1999}. More generally, any mapping
between the nodes of two graphs induces an edit path which substitutes
all mapped nodes together with all their incident edges, and inserts
or removes the non-mapped nodes/edges. Conversely, given an edit path
between two graphs, such that each node and each edge is substituted
only once, one can define a mapping between the substituted nodes and
edges of both graphs.

While the bipartite GED provides a good approximation of the GED, it
overestimates it. As shown by several works, this overestimation can
only be marginally reduced, for instance by considering more global
information than the one supported by incident edges
\cite{Zeng2009,Gauzere2014a,Carletti2015}, or by modifying the resulting edit
path by genetic algorithms \cite{Riesen2014}, see \cite{rie15} for
more details.  Although these methods provide an interesting
compromise between time complexity and approximation quality, they are
inherently limited to compute linear approximations of the GED.

%\begin{figure}[!h]
%\center
%\label{fig:k-graphs}
%\includegraphics[width=0.7\textwidth]{k-graphs.pdf}
%\caption{Example of local structure of a node. 
%The value $k$ represents the distance from the considered node
%in terms of path length. Note that, when $k=3$, the local structure of the node
%is the whole graph.} 
%\end{figure}

To fully describe the GED, both node and edge assignments should be
considered simultaneously. Indeed, operations on edges can only be
deduced from operations performed on their two incident nodes. For
instance, an edge can  be substituted to another one only if its
incident nodes are substituted. This pairwise constraint on nodes is
closely related to the one involved in graph matching. It is known
that graph matching problems, and more generally problems that
incorporate pairwise constraints, can be cast as a quadratic
assignment problem (QAP) \cite{koop57,Lowler1963,law76,Loiola2007,bur09}. QAPs
are NP-hard and so different relaxation algorithms have been proposed
to find an approximate solution, such as Integer Projected Fixed Point
(IPFP) \cite{Leordeanu2009}, or Graduated NonConvexity and Concavity
Procedure \cite{Liu2014}. Even if computing the GED is generally not
equivalent to solving a graph matching problem, it should also be
formalized as a QAP. To the best of our knowledge, this aspect has
only been considered through the definition of fuzzy paths by
\cite{neu07}. Thus, the strong relationships between the GED and the
QAP have not yet been analyzed.

In this paper, we extend the LSAP with insertions and removals \cite{Riesen2007,Riesen2009} to a quadratic one. First, preliminary results concerning edit paths are established (Section~\ref{sec:preliminaries}), allowing to formalize the relation between the LSAP (Section~\ref{sec:linear_formulation}) or the QAP (Section~\ref{sec:qap_formulation}), and such paths. In particular, we show that the GED is a QAP when graphs are simple. Then, we propose an improved IPFP algorithm adapted to the minimization of quadratic functionals to approximate the GED (Section~\ref{sec:qap_solvers}). The approach, validated through experiments in Section~\ref{sec:experiments}, generally provides a more accurate approximation of the exact GED than the bipartite GED, with a computational cost still affordable for graphs of non trivial sizes.

%%% Local Variables: 
%%% mode: latex
%%% TeX-master: "technical_report_ged"
%%% End: 

\section{Preliminaries}
\label{sec:preliminaries}
This section introduces some basics about graphs, graph edit distance
and edit paths. We futher introduce different familly of edit paths
and show that one of this familly is in direct correspondence with a
familly of mapping functions between the set of nodes of two graphs.
%  -----------------------------------------------------------------------------------
%  -----------------------------------------------------------------------------------
\subsection{Graph Basics}\label{sec:mainDef}
%  -----------------------------------------------------------------------------------
\begin{definition}[Unlabeled graph]
  An unlabeled  graph $G$ is defined by the couple
  $G=(V,E)$ where $V$ is the set of nodes and $E\subseteq
  V\times V$ is the set of edges. Each edge is an orded couple
  of nodes $(i,j)$ with $i,j \in V$. The direction of an edge is
  implicitly given by the order of its nodes, 
  i.e. the direction of $(i,j)$ is from $i$ to $j$.
\end{definition}
%  -----------------------------------------------------------------------------------
\begin{definition}[Nodes Adjacency]
Given a graph $G=(V,E)$ and two nodes $i,j \in V$. $i$ and $j$ are
  said to be adjacent iff $\exists (i,j) \in E$.
\end{definition}
%  -----------------------------------------------------------------------------------
\begin{definition}[Unlabeled simple graph]
  An unlabeled graph is said to be simple iff:
  \begin{enumerate}
  \item It exists at most one edge between any pair of nodes,
  \item The graph does not contain self loops
    ($(i,i)\not\in E, \forall i \in V$)
  \end{enumerate}
\end{definition}
%  -----------------------------------------------------------------------------------
\begin{definition}[Labeled simple graph]\label{def:labeledsimplegraph}
  Let $\mathcal{L}$ be a finite alphabet of node and edge labels. A
  labeled simple graph  is a tuple $G=(V,E,\mu,\nu)$ where 
  \begin{itemize}
  \item the couple $(V,E)$ defines an unlabeled simple graph,
  \item $\mu : V \rightarrow \mathcal{L}$ is a node labeling function,
  \item $\nu : E\rightarrow \mathcal{L}$ is an edge labeling function.
  \end{itemize}
  The unlabeled graph associated to a given labeled graph
  $G=(V,E,\mu,\nu)$ is defined by the couple $(V,E)$.
\end{definition}
%  -----------------------------------------------------------------------------------

In the following we will only consider simple graphs that we will
simply denote by unlabeled (resp. labeled) graphs.  The term graph
will denote indifferently a labeled or an unlabeled graph.
%  -----------------------------------------------------------------------------------
\begin{definition}[Undirected graph]\label{def:undirectedgraph}
~
\begin{itemize}
  \item A simple graph $G=(V,E)$ is said to be undirected iff\\
  $\forall (i,j) \in E ~ \exists (j,i) \in E$.
  \item A labeled simple graph $G=(V,E,\mu,\nu)$ is said to be undirected iff\\
  $\forall (i,j) \in E ~ \exists (j,i) \in E \land \nu(i,j) = \nu(j,i)$.
\end{itemize}
\end{definition}
%  -----------------------------------------------------------------------------------
\begin{definition}[Bipartite graph]\label{def:bipartitegraphs}
  A graph $G$, labeled or not, is said to be bipartite
  iff \\ $\exists V_1, V_2 \subseteq V : \forall (i,j) \in E, (i \in V_1 \land j \in V_2) 
  \lor (i \in V_2 \land j \in V_1)$.
\end{definition}
%  -----------------------------------------------------------------------------------
\begin{definition}[Subgraph]\label{def:subgraph}
~
  \begin{itemize}
  \item An unlabeled graph $G_1=(V_1,E_1)$ is said to be an unlabeled
    subgraph of $G_2=(V_2,E_2)$ if $V_1\subseteq V_2$ and
    $E_1\subseteq E_2\cap (V_1\times V_1)$. The unlabeled subgraph $G_1$ is 
    called an unlabeled proper subgraph if $V_1\neq V_2$ or $E_1\neq
    E_2$.
  \item If $G_1=(V_1,E_1,\mu_1,\nu_1)$ and $G_2=(V_2,E_2,\mu_2,\nu_2)$
    are both labeled graphs then $G_1$ is a (proper) subgraph of $G_2$
    if
    $(V_1,E_1)$ is an unlabeled (proper) subgraph of $(V_2,E_2)$
    and if the
    following additional constraint is fulfilled: ${\mu_2}_{|V_1}=\mu_1$ and ${\nu_2}_{|E_1}=\nu_1$, where $f_{|}$ denotes the restriction of function $f$ to a
    particular domain.
  \item A structural subgraph of a labeled graph $G$ is an unlabeled
    subgraph of the unlabeled graph associated to $G$.
  \end{itemize}
\end{definition}
\subsection{Edit operations, paths, and distance}\label{sec:mainDefEditDist}
%----------------------------------------------------------------------------
%  -----------------------------------------------------------------------------------
\begin{definition}[Elementary edit operations]\label{def:elEditOp}  
  An elementary edit operation is one of the following operation
  applied on a graph:
  \begin{enumerate}
  \item[$\bullet$] Node/Edge removal. Such removals are defined as the
    removal of the considered element from sets $V$ or $E$.
  \item[$\bullet$] Node/Edge insertion. On labeled graphs, a vertex/edge
    insertion also associates a label to the inserted element.
  \item[$\bullet$] Node/Edge substitution if the graph is a labeled one.
    Such an operation modifies the label of a node or an edge and
    thus transforms the node or edge labeling functions.
  \end{enumerate}
\end{definition}
%----------------------------------------------------------------------------
\begin{definition}[Cost of an elementary edit operation]\label{def:elEdOpCost}
  Each elementary operation $x$ is associated to a cost encoded by a
  specific function for each type of operation:
  \begin{itemize}
  \item Node ($c_{vd}(x)$) and edge removal ($c_{ed}(x)$)
  \item Node ($c_{vi}(x))$ and edge ($c_{ei}(x)$) insertion,
  \item Node ($c_{vs}(x)$) and edge ($c_{es}(x)$) substitution on labeled graphs.
  \end{itemize}
  By extension, we will consider that functions $c_{vd}$ and
  $c_{vi}$ (resp. $c_{ed}$ and $c_{ei}$) apply on the set of nodes
  (resp. the set of edges) of a graph. Hence, the cost $c_{vd}(v)$
  denotes the cost of removing node $v$.

  We assume that a substitution transforming one label into the same
  label has zero cost: 
  \[
  \forall l\in \mathcal{L}, \;c_{vs}(l\rightarrow l)=c_{es}(l\rightarrow l)=0
  \]
  where $l\rightarrow l'$ denotes the substitution of label $l$ into
  $l'$ on some edge or node.
\end{definition}
%----------------------------------------------------------------------------
\begin{definition}[Edit path]\label{def:editPath}
  An edit path of a graph $G$ is a sequence of elementary
    operations applied on $G$, where node removal and edge
    insertion have to satisfy the following constraints:
    \begin{enumerate}
    \item\label{item:vertexremov} A node removal implies a first removal of all its incident
      edges,
    \item\label{item:edgeinsertion} An edge insertion can be applied
      only between two existing or already inserted nodes. 
    \item\label{item:simplegraph} Edge insertions should not create
      more than one edge between two vertices nor create self-loops.
    \end{enumerate}
  An edit path that transforms a graph $G_1$ into a graph $G_2$ is an edit
    path of $G_1$ whose last graph is $G_2$. If $G_1$ and $G_2$ are unlabeled we assume that no node nor edge
  substitutions are performed.
\end{definition}
%----------------------------------------------------------------------------
\begin{definition}[Cost of an edit path]
  The cost of an edit path $P$, denoted $\gamma(P)$ is the sum of
  the costs of its elementary edit operations.
\end{definition}
%----------------------------------------------------------------------------
\begin{definition}[Edit distance]
  The edit distance from a graph $G_1$ to a graph $G_2$ is defined as
  the minimal cost of all edit paths from $G_1$ to $G_2$.
  \[
  d(G_1,G_2)=\min_{P\in \mathcal{P}(G_1,G_2)} \gamma(P)
  \]
  where $\mathcal{P}(G_1,G_2)$ is the set of all edit paths
  transforming $G_1$ into $G_2$. An edit path from $G_1$ to $G_2$ with a
  minimal cost is called an optimal path.
\end{definition}
%----------------------------------------------------------------------------
\begin{proposition}
  \label{prop:resultEditPath}
  Given any graph $G$, and any edit path $P$ of $G$, the
  transformation of $G$ by $P$ is still a simple graph. 
\end{proposition}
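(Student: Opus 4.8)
The plan is to argue by induction on the length $k$ of the edit path $P=(e_1,\dots,e_k)$. Denote by $G=G_0,G_1,\dots,G_k$ the sequence of graphs obtained by applying the elementary operations one at a time, so that $G_i$ results from applying $e_i$ to $G_{i-1}$ and $G_k$ is the transformation of $G$ by $P$. The base case $k=0$ is immediate, since $G_0=G$ is simple by the standing assumption (Definition~\ref{def:labeledsimplegraph}, together with the convention adopted just after it that every graph we consider is simple).

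For the inductive step I would assume $G_{i-1}$ is simple and check, for each of the six types of elementary operation (Definition~\ref{def:elEditOp}), that the two defining properties of a simple graph --- at most one edge between any given ordered pair of nodes, and no self-loops --- are preserved. Node removal and edge removal only delete elements from $V$ or $E$, and therefore cannot create a parallel edge or a self-loop; moreover, constraint~\ref{item:vertexremov} of Definition~\ref{def:editPath} guarantees that in the node-removal case all incident edges have been deleted beforehand, so that $E\subseteq V\times V$ still holds and $G_i$ is a well-formed graph. Node insertion adds an isolated (labeled) node and leaves $E$ untouched, so nothing is created. Edge insertion is precisely the case governed by constraints~\ref{item:edgeinsertion} and~\ref{item:simplegraph}: the new edge joins two nodes already present in the graph, it is not a self-loop, and it does not duplicate an existing edge, so $G_i$ remains simple. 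Finally, node and edge substitution modify only the labeling functions $\mu$ and $\nu$ and leave the underlying unlabeled graph $(V,E)$ unchanged; since simplicity is a property of $(V,E)$ alone, it is trivially preserved. Hence $G_i$ is simple, and the induction yields that $G_k$ is simple.

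This is essentially a bookkeeping argument, and I do not expect a genuine obstacle: the three structural constraints built into the definition of an edit path (Definition~\ref{def:editPath}) were designed precisely to make this closure property hold. The only points that merit being stated explicitly are that constraint~\ref{item:vertexremov} is what keeps $E\subseteq V\times V$ valid after a node removal, and that substitutions act trivially on the structural part of the graph, which is the formal reason the labeled and unlabeled cases can be treated simultaneously.
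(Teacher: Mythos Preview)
Your proof is correct. The approach differs slightly from the paper's: you argue by induction on the length of the edit path, checking that each of the six elementary operations preserves simplicity of the intermediate graph $G_{i-1}$; the paper instead reasons directly about the final graph $G'$, picking an arbitrary edge $(u,v)\in E'$ and tracing its history through $P$ to show that both endpoints survive into $V'$ (using constraints~\ref{item:vertexremov} and~\ref{item:edgeinsertion}), then invoking constraint~\ref{item:simplegraph} to rule out multi-edges and self-loops. Your inductive, operation-by-operation verification is more systematic and makes the role of each constraint fully explicit at the step where it is used; the paper's argument is a bit terser but requires a small global reasoning about the last insertion of an edge versus any subsequent removal of its endpoints. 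Both arguments are essentially bookkeeping and rely on exactly the same three constraints, so the difference is one of presentation rather than substance.
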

%----------------------------------------------------------------------------
\begin{proof}
  Let $G=(V,E,\mu,\nu)$ and $G'=(V',E',\mu',\nu')$ denote the
  initial graph and its transformation by $P$. Since the insertion
  of nodes and edges induces the definition of their labels,
  function $\mu'$ (resp. $\nu'$) defines a valid labeling function
  over $V'$ (resp. $E'$). 

  Let us consider $(u,v)\in E'$. Vertices $u$ and $v$ should either be
  present in $V$ or have been inserted before the insertion of edge
  $(u,v)$ (Definition~\ref{def:editPath},
  condition~\ref{item:edgeinsertion}). Moreover, none of these nodes
  can be removed after the last insertion of edge $(u,v)$ since such a
  removal would imply the removal of $(u,v)$
  (Definition~\ref{def:editPath}, condition~\ref{item:vertexremov}).
  Both $u$ and $v$ thus belong to $V'$.
  % \begin{itemize}
  % \item If $(u,v) \in E \cap E'$ then $u$ and $v$ belong to $V\cap
  %   V'$,  otherwise, the removal of either $u$ or $v$ would have
  %   implied the removal of $(u,v)$ (Definition~\ref{def:editPath},
  %   condition~\ref{item:vertexremov}).
  % \item If $(u,v)\in E'\setminus E$ then $(u,v)$ has been inserted. In this
  %   case, 
  % \end{itemize}
  Hence  $E'\subseteq V'\times V'$. Moreover, according to
  Definition~\ref{def:editPath}, condition~\ref{item:simplegraph} the
  edge $(u,v)$ can not be inserted if it already exists in $G$ and
  $u\neq v$.  It follows that $G'=(V',E',\mu',\nu')$ is a labeled
  simple graph according to Definition~\ref{def:labeledsimplegraph}.
\end{proof}
%----------------------------------------------------------------------------
\begin{definition}[Independent edit path]
  \label{def:mineditpath}
  An independent edit path between two labeled graphs $G_1$ and $G_2$ is
  an edit path such that:
  \begin{enumerate}
  \item\label{item:subremov} No node nor edge is both substituted and
    removed,
  \item\label{item:subins} No node nor edge is simultaneously substituted and inserted,
  \item\label{item:insremov} Any inserted element is never removed,
  \item\label{intem:subsonce} Any node or edge is substituted at most
    once,
  \end{enumerate}    
\end{definition}
%----------------------------------------------------------------------------
Note that an independent edit path is not minimal in the number of
operations. Indeed, Definition~\ref{def:mineditpath} still allows to
replace one substitution by one removal followed by one insertion
(but such an operation can be performed only once for each node or
edge thanks to condition~\ref{item:insremov}).  We however forbid useless
operations such as the substitution of one node followed by its
removal (condition~\ref{item:subremov}) or the insertion of a node
with a wrong label followed by its substitution
(condition~\ref{item:subins}).

In the following we will only
consider independent edit paths that we simply call edit paths.
%----------------------------------------------------------------------------
\begin{proposition}
  The elementary operations of an independent edit path between two graphs
  $G_1$ and $G_2$ may be ordered into a sequence of removals,
  followed by a sequence of substitutions and terminated by a
  sequence of insertions.
\end{proposition}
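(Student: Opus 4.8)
The plan is to observe first that, by definition, the cost of an edit path is the sum of the costs of its elementary operations, so reordering the operations of a fixed path never changes its cost. Hence it suffices, starting from an arbitrary (independent) edit path $P$ from $G_1$ to $G_2$, to reorder its operations into the announced shape --- a block of removals, then a block of substitutions, then a block of insertions --- while keeping a valid edit path from $G_1$ to $G_2$. I would also note at once that any reordering of an independent edit path is again independent, since conditions~\ref{item:subremov}--\ref{intem:subsonce} of Definition~\ref{def:mineditpath} constrain only \emph{which} elements are substituted, removed or inserted, not the order in which the operations occur.

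Next I would sort the sequence of operations of $P$ by a stable bubble sort with respect to the target order removals~$<$~substitutions~$<$~insertions. Concretely, call an adjacent pair of operations \emph{badly ordered} if it has the form (substitution, removal), (insertion, removal) or (insertion, substitution), and keep swapping a badly ordered adjacent pair as long as one exists. Each such swap strictly decreases the number of inversions relative to the target order, so the procedure terminates; moreover a swap never exchanges two operations of the same kind, so the relative order inside each of the three resulting blocks is the one inherited from $P$, and at termination no badly ordered adjacent pair remains, i.e.\ the output has exactly the required shape.

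The heart of the argument, and the step I expect to be the main obstacle, is to show that swapping a single badly ordered adjacent pair $(a,b)$ preserves both the validity of the path and the graph it produces. I would first point out that in each of the three kinds of badly ordered pair the two operations act on \emph{distinct} elements: if they acted on the same element, it would be simultaneously substituted and removed, inserted and removed, or inserted and substituted, contradicting condition~\ref{item:subremov}, \ref{item:insremov} or \ref{item:subins} of Definition~\ref{def:mineditpath} respectively. Since, in addition, a substitution only relabels its element, $a$ and $b$ then transform the current graph into the same result in either order; so the produced graph is unaffected provided both orders are legal. Legality of the swapped order is then checked by a short case analysis on whether $a$ and $b$ are nodes or edges. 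The only points that are not immediate are: (i) when $b$ removes a node, no edge touched by $a$ may be incident to that node --- impossible, because removing a node first requires removing all of its incident edges (Definition~\ref{def:editPath}, condition~\ref{item:vertexremov}) whereas an inserted or substituted edge is never removed (conditions~\ref{item:insremov} and~\ref{item:subremov}), so no such edge can be incident to a removed node; (ii) when $a$ inserts an edge, the simple-graph condition (Definition~\ref{def:editPath}, condition~\ref{item:simplegraph}) must still hold after the swap --- but an edge parallel to the one $a$ inserts is already absent when $a$ runs in $P$, and removing or relabelling another edge beforehand keeps it absent, and $a$ still inserts no self-loop; (iii) a node removal stays legal when moved one step earlier, because the removals of its incident edges, which precede it in $P$, are never moved past it (swaps never exchange two removals). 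All remaining cases are trivial, since edge removals, substitutions and node insertions carry no structural precondition.

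Finally I would conclude: the path produced at the end of the bubble sort is an independent edit path from $G_1$ to $G_2$, of the same cost $\gamma(P)$, whose operations form a block of removals, followed by a block of substitutions, followed by a block of insertions --- which is exactly the statement. The only genuine work is the finite case analysis of the previous paragraph; the rest is bookkeeping about inversions and relative orders.
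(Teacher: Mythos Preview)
Your proof is correct, but it takes a genuinely different route from the paper. The paper goes straight to the target ordering: it forms the subsequences $R$, $S$, $I$ of removals, substitutions and insertions (each in the order inherited from $P$), argues once that the concatenation $(R,S,I)$ is a valid edit path by checking conditions~\ref{item:vertexremov} and~\ref{item:edgeinsertion} of Definition~\ref{def:editPath} globally, and observes that independence is preserved. You instead reach the same ordering incrementally, by a stable bubble sort that swaps one badly ordered adjacent pair at a time, using the independence conditions to guarantee each swap involves distinct elements and a short case analysis to show each swap preserves validity and the resulting graph. The paper's argument is shorter and avoids the bookkeeping of inversions and termination; your argument is more modular, since each swap is a local move whose legality is easy to verify in isolation, and it makes explicit the invariant (validity plus independence) carried through the process. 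One minor remark: your point~(iii), phrased in terms of the original positions in $P$, is a slightly roundabout justification --- it is cleaner (and closer to what you actually need) to argue directly from the inductive hypothesis that $P'$ is valid: since $b$ removes node $u$ validly at position $k{+}1$ in $P'$, no edge incident to $u$ exists there, and since the preceding operation $a$ is a substitution or insertion on a distinct element, the same holds at position $k$.
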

%----------------------------------------------------------------------------
\begin{proof}
  Let $R, S$ and $I$ denote the sub-sequences of
  Removals, Substitutions and Insertions of an edit path $P$,
  respectively. Since no removal may be performed on a substituted
  element (condition~\ref{item:subremov} of
  Definition~\ref{def:mineditpath}) and no removal may be performed on
  an inserted element (condition~\ref{item:insremov}), removals only
  apply on elements which are neither substituted nor inserted. Such
  removals operations may thus be grouped at the beginning of the edit
  path. Now, since an element cannot be substituted after its
  insertion, substitutions can only apply on the remaining elements after
  the removal step and can be grouped after the removal operations.  The
  remaining operations only contain insertions.

  Let us consider the sequence of elementary operations $(R,S,I)$
  the order within sequences $R$, $S$ and $I$ being deduced from the
  one of $P$. Such a sequence may be defined since operations in $R$
  apply on elements not in $S$ and $I$ while operations in $S$ do
  not apply on the same elements than operations in $I$. Such sets
  are independent, leading to the definition of independent edit
  paths. However, we still have to show that the sequence $(R,S,I)$
  defines a valid edit path.
  \begin{enumerate}
  \item Since $R$ contains all the removal operations contained in
    $P$, if $P$ satisfies condition~\ref{item:vertexremov} of
    Definition~\ref{def:editPath}, so does the sequence $R$.
  \item Let us suppose that an edge insertion is valid in sequence
    $P$ while it violates Definition~\ref{def:editPath},
    condition~\ref{item:edgeinsertion} in sequence $(R,S,I)$. Let us
    denote by $(u,v)$ such an edge. Edge $(u,v)$ violates
    condition~\ref{item:edgeinsertion} in sequence $(R,S,I)$ only if
    either the removal of $u$ or $v$ belongs to $R$.  In such a case
    the insertion of $(u,v)$ in $P$ should be made before the
    removal of $u$ or $v$. But such a removal would imply the
    removal of all the incident edges of $u$ or $v$
    (Definition~\ref{def:editPath},
    condition~\ref{item:vertexremov}) including the newly inserted
    edge $(u,v)$.  Such an operation would violate the independence
    of $P$ (Definition~\ref{def:mineditpath},
    condition~\ref{item:insremov}).
  \end{enumerate}
  The sequence $(R,S,I)$ is thus a valid edit path which transforms
  a graph $G_1$ into $G_2$ if $P$ do so.
  Furthermore, it is readily seen that all the conditions of Definition~\ref{def:mineditpath}
  are satisfied by the sequence $(R,S,I)$ as soon as they are satisfied by $P$.
  The sequence $(R,S,I)$ is thus an independent edit path.
\end{proof}
%----------------------------------------------------------------------------
\begin{proposition}\label{prop:editPathSubGraphs}
  Let $P$ be an edit path between two graphs $G_1$ and $G_2$. Let us
  further denote by $R$, $S$ and $I$ the sequence of node and edge
  Removals, Substitutions and Insertions performed by $P$, the order
  in each sequence being deduced from the one of $P$. Then:
  \begin{itemize}
  \item the graph $\hat{G}_1$ obtained from $G_1$ by applying
    removal operations $R$ is a subgraph of $G_1$,
  \item the graph $\hat{G}_2$ obtained from $G_1$ by applying
    the sequence of operations $(R,S)$ is a subgraph of $G_2$,
  \item Both $\hat{G}_1$ and $\hat{G}_2$ correspond to a same common
    structural subgraph of $G_1$ and $G_2$.
  \end{itemize}
\end{proposition}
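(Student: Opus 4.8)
The plan is to exploit the decomposition of an independent edit path into the three ordered blocks $(R,S,I)$ established in the previous proposition, and to track exactly which nodes and edges survive each block. The key observation is that $R$ performs only removals, $S$ only substitutions (which change labels but not the node/edge sets), and $I$ only insertions; so the underlying node and edge sets evolve monotonically: they shrink through $R$, stay fixed through $S$, and grow through $I$.

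\medskip

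\noindent\textbf{Step 1: $\hat G_1$ is a subgraph of $G_1$.}
Since $R$ consists solely of node and edge removals applied to $G_1$, the node set of $\hat G_1$ is a subset of $V_1$ and its edge set is a subset of $E_1$. The labeling functions of $\hat G_1$ are the restrictions of $\mu_1,\nu_1$, because no substitution or insertion occurs in $R$. By Definition~\ref{def:subgraph}, $\hat G_1$ is a (possibly non-proper) subgraph of $G_1$; it is simple by Proposition~\ref{prop:resultEditPath}.

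\medskip

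\noindent\textbf{Step 2: $\hat G_2$ is a subgraph of $G_2$.}
Apply $(R,S)$ to $G_1$. The block $S$ leaves the node and edge sets unchanged, so $\hat G_2$ has the same underlying unlabeled graph as $\hat G_1$; only labels are modified. Now consider the full path $P=(R,S,I)$: starting from $\hat G_2$, the block $I$ only inserts nodes and edges and performs no substitution or removal (conditions~\ref{item:insremov} and \ref{item:subins} of Definition~\ref{def:mineditpath} guarantee that inserted elements are never later removed or relabeled). Hence every node and edge of $\hat G_2$ is present in $G_2$ with exactly the label it has in $\hat G_2$ — here I use condition~\ref{intem:subsonce} (each element substituted at most once) and condition~\ref{item:subremov} (no element both substituted and removed) to ensure that the label an element carries after $S$ is its final label in $G_2$. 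Therefore the node set of $\hat G_2$ is contained in $V_2$, its edge set in $E_2$, and the labelings agree by restriction, so $\hat G_2$ is a subgraph of $G_2$ by Definition~\ref{def:subgraph}.

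\medskip

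\noindent\textbf{Step 3: common structural subgraph.}
By Step 2, $\hat G_1$ and $\hat G_2$ share the same underlying unlabeled graph $(\hat V,\hat E)$, where $\hat V\subseteq V_1\cap V_2$ (identifying nodes through the path) and $\hat E\subseteq E_1\cap E_2$. Since $(\hat V,\hat E)$ is an unlabeled subgraph of the unlabeled graph of $G_1$ and also of that of $G_2$, it is by definition a structural subgraph common to $G_1$ and $G_2$, and both $\hat G_1$ and $\hat G_2$ are labelings of it (inherited from $G_1$ and $G_2$ respectively).

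\medskip

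\noindent\textbf{Main obstacle.}
The routine parts are Steps 1 and 3; the delicate point is Step 2, specifically arguing that the label of a surviving element after the $S$ block really is its final $G_2$-label and is not altered again. This is exactly where the independence conditions of Definition~\ref{def:mineditpath} are essential: without "substituted at most once" and "inserted elements never removed/relabeled," a surviving element could be touched again in a way that breaks the restriction identities ${\mu_2}_{|\hat V}=\hat\mu_2$ and ${\nu_2}_{|\hat E}=\hat\nu_2$. One must therefore carefully invoke conditions~\ref{item:subremov}--\ref{intem:subsonce} to pin down that, for each element of $\hat G_2$, the sequence of operations touching it after $R$ is either empty or a single substitution, and that no insertion in $I$ re-creates an edge already in $\hat E$ (which is ruled out by condition~\ref{item:simplegraph} of Definition~\ref{def:editPath} together with Proposition~\ref{prop:resultEditPath}).
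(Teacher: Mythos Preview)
Your proof is correct and follows essentially the same approach as the paper's: track the node/edge sets through the three blocks, use that removals only shrink, substitutions only relabel, and insertions only grow, then invoke Definition~\ref{def:subgraph} and Proposition~\ref{prop:resultEditPath}. The paper's argument for Step~2 is terser---it simply notes that insertions do not modify labels of existing elements, which follows directly from the definition of an insertion---whereas you additionally invoke the independence conditions of Definition~\ref{def:mineditpath}; this extra care is not wrong but is not strictly needed once Proposition~2 has already guaranteed that the block $I$ contains only insertions.
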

%----------------------------------------------------------------------------
\noindent{\bf Proof:}
  \begin{enumerate}
  \item Since the sequence $R$ is an edit path, $\hat{G}_1$ is a
    graph by Proposition~\ref{prop:resultEditPath}. Moreover, since
    $R$ is only composed of removal operations, we trivially have
    $\hat{V}_1\subset V_1$ and $\hat{E}_1 \subset E_1$. The fact that
    $\hat{E}_1\subset E_1\cap \hat{V}_1\times \hat{V}_1 $ is induced
    by the fact that $\hat{G}_1$ is a graph. Moreover, if $G_1$ is a
    labeled graph, since removal operations do not modify labels,
    labels on $\hat{G}_1$ are only the restriction of the ones on
    $G_1$ to $\hat{V}_1$ and $\hat{E}_1$.
  \item The graph $\hat{G}_2$ is deduced from $G_1$ by the edit path
    $(R,S)$, it is thus a graph. Moreover, $G_2$ is deduced from
    $\hat{G}_2$ by the sequence of insertions $I$. We thus trivially
    have: $\hat{V}_2\subset V_2$ and $\hat{E}_2 \subset E_2\cap
    \hat{V}_2\times\hat{V}_2 $.  Moreover, since insertion operations
    do not modify the label of existing elements, the restriction of
    the label functions of $G_2$ to $\hat{V}_2$ and $\hat{E}_2$
    corresponds to the label functions of $\hat{G}_2$.
  \item Sub graph $\hat{G}_2$ is deduced from $\hat{G}_1$ by the
    sequence of substitution operations $S$. Since substitution
    operations only modify label functions, the structure of both
    graphs is the same and there exists a structural isomorphism
    between both graphs.~~{$\Box$}
  \end{enumerate}
%----------------------------------------------------------------------------
One should note that it may exist several structural isomorphisms
between $\hat{G}_1$ and $\hat{G}_2$. The set of substitutions $S$
fixes one of them, say $f$ such that the image of any element of
$\hat{G}_1$ by $f$ have the same label than the one defined by the
substitution. More precisely, let us suppose that we enlarge the set
of substitution $S$ by $0$ cost substitutions so that all the
nodes and edges of $\hat{G}_1=(\hat{V}_1,\hat{E}_1,\mu_1,\nu_1)$
are substituted. In this case, we have:
\[
\left\{  \begin{array}{llll}
    \forall v\in \hat{V}_1, &\mu_2(f(v))&=&l_v\\
    \forall e\in \hat{E}_1, &\nu_2(f(e))&=&l_e\\
  \end{array}\right.
\]
where $l_v$ and $l_e$ correspond to the labels defined by the
substitutions of $v$ and $e$ and $\mu_2$ and $\nu_2$ define
respectively the node and edge labeling functions of $G_2$.
%----------------------------------------------------------------------------
\begin{corollary}\label{coro:costEditPath}
  Using the same notations than in
  Proposition~\ref{prop:editPathSubGraphs}, the cost $\gamma(P)$ of an
  edit path is defined by:
  \renewcommand{\arraystretch}{2}
  \begin{equation*}
    \begin{array}{lcl}
      \gamma(P) & = & \displaystyle   \sum_{v\in V_1\setminus\hat{V}_1} c_{vd}(v)+
      \sum_{e\in E_1\setminus\hat{E}_1} c_{ed}(e)+
      \sum_{v\in \hat{V}_1} c_{vs}(v)+
      \sum_{e\in \hat{E}_1} c_{es}(e) \\
      & & \displaystyle + \sum_{v\in V_2\setminus\hat{V}_2} c_{vi}(v) + \sum_{e\in E_2\setminus\hat{E}_2} c_{ei}(e)
    \end{array}
  \end{equation*} 
\end{corollary}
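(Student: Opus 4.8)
The plan is to reduce everything to a regrouping of the elementary costs. Since $\gamma(P)$ is by definition the sum of the costs of the elementary operations occurring in $P$, and since the reordering proposition above shows that $P$ can be rewritten as the sequence $(R,S,I)$ containing exactly the same operations, we get $\gamma(P)=\gamma(R)+\gamma(S)+\gamma(I)$, where $\gamma(R)$, $\gamma(S)$, $\gamma(I)$ denote the sums of costs over the three sub-sequences. It then suffices to match each partial sum with the corresponding block of the claimed formula, the heart of the argument being that the independence conditions of Definition~\ref{def:mineditpath} force every node and every edge of $G_1$ or $G_2$ to be touched in exactly one of the three ways.

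First I would handle the removals. By the definition of $\hat{G}_1$ in Proposition~\ref{prop:editPathSubGraphs}, $\hat{G}_1$ is obtained from $G_1$ by applying precisely the operations of $R$, so the removed nodes are exactly $V_1\setminus\hat{V}_1$ and the removed edges exactly $E_1\setminus\hat{E}_1$; none of these is also inserted or substituted (conditions~\ref{item:subremov} and~\ref{item:insremov}) and none can be removed twice, hence $\gamma(R)=\sum_{v\in V_1\setminus\hat{V}_1}c_{vd}(v)+\sum_{e\in E_1\setminus\hat{E}_1}c_{ed}(e)$. Symmetrically, $G_2$ is obtained from $\hat{G}_2$ by the insertions $I$, so the inserted nodes are $V_2\setminus\hat{V}_2$, the inserted edges are $E_2\setminus\hat{E}_2$, each counted once, giving $\gamma(I)=\sum_{v\in V_2\setminus\hat{V}_2}c_{vi}(v)+\sum_{e\in E_2\setminus\hat{E}_2}c_{ei}(e)$.

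The substitution term is the only place needing a small extra step. The operations of $S$ act on $\hat{G}_1$, and by condition~\ref{intem:subsonce} of Definition~\ref{def:mineditpath} each node and each edge of $\hat{G}_1$ is substituted at most once; but the claimed formula sums $c_{vs}$ and $c_{es}$ over all of $\hat{V}_1$ and $\hat{E}_1$. I would reconcile this using the padding discussed right after Proposition~\ref{prop:editPathSubGraphs}: enlarge $S$ by zero-cost identity substitutions on the elements of $\hat{G}_1$ not already substituted, which is legitimate since $c_{vs}(l\to l)=c_{es}(l\to l)=0$ and which changes neither $\gamma(S)$ nor $\hat{G}_2$. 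This makes the structural isomorphism $f:\hat{G}_1\to\hat{G}_2$ explicit and lets $c_{vs}(v)$ abbreviate $c_{vs}(\mu_1(v)\to\mu_2(f(v)))$ and likewise $c_{es}(e)$ abbreviate $c_{es}(\nu_1(e)\to\nu_2(f(e)))$, so that $\gamma(S)=\sum_{v\in\hat{V}_1}c_{vs}(v)+\sum_{e\in\hat{E}_1}c_{es}(e)$. Adding the three partial sums yields the stated expression for $\gamma(P)$.

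The only real obstacle I anticipate is the bookkeeping that guarantees no element is counted in more than one block — exactly the role played by conditions~\ref{item:subremov}--\ref{intem:subsonce} of Definition~\ref{def:mineditpath} — together with this cosmetic reconciliation between "the substituted elements of $\hat{G}_1$" and "all elements of $\hat{G}_1$"; no deeper difficulty is expected.
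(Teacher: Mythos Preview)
Your proposal is correct and follows essentially the same approach as the paper's own proof: decompose $\gamma(P)$ along the reordered sequence $(R,S,I)$, identify the removed elements with $V_1\setminus\hat{V}_1$ and $E_1\setminus\hat{E}_1$, the inserted elements with $V_2\setminus\hat{V}_2$ and $E_2\setminus\hat{E}_2$, and handle the substitution block by padding $S$ with zero-cost identity substitutions so that the sum runs over all of $\hat{V}_1$ and $\hat{E}_1$. Your write-up is in fact somewhat more explicit than the paper's in invoking the independence conditions to justify that each element is counted exactly once.
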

%----------------------------------------------------------------------------
\noindent{\bf Proof:}
  The edit path $P$ and its rewriting in $(R,S,I)$ have the same set of
  operations and thus a same cost.
  \begin{description}
  \item[From $G_1$ to $\hat{G}_1$:] Operations in $R$ remove
    nodes in $V_1\setminus\hat{V}_1$ and edges in $E_1\setminus\hat{E}_1$.
  \item[From $\hat{G}_1$ to $\hat{G}_2$:] Substitutions of $S$ apply
    between the two graphs $\hat{G}_1$ and $\hat{G}_2$. Let us
    consider the set of substitutions $S'$ which corresponds to the
    completion of $S$ by $0$ cost substitutions so that all nodes
    and edges of $\hat{G}_1$ are substituted. Both $S$ and $S'$ have
    a same cost. The cost of $S'$ is defined as the sum of costs of
    the substituted nodes and edges of $\hat{G}_1$.
  \item[From $\hat{G}_2$ to $G_2$:] Operations in $I$ insert
    nodes of $V_2\setminus\hat{V}_2$ and edges of $E_2\setminus\hat{E}_2$ in
    order to obtain $G_2$ from $\hat{G}_2$.~~{$\Box$}
  \end{description}
%----------------------------------------------------------------------------
\begin{remarque}\label{rem:doubleCounting}
  Using the same notations than
  Proposition~\ref{prop:editPathSubGraphs} if both $G_1$ and $G_2$ are
  undirected we have:
  \[
  \gamma(P)=\gamma_v(P)+\gamma_e(P)
  \]
  with
  \begin{equation*}
    \begin{array}{lcl}
      \gamma_v(P) & = & \displaystyle   \sum_{i\in V_1\setminus\hat{V}_1} c_{vd}(i)+
      \sum_{i\in \hat{V}_1} c_{vs}(i)+\sum_{k\in V_2\setminus\hat{V}_2} c_{vi}(k)\\
      \gamma_e(P)&=&\displaystyle\frac{1}{2}\left(\sum_{(i,j)\in \hat{E}_1} c_{es}((i,j)) +
                     \sum_{(i,j)\in E_1\setminus\hat{E}_1} c_{ed}((i,j))+\sum_{(k,l)\in E_2\setminus\hat{E}_2} c_{ei}((k,l))\right)\\
    \end{array}
  \end{equation*} 
\end{remarque}
%----------------------------------------------------------------------------

  Indeed, if both graphs $G_1$ and $G_2$ are undirected both $(i,j)$
  and $(j,i)$ belong to $E_1$ while encoding a single edge $e$.  The
  removal or the substitution of the edge $e$ is thus counted twice in
  $\gamma_e(P)$. In the same way $(k,l)$ and $(l,k)$ represent the
  same edge $e$ of $E_2\setminus\hat{E_2}$ which is thus inserted twice in
  $\gamma_e(P)$. The factor $\frac{1}{2}$ of $\gamma_e(P)$ removes
   this double couting of edge operations.
%----------------------------------------------------------------------------
\begin{corollary}\label{coro:costEditPathConst}
  If all costs do not depend on the node/edge involved in the operations, \textit{i.e.} edit cost functions $c_{vd}$, $c_{ed}$, $c_{vs}$, $c_{es}$, $c_{vi}$, and $c_{ei}$ are constant, the cost of an edit path $P$ is equal to:
  \renewcommand{\arraystretch}{1.5}
  \begin{equation*}
    \begin{array}{lcl}
      \gamma(P) & = & (|V_1|-|\hat{V}_1|)c_{vd}+(|E_1|-|\hat{E}_1|)c_{ed}+V_fc_{vs}+E_fc_{es}\\
      &   & +\,(|V_2|-|\hat{V}_2|)c_{vi}+(|E_2|-|\hat{E}_2|)c_{ei}    
    \end{array}
  \end{equation*}
  where $V_f$ (resp. $E_f$) denotes the number of
  nodes (resp. edges) substituted with a non-zero cost.
  
  Moreover, minimizing the cost of such an edit path is
  equivalent to maximizing the following formula:
  \[
  M(P)\overset{\text{def.}}{=}|\hat{V}_1|(c_{vd}+c_{vi})+
  |\hat{E}_1|(c_{ed}+c_{ei})-V_fc_{vs}-E_fc_{es}
  \]
\end{corollary}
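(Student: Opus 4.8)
The plan is to derive the first formula directly from Corollary~\ref{coro:costEditPath} by replacing each cost function by its constant value, and then to rewrite $\gamma(P)$ as a fixed constant minus $M(P)$.

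First I would specialize the expression of Corollary~\ref{coro:costEditPath}. Since $\hat{V}_1\subseteq V_1$ and $\hat{E}_1\subseteq E_1$, the removal sums become $|V_1\setminus\hat{V}_1|\,c_{vd}=(|V_1|-|\hat{V}_1|)c_{vd}$ and $(|E_1|-|\hat{E}_1|)c_{ed}$; similarly the insertion sums become $(|V_2|-|\hat{V}_2|)c_{vi}$ and $(|E_2|-|\hat{E}_2|)c_{ei}$, using $\hat{V}_2\subseteq V_2$ and $\hat{E}_2\subseteq E_2$ from Proposition~\ref{prop:editPathSubGraphs}. The only point requiring care is the substitution sums: a ``constant'' substitution cost must be understood as constant over \emph{label-changing} substitutions, because Definition~\ref{def:elEdOpCost} forces $c_{vs}(l\to l)=c_{es}(l\to l)=0$. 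Hence in $\sum_{v\in\hat{V}_1}c_{vs}(v)$ every node substituted to its own label contributes $0$ and every other contributes $c_{vs}$, so the sum equals $V_f c_{vs}$ by definition of $V_f$; likewise the edge substitution sum equals $E_f c_{es}$. Adding these six terms yields the claimed expression for $\gamma(P)$.

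For the second claim, I would regroup this expression around $G_1$- and $G_2$-dependent constants. Writing $|V_1|c_{vd}$, $|E_1|c_{ed}$, $|V_2|c_{vi}$, $|E_2|c_{ei}$ separately gives
\[
\gamma(P)=\bigl(|V_1|c_{vd}+|E_1|c_{ed}+|V_2|c_{vi}+|E_2|c_{ei}\bigr)
-|\hat{V}_1|c_{vd}-|\hat{E}_1|c_{ed}-|\hat{V}_2|c_{vi}-|\hat{E}_2|c_{ei}
+V_f c_{vs}+E_f c_{es}.
\]
The key step is now to invoke Proposition~\ref{prop:editPathSubGraphs}: $\hat{G}_1$ and $\hat{G}_2$ are isomorphic structural subgraphs, hence $|\hat{V}_1|=|\hat{V}_2|$ and $|\hat{E}_1|=|\hat{E}_2|$. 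Substituting these equalities collapses the middle four terms into $-|\hat{V}_1|(c_{vd}+c_{vi})-|\hat{E}_1|(c_{ed}+c_{ei})$, so that
\[
\gamma(P)=\underbrace{|V_1|c_{vd}+|E_1|c_{ed}+|V_2|c_{vi}+|E_2|c_{ei}}_{\text{independent of }P}
-\Bigl(|\hat{V}_1|(c_{vd}+c_{vi})+|\hat{E}_1|(c_{ed}+c_{ei})-V_f c_{vs}-E_f c_{es}\Bigr).
\]
The bracketed quantity is exactly $M(P)$ and the underbraced constant $C$ depends only on $G_1$ and $G_2$, so $\gamma(P)=C-M(P)$; minimizing $\gamma(P)$ over all edit paths is therefore the same as maximizing $M(P)$.

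The only genuinely non-routine point is recognizing that the cardinalities of $\hat{V}_1,\hat{V}_2$ (and of $\hat{E}_1,\hat{E}_2$) coincide, which is precisely what the structural isomorphism in Proposition~\ref{prop:editPathSubGraphs} delivers; everything else is bookkeeping on finite sums of constants. A secondary subtlety worth stating explicitly is the interpretation of ``constant'' substitution cost under the zero-cost-on-identity convention, which is why $V_f$ and $E_f$ count only the non-trivially substituted elements rather than all of $\hat{V}_1,\hat{E}_1$.
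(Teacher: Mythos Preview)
Your proof is correct and follows essentially the same route as the paper: specialize Corollary~\ref{coro:costEditPath} to constant costs, separate the $G_1,G_2$-dependent constants, and use the structural isomorphism of Proposition~\ref{prop:editPathSubGraphs} to identify $|\hat{V}_1|=|\hat{V}_2|$ and $|\hat{E}_1|=|\hat{E}_2|$, yielding $\gamma(P)=C-M(P)$. Your explicit remark on why only $V_f,E_f$ (rather than $|\hat{V}_1|,|\hat{E}_1|$) appear in the substitution terms is a welcome clarification that the paper leaves implicit.
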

%----------------------------------------------------------------------------
\begin{proof}
  We deduce immediately from Corollary~\ref{coro:costEditPath} the
  following formula:
  \renewcommand{\arraystretch}{1.5}
  \begin{equation*}
    \begin{array}{lcl}
      \gamma(P) & = & (|V_1|-|\hat{V}_1|)c_{vd}+(|E_1|-|\hat{E}_1|)c_{ed}+V_fc_{vs}+E_fc_{es}\\
      &   & +\,(|V_2|-|\hat{V}_2|)c_{vi}+(|E_2|-|\hat{E}_2|)c_{ei}
    \end{array}
  \end{equation*}
  We obtain by grouping constant terms:
  \renewcommand{\arraystretch}{1.5}
  \begin{equation*}
    \begin{array}{lcl}
      \gamma(P) & = & |V_1|c_{vd}+|E_1|c_{ed}+|V_2|c_{vi}+|E_2|c_{ei}\\
      & & - \left[
        |\hat{V}_1|c_{vd}+|\hat{E}_1|c_{ed}+|\hat{V}_2|c_{vi}+|\hat{E}_2|c_{ei}-V_fc_{vs}-E_fc_{es}\right]
    \end{array}
  \end{equation*}
  Since there is a structural isomorphism between $\hat{G}_1$ and
  $\hat{G}_2$, we have $\hat{V}_1=\hat{V}_2$ and
  $\hat{E}_1=\hat{E}_2$. So:
  \begin{equation*}
    \begin{array}{lcl}
      \gamma(P) & = & |V_1|c_{vd}+|E_1|c_{ed}+|V_2|c_{vi}+|E_2|c_{ei} \\
      & & - \left[
        |\hat{V}_1|(c_{vd}+c_{vi})+
        |\hat{E}_1|(c_{ed}+c_{ei})-V_fc_{vs}-E_fc_{es}\right]
    \end{array}
  \end{equation*}
  The first part of the above equation being constant, the
  minimization of $\gamma(P)$ is equivalent to the maximization of
  the last part of the equation.
\end{proof}
%----------------------------------------------------------------------------
\begin{definition}[Restricted edit path]
  A restricted edit path is an independent edit path in which any node or
  any edge cannot be removed and then inserted.
\end{definition}
%----------------------------------------------------------------------------
The term restricted should be understood as minimal number
of operations. The cost of a restricted edit path may not be minimal
(among all edit paths) if the cost of a removal operation followed by
an insertion is lower than the cost of the associated
substitution. However, such a drawback may be easily solved by setting
a new substitution cost equal to the minimum between the old substitution
cost and the sum of the costs of a removal and an insertion. In this
case all non-zero cost substitutions, for nodes and edges, may be equivalently
replaced by a removal followed by an insertion.
%----------------------------------------------------------------------------
\begin{proposition}\label{prop:edit_path_mapping}
  If $G_1$ and $G_2$ are simple graphs, there is a one-to-one mapping
  between the set of restricted edit paths between $G_1$ and $G_2$ and
  the set of injective functions from a subset of $V_1$ to $V_2$. We
  denote by $\varphi_0$, the special function from the empty set onto
  the empty set.
\end{proposition}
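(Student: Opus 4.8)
The plan is to construct explicitly the two directions of the bijection and then show they are mutually inverse. Recall that by Proposition~\ref{prop:editPathSubGraphs} any edit path $P$ between $G_1$ and $G_2$ factors, after reordering, as a sequence $(R,S,I)$ of removals, substitutions and insertions, with associated common structural subgraph $\hat G_1 \cong \hat G_2$ realized by a structural isomorphism $f$, and in the case of a \emph{restricted} edit path no element is removed and then reinserted and no element is substituted more than once. Given such a $P$, I would define a map $\varphi_P$ from $\hat V_1 \subseteq V_1$ to $V_2$ by setting $\varphi_P(v) = f(v)$, where $f$ is the structural isomorphism between $\hat G_1$ and $\hat G_2 \subseteq G_2$ fixed by the substitution set $S$ (completed by zero-cost substitutions as in the discussion after Proposition~\ref{prop:editPathSubGraphs}). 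Since $f$ is a structural isomorphism onto $\hat G_2 \subseteq G_2$, it is injective, so $\varphi_P$ is an injective partial function $V_1 \rightharpoonup V_2$. In the degenerate case where $\hat V_1 = \emptyset$ this is the function $\varphi_0$ from the empty set to the empty set.

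Conversely, given an injective function $\varphi$ from a subset $U \subseteq V_1$ to $V_2$, I would build a restricted edit path $P_\varphi$ canonically: remove every node of $V_1 \setminus U$ (preceded, as required by Definition~\ref{def:editPath} condition~\ref{item:vertexremov}, by removal of all their incident edges), then remove every edge $(i,j) \in E_1$ with $i,j \in U$ but $(\varphi(i),\varphi(j)) \notin E_2$; next, substitute every remaining node $i \in U$ by the label $\mu_2(\varphi(i))$ (a zero-cost operation when the labels already agree) and every remaining edge $(i,j)$ by the label $\nu_2(\varphi(i),\varphi(j))$; finally, insert every node of $V_2 \setminus \varphi(U)$ and then every edge of $E_2$ not of the form $(\varphi(i),\varphi(j))$ with $i,j \in U$. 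I must check this is a \emph{valid} edit path: the removal conditions hold by construction, edge insertions are between already-present nodes (the images $\varphi(U)$ survive, the rest were just inserted), and because $G_1,G_2$ are simple and $\varphi$ is injective no insertion creates a parallel edge or a self-loop — this is exactly where simplicity of both graphs and injectivity of $\varphi$ are used. It is restricted and independent by construction since nothing removed is reinserted and nothing is substituted twice.

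The final step is to verify $\varphi_{P_\varphi} = \varphi$ and $P_{\varphi_P} = P$. The first identity is immediate: the subgraph surviving removals in $P_\varphi$ has node set exactly $U$, the substitutions in $P_\varphi$ fix the structural isomorphism $i \mapsto \varphi(i)$, so the induced map is $\varphi$. For the second identity, given a restricted edit path $P$ with data $(\hat V_1, \hat E_1, f)$, the canonical path associated to $\varphi_P = f|_{\hat V_1}$ removes exactly $V_1 \setminus \hat V_1$ and the $G_1$-edges with no $f$-image in $E_2$ — but those are precisely $E_1 \setminus \hat E_1$ since $\hat E_1 = E_1 \cap (\hat V_1 \times \hat V_1)$ consists exactly of the edges whose endpoints survive and whose image lies in $\hat E_2 \subseteq E_2$; it substitutes exactly the elements of $\hat G_1$ with the labels dictated by $f$, which is what $S$ does; and it inserts exactly $V_2 \setminus \hat V_2$ and $E_2 \setminus \hat E_2$, which is what $I$ does. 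Here one uses that restricted edit paths are determined, up to the canonical reordering of Proposition~\ref{prop:editPathSubGraphs}, by the triple $(\hat V_1, \hat E_1, f)$ — there is no freedom left because no element is removed-then-reinserted and none is substituted twice.

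The main obstacle I expect is the second identity $P_{\varphi_P} = P$, specifically pinning down the right notion of ``equality'' of edit paths. Two edit paths that differ only in the internal order of independent removals (or of independent insertions) should be identified; the cleanest route is to work modulo the canonical $(R,S,I)$ normal form of Proposition~\ref{prop:editPathSubGraphs} and note that within $R$ the order is forced only up to ``incident edges before their node'', within $I$ up to ``nodes before incident edges'', and $S$ is a set of simultaneous relabelings — so the normal form is genuinely determined by $(\hat V_1,\hat E_1,f)$ and hence by $\varphi_P$. A secondary subtlety is the bookkeeping around zero-cost substitutions: the map $\varphi_P$ should be read off from $S$ \emph{after} completing it with zero-cost substitutions so that $f$ is defined on all of $\hat V_1$, matching the convention introduced after Proposition~\ref{prop:editPathSubGraphs}; with that convention fixed, the correspondence is clean.
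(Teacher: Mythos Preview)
Your proof is correct and follows essentially the same route as the paper: both define the map $P\mapsto\varphi_P$ via the substitution-induced structural isomorphism and construct the inverse $\varphi\mapsto P_\varphi$ by exactly the same canonical list of node/edge removals, substitutions and insertions. The only difference is presentational: the paper packages the bijection as ``surjective'' (your identity $\varphi_{P_\varphi}=\varphi$) plus a direct case analysis on $(R,S,I)$ versus $(R',S',I')$ to get ``injective'', whereas you verify both round-trips; the underlying work is the same.

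One small slip to fix: you write $\hat E_1 = E_1\cap(\hat V_1\times\hat V_1)$, but that equality is false in general --- $\hat E_1$ is only the subset of $E_1\cap(\hat V_1\times\hat V_1)$ whose $f$-image lies in $E_2$ (this is exactly Proposition~\ref{prop:defSRIsets}). Your verbal description immediately after is correct, so this is a notational lapse rather than a gap. Your closing remarks about what ``equality of restricted edit paths'' should mean are well taken; the paper leaves this implicit by treating $R$, $S$, $I$ as sets of operations rather than sequences.
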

%----------------------------------------------------------------------------
\begin{proof}
  Let $P$ denote an edit path. If no node substitution occurs, all
  node of $G_1$ are first removed and then all nodes of $G_2$
  are inserted. We associate this edit path to $\varphi_0$. 

  If substitutions occur. We associate to $P$ the function previously
  mentioned which maps each substituted node of $V_1$ onto the
  corresponding node of $G_2$. 

  Let $\psi$ denotes this mapping. Let us consider an injective
  function $f\neq \varphi_0$ from a set $\hat{V_1}\subset V_1$ onto $V_2$.  We
  associate to $f$ the sets of node
  \begin{itemize}
  \item removals: $c_{vd}(v\rightarrow \epsilon), v\in V_1-\hat{V_1}$,
  \item insertions:  $c_{vi}(\epsilon\rightarrow v), v\in V_2-f(\hat{V_1})$, and
  \item substitutions: $c_{vs}(v\rightarrow f(v)), v\in \hat{V_1}$.   
  \end{itemize}
  Moreover, since $G_1$ and $G_2$ are simple graphs it exists at most
  one edge between any pair of nodes. We thus define the following
  set of edge operations:
  \begin{itemize}
  \item removals: $c_{ed}((i,j)\rightarrow \epsilon)$ such that $i$ or $j$ does not belong to $\hat{V_1}$ or $(f(i),f(j))$ do not belongs to $E_2$,
  \item insertions $c_{ei}(\epsilon \rightarrow (k,l) )$ such that $k$ or $l$ does not belong to $f(\hat{V_1})$ or $(f^{-1}(k),f^{-1}(l))$ do not belongs to $E_1$,

  \item substitutions $c_{es}((i,j)\rightarrow (f(i),f(j))), (i,j)\in (\hat{V_1}\times\hat{V_1})\cap E_1$ and $(f(i),f(j)))\in E_2$.
  \end{itemize}
  Let us denote respectively by $R,S$ and $I$ the set of
  removals/substitutions/insertions defined both on node and
  edges. The sequence $(R,S,I)$ defines a valid restricted edit path whose image
  by $\psi$ is by construction equal to $f$. The function $\psi$ is
  thus surjective.

  Let us consider two different edit paths $P=(R,S,I)$ and $P'=(R',S',I')$. Then:
  \begin{itemize}
  \item If $R\neq R'$. If node removals are not equal, $\psi(P)$ and
    $\psi(P')$ are not defined on the same set and are consequently
    not equal. Otherwize, let us suppose that an edge $(i,j)$ is
    removed in $P$ and not in $P'$. Let us further suppose that
    $\psi(P)(i)=\psi(P')(i)$ and $\psi(P)(j)=\psi(P')(j)$. Since
    $(i,j)$ is not removed in $P'$ we have
    $(\psi(P)(i),\psi(P)(j))=(\psi(P')(i),\psi(P')(j))\in E_2$. The
    edge $(i,j)$ should thus be first removed and then inserted in $P$
    which contradicts the definition of a restricted edit
    path. Therefore, one of the two following conditions should holds:
    \begin{enumerate}
    \item     $\psi(P)(i)\neq\psi(P')(i)$ or $\psi(P)(j)\neq\psi(P')(j)$,
    \item the set of edge removal operations is the same in $P$ and $P'$.
    \end{enumerate}
    If the first condition holds $\psi(P)\,{\neq}\,\psi(P')$. If the second
    condition holds, since $R\,{\neq}\,R'$, the set of node removals is
    different in $P$ and $P'$. In this case we get also
    $\psi(P)\,{\neq}\,\psi(P')$.
  \item If $S\neq S'$. Node substitutions are different if they are
    either defined on different sets or do not correspond to the same
    node mapping. In both cases, $\psi(P)\neq \psi(P')$. If node
    substitutions are identical but if $S$ and $S'$ differ by the
    edge substitutions, an edge of $G_1$ is substituted by $P$ and $P'$ to an edge of
    $G_2$ with two different labels. Since an edge may
    be substituted at most once in an independent edit path, either
    $P$ or $P'$ is not a valid edit path between $G_1$ and $G_2$.
    Hence $S$ and $S'$ differ only if their node substitutions differ, 
    in which case we have $\psi(P)\neq \psi(P')$.
  \item If $I\neq I'$ (with $R=R'$ and $S=S'$). If a node $k\in V_2$
    is inserted by $I$ but not by $I'$, this means that $k$ is
    substituted by $S'$. Hence we contradict $S=S'$. In the same way,
    if an edge $(k,l)\in E_2$ is inserted by $I$ and not by $I'$,
    $(k,l)$ is substituted by $S'$ but not by $S$. We again contradict
    $S=S'$.   
  \end{itemize}
  Note that the last item of the previous decomposition shows (as a
  side demonstration) that given the initial and final graphs, a
  restricted edit path is fully defined by its set of removals and
  substitutions.  Moreover, if the set of removals or substitutions of
  two restricted edit paths differ, then the associated mapping is
  different. The function $\psi$ is thus injective and hence
  bijective.
\end{proof}
%----------------------------------------------------------------------------
\begin{proposition}\label{prop:defSRIsets}
  Let $P$ be a restricted edit path not associated with $\varphi_0$
  (hence with some substitutions). Let us denote by
  $\varphi: \hat{V_1}\rightarrow V_2$ the injective function
  associated to $P$ and let us denote $\varphi(\hat{V_1})$ by
  $\hat{V_2}$. We further introduce the following two sets:
  \[
  \left\{
    \begin{array}{lll}
      R_{12}&=&\{(i,j)\in E_1\cap(\hat{V_1}\times\hat{V_1})~|~(\varphi(i),\varphi(j))\not\in E_2\}\\
      I_{21}&=&\{(k,l)\in E_2\cap(\hat{V_2}\times\hat{V_2})~|~(\varphi^{-1}(k),\varphi^{-1}(k))\not\in E_1\}\\
    \end{array}
  \right.
  \]
  \begin{itemize}
  \item The set of substituted/removed/inserted nodes by $P$ are
    respectively equal to: $\hat{V_1}$, $V_1\setminus\hat{V_1}$ and
    $V_2\setminus\hat{V_2}$.
  \item The set of edges substituted/removed/inserted by $P$ are
    respectively equal to:
    \begin{itemize}
    \item Substituted: $\hat{E_1}=\left(E_1\cap (\hat{V_1}\times\hat{V_1})\right)\setminus R_{12}$\\
      with $\hat{E_2}=\varphi(\hat{E_1})=\left(E_2\cap (\hat{V_2}\times\hat{V_2})\right)\setminus I_{21}$
    \item Removed: $E_1\setminus\hat{E_1}=\left(E_1\cap\left(((V_1\setminus\hat{V_1})\times V_1)\cup ( V_1\times(V_1\setminus\hat{V_1}))\right)\right)\cup R_{12}$      
    \item Inserted: $E_2\setminus\hat{E_2}=\left(E_2\cap\left(((V_2\setminus\hat{V_2})\times V_2)\cup ( V_2\times(V_2\setminus\hat{V_2}))\right)\right)\cup I_{21}$
    \end{itemize}
  \end{itemize}
\end{proposition}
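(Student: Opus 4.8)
The plan is to obtain the whole statement as a bookkeeping consequence of Proposition~\ref{prop:edit_path_mapping} and, more precisely, of the explicit construction used in its proof. By that proposition the map $\psi$ is a bijection, so the restricted edit path $P$ is, up to the canonical reordering into a sequence $(R,S,I)$, exactly the edit path that the proof of Proposition~\ref{prop:edit_path_mapping} associates to the injective function $f=\varphi$. Hence the sets of nodes and edges that $P$ substitutes, removes and inserts are literally the sets written down in that construction, and everything to be proved is a rewriting of those sets in the closed form given here. I would first recall these sets, then simplify each one in turn.

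The node part is immediate: the construction removes exactly the nodes of $V_1\setminus\hat{V_1}$, substitutes exactly the nodes of $\hat{V_1}$ (sending $v$ to $\varphi(v)$), and inserts exactly the nodes of $V_2\setminus\varphi(\hat{V_1})=V_2\setminus\hat{V_2}$. This also records that $\varphi$ restricts to a bijection $\hat{V_1}\to\hat{V_2}$, which I use below. For the substituted edges, the construction substitutes $(i,j)$ precisely when $(i,j)\in E_1\cap(\hat{V_1}\times\hat{V_1})$ and $(\varphi(i),\varphi(j))\in E_2$; since $R_{12}$ is, by definition, the set of $(i,j)\in E_1\cap(\hat{V_1}\times\hat{V_1})$ with $(\varphi(i),\varphi(j))\notin E_2$, this gives $\hat{E_1}=(E_1\cap(\hat{V_1}\times\hat{V_1}))\setminus R_{12}$ at once. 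For the identity $\hat{E_2}=\varphi(\hat{E_1})=(E_2\cap(\hat{V_2}\times\hat{V_2}))\setminus I_{21}$ I would argue by double inclusion using that $\varphi|_{\hat{V_1}}$ is a bijection onto $\hat{V_2}$: if $(i,j)\in\hat{E_1}$ then $\varphi(i),\varphi(j)\in\hat{V_2}$, $(\varphi(i),\varphi(j))\in E_2$, and $(\varphi^{-1}(\varphi(i)),\varphi^{-1}(\varphi(j)))=(i,j)\in E_1$, so $(\varphi(i),\varphi(j))\notin I_{21}$; conversely, for $(k,l)\in(E_2\cap(\hat{V_2}\times\hat{V_2}))\setminus I_{21}$, setting $(i,j)=(\varphi^{-1}(k),\varphi^{-1}(l))\in\hat{V_1}\times\hat{V_1}$ one has $(i,j)\in E_1$ and $(\varphi(i),\varphi(j))=(k,l)\in E_2$, hence $(i,j)\notin R_{12}$, so $(i,j)\in\hat{E_1}$ and $(k,l)=\varphi((i,j))$. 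This is consistent with the structural isomorphism already supplied by Proposition~\ref{prop:editPathSubGraphs}.

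For the removed and inserted edges, I would first note that in a restricted edit path every edge of $E_1$ is either kept, i.e. substituted under the zero-cost-completion convention introduced after Proposition~\ref{prop:editPathSubGraphs}, or removed, and no edge is both; hence the removed edges are exactly $E_1\setminus\hat{E_1}$. From the construction, $(i,j)\in E_1$ is removed iff $i\notin\hat{V_1}$, or $j\notin\hat{V_1}$, or $(\varphi(i),\varphi(j))\notin E_2$; the first two alternatives describe $E_1\cap\bigl(((V_1\setminus\hat{V_1})\times V_1)\cup(V_1\times(V_1\setminus\hat{V_1}))\bigr)$, while the third, which only concerns pairs in $\hat{V_1}\times\hat{V_1}$, is precisely $R_{12}$; taking the union yields the stated formula. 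The inserted-edges formula follows by the symmetric argument, exchanging the roles of $G_1$ and $G_2$ and of $\varphi$ and $\varphi^{-1}$, and using $\hat{E_2}=\varphi(\hat{E_1})$ to identify $E_2\setminus\hat{E_2}$ with the set of edges inserted by $P$.

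There is no genuine obstacle: the proposition is essentially a repackaging of the construction behind Proposition~\ref{prop:edit_path_mapping}. The only points requiring care are keeping the directed convention so that ``incident to a removed (resp. inserted) node'' is expressed symmetrically in both coordinates, checking that the two pieces of each union (the part meeting $V_1\setminus\hat{V_1}$ resp. $V_2\setminus\hat{V_2}$, and the part $R_{12}$ resp. $I_{21}$) exhaust the removed resp. inserted edges, and reading the evident typo in the definition of $I_{21}$ as $(\varphi^{-1}(k),\varphi^{-1}(l))\notin E_1$.
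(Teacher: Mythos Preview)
Your proposal is correct and follows essentially the same approach as the paper: both derive the node operations directly from the construction in the proof of Proposition~\ref{prop:edit_path_mapping}, identify $\hat{E_1}$ and $\hat{E_2}$ by a double-inclusion argument via the bijection $\varphi|_{\hat{V_1}}$, and obtain the removed/inserted edge sets by complementation and negation of the defining conditions. Your observation about the typo in $I_{21}$ is also correct.
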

%----------------------------------------------------------------------------
\noindent{\textbf{Proof:}~}
  Node substitions/removal/insertions are direct consequences of the
  proof of Proposition~\ref{prop:edit_path_mapping} and the bijective
  mapping between injective functions from a subset
  $\hat{V_1}\subset V_1$ onto $V_2$ and restricted edit paths. 
  \begin{itemize}
  \item If
    $(i,j)\in \hat{E_1}, \{\varphi(i),\varphi(j)\}\subset \hat{V_2}$
    and $(\varphi(i),\varphi(j))\in E_2$. Since $(i,j)$ cannot be
    removed and then inserted it must be substituted. Conversely, if
    $(i,j)\in E_1$ is substituted, $i$ an $j$ must be substituted
    ($\{i,j\}\subset \hat{V_1}$). Moreover, the edge
    $(\varphi(i),\varphi(j))$ should exists in $E_2$. Hence
    $(i,j)\not\in R_{12}$ and $(i,j)\in \hat{E_1}$. 

    If $(k,l)\in \hat{E_2}=\varphi(\hat{E_1})$, it exits
    $(i,j)\in E_1$ such that $k=\varphi(i)$ and $l=\varphi(j)$. Hence
    $(k,l)\not\in I_{21}$. Since $(i,j)\not\in R_{12}, (k,l)\in E_2$
    and $\{k,l\}\subset \hat{V_2}$. Hence
    $(k,l)\in \varphi(\hat{E_1})=E_2\cap
    (\hat{V_2}\times\hat{V_2})\setminus I_{21}$.
    Conversely, if
    $(k,l)\in E_2\cap (\hat{V_2}\times\hat{V_2})\setminus I_{21}$, it exits
    $(i,j)$ such that $k=\varphi(i)$ and $l=\varphi(j)$
    ($\{k,l\}\in \hat{V_2}$). Since $(k,l)\in E_2\setminus I_{21}$ we have
    $(i,j)\in E_1\setminus R_{12}$. Finally, $\{k,l\}\subset \hat{V_2}$ implies
    that $\{i,j\}\subset \hat{V_1}$ hence $(i,j)\in \hat{E_1}$ and
    $(k,l)\in \hat{E_2}=\varphi(\hat{E_1})$.
  \item Any non substited edge of $G_1$ must be removed. Hence, the
    set of removed edges is equal to $E_1\setminus\hat{E_1}$. The remaining
    equation, is deduced by a negation of the conditions defining
    $\hat{E_1}$.
  \item In the same way, any edge of $G_2$ which is not produced by a
    substitution of an edge of $G_1$ must be inserted. Hence, the set
    of inserted edges is equal to $E_2\setminus\hat{E_2}$. The negation of the
    condition defining $\hat{E_2}$ provides the remaining equation.~~$\Box$
  \end{itemize}
%----------------------------------------------------------------------------
%----------------------------------------------------------------------------
\subsection{Linear and quadratic assignment problems}\label{sec:mainDefAssignment}
%----------------------------------------------------------------------------
%----------------------------------------------------------------------------
Within this report, an assignment corresponds to a bijective mapping $\varphi:\mathcal{X}{\rightarrow}\mathcal{Y}$ between two finite sets
$\mathcal{X}$ and $\mathcal{Y}$ having the same size $|\mathcal{X}| = |\mathcal{Y}| = n$.
When these sets are reduced to the same set of integers, \textit{i.e.} $\mathcal{X}\,{=}\,\mathcal{Y}\,{=}\,\{1,\ldots,n\}$, the bijection $\varphi$ reduces to the permutation $(\varphi(1),\ldots,\varphi(n))$. Any permutation $\varphi$ can be represented by a $n\,{\times}\,n$ permutation matrix $\mathbf{X}\,{=}\,(x_{i,j})_{i,j=1,\ldots,n}$ with 
%----------------------------------------------------------------------------
\begin{equation}
x_{i,j} = 
\begin{cases}
1 & \text{if } j =  \varphi(i) \\
0 & \text{else}
\end{cases}
\end{equation}
%----------------------------------------------------------------------------
More generally, recall that a permutation matrix is defined as follows.
%----------------------------------------------------------------------------
\begin{definition}[Permutation Matrix]\label{def:permutationmatrix}
A $n\,{\times}\,n$ matrix $\mathbf{X}$ is a permutation matrix iff it satisfies the following contraints:
\begin{equation}\label{eq-cts-mtx-perm}
  \left\lbrace\begin{aligned}
  &\forall j\,{=}\,1,\ldots,n,&&\sum_{i=1}^n x_{i,j} = 1 \\
  &\forall i\,{=}\,1,\ldots,n,&&\sum_{j=1}^n x_{i,j} = 1 \\
  &\forall i,j\,{=}\,1,\ldots,n,&&x_{i,j} \in \{0,1\}
  \end{aligned}\right.
  \end{equation}
\end{definition}
%----------------------------------------------------------------------------
These constraints ensure $\mathbf{X}$ to be binary and doubly stochastic (sum of rows and sum of columns equal to $1$).

The selection of a relevant assignment, among all possible ones from $\mathcal{X}$ to $\mathcal{Y}$, depends on the problem. Nevertheless, each assignment is commonly penalized by a cost, and a relevant assignment becomes one having a minimal or a maximal cost. In this report, we consider minimal costs only. The cost of an assignment is usually defined as a sum of elementary costs. An elementary cost may penalize the assignment of an element of $\mathcal{X}$ to an element of $\mathcal{Y}$, or the simultaneaous assignment of two elements $i$ and $j$ of $\mathcal{X}$ to two elements $k$ and $l$ of $\mathcal{Y}$, respectively.

%----------------------------------------------------------------------------
\begin{definition}[Linear Sum Assignment Problem (LSAP)]\label{def:lsap}
Let $\mathbf{C}\,{\in}\,[0,{+\infty})^{n\times n}$ be a matrix such that $c_{i,j}$ corresponds to the cost of assigning the element $i\,{\in}\,\mathcal{X}$ to the element $j\,{\in}\,\mathcal{Y}$. The Linear Sum Assignment Problem (LSAP) consists in finding
an optimal permutation 
%----------------------------------------------------------------------------
\begin{equation}
\hat{\varphi}\in{\argmin}~\left\lbrace\sum_{i=1}^nc_{i,\varphi(i)}~|~\varphi\,{\in}\,\mathcal{S}_n\right\rbrace
\end{equation}
%----------------------------------------------------------------------------
  where $\mathcal{S}_n$ is the set of all permutations of $\{1,\ldots,n\}$. Equivalently, the LSAP consists in finding an optimal $n\,{\times}\,n$ permutation matrix
  %----------------------------------------------------------------------------
\begin{equation}
\hat{\mathbf{X}}\in\argmin~\left\{\sum_{i=1}^n\sum_{j=1}^n c_{i,j} x_{i,j}~|~\mathbf{X}~\text{satisfies Eq.~\ref{eq-cts-mtx-perm}}\right\}.
\end{equation}
%----------------------------------------------------------------------------
\end{definition}
%----------------------------------------------------------------------------

Let $\mathbf{c}\,{=}\,\text{vec}(\mathbf{C})\,{\in}\,[0,{+\infty})^{n^2}$ be the vectorization of the cost matrix $\mathbf{C}$, obtained by concatenating its rows. Similarly, let $\mathbf{x}\,{=}\,\text{vec}(\mathbf{X})\,{\in}\,\{0,1\}^{n^2}$ be the vectorization of $\mathbf{X}$. Then, the LSAP consists in finding an optimal vector
%----------------------------------------------------------------------------
\begin{equation}\label{eq:lsap_formulation}
  \hat{\mathbf{x}}\in\argmin\left\{\mathbf{c}^T\mathbf{x}~|~\mathbf{L}\mathbf{x}\,{=}\,\mathbf{1}_{2n},~\mathbf{x}\,{\in}\,\{0,1\}^{n^2}\right\},
\end{equation}
%----------------------------------------------------------------------------
where the linear system $\mathbf{L}\mathbf{x}\,{=}\,\mathbf{1}$ is the matrix version of the constraints defined by Eq.~\ref{eq-cts-mtx-perm}. The matrix $\mathbf{L}\,{\in}\,\{0,1\}^{2n\times n^2}$ represents the node-edge incidence matrix of the complete bipartite graph $K_{n,n}$ with node sets $\mathcal{X}$ and $\mathcal{Y}$:
%----------------------------------------------------------------------------
\begin{equation}
	l_{i,(j,k)}=\begin{cases}
	1&\text{if}~(j\,{=}\,i)\vee(k\,{=}\,i)\\
	0&\text{else}
	\end{cases}
\end{equation}
%----------------------------------------------------------------------------
The system $\mathbf{L}\mathbf{x}\,{=}\,\mathbf{1}$, together with the binary constraint on $\mathbf{x}$, selects exactly one edge of $K_{n,n}$ for each element of $\mathcal{X}\,{\cup}\,\mathcal{Y}$. In other terms, these constraints represent a subgraph of $K_{n,n}$, with node sets $\mathcal{X}$ and $\mathcal{Y}$, such that each node has degree one. Indeed, the LSAP is a weighted bipartite graph matching problem.

More details on the LSAP can be found in \cite{sier01,bur09}. In particular, Eq.~\ref{eq:lsap_formulation} is a binary linear programming problem, efficiently solvable in polynomial time complexity, for instance with the Hungarian  algorithm~\cite{Kuhn1955,Munkres1957,law76} combined with pre-processing steps~\cite{Jonker1987}. In our experiments, we have used the $O(n^3)$ (time complexity) version of the Hungarian algorithm proposed in \cite{law76,bur09}.

Problems that incorporate pairwise constraints, \textit{i.e.} simultaneously assigning two elements of $\mathcal{X}$ to two elements of $\mathcal{Y}$, can be cast as a quadratic assignment problem \cite{koop57,Lowler1963,law76,Loiola2007,bur09}. This is the case for the graph matching problem, and for the GED as demonstrated in Section~\ref{sec:qap_formulation}. In this paper, we consider the general expression of quadratic assignment problems \cite{Lowler1963}.
%----------------------------------------------------------------------------
\begin{definition}[Quadratic Assignment Problem (QAP)]\label{def:qsap}
Let $\mathbf{D}\in[0,{+\infty})^{n^2\times n^2}$ be a matrix such that $d_{ik,jl}$ corresponds to the cost of simultaneously assigning the elements $i$ and $j$ of $\mathcal{X}$ to the elements $k$ and $l$ of $\mathcal{Y}$, respectively. The quadratic assignment problem (QAP) consists in finding an optimal permutation
%----------------------------------------------------------------------------
\begin{equation}\label{eq:lowler_form}
\hat{\varphi}\in\argmin\left\{\sum_{i=1}^n\sum_{j=1}^n d_{i\varphi(i),j\varphi(j)}~|~\varphi\in\mathcal{S}_n\right\}.
\end{equation}
%----------------------------------------------------------------------------
Equivalently, the QAP consists in finding an optimal $n\,{\times}\,n$ permutation matrix
%----------------------------------------------------------------------------
\begin{equation*}
\hat{\mathbf{X}}\in\argmin\left\{\sum_{i=1}^n\sum_{k=1}^n\sum_{j=1}^n\sum_{l=1}^n d_{ik,jl}x_{i,k}x_{j,l}~|~\mathbf{X}~\text{satisfies Eq.~\ref{eq-cts-mtx-perm}}\right\}.
\end{equation*}
%----------------------------------------------------------------------------
\end{definition}
%----------------------------------------------------------------------------
Note that $i\,{\in}\,\mathcal{X}$ is assigned to $k\,{\in}\,\mathcal{Y}$, and $j\,{\in}\,\mathcal{X}$ is assigned to $l\,{\in}\,\mathcal{Y}$, simultaneously iff $x_{i,k}\,{=}\,x_{j,l}\,{=}\,1$.

The QAP can be rewritten as a quadratic program:
%----------------------------------------------------------------------------
\begin{equation*}
\argmin\left\{\mathbf{x}^T\mathbf{D}\mathbf{x}~|~\mathbf{L}\mathbf{x}\,{=}\,\mathbf{1}_{2n},~\mathbf{x}\,{\in}\,\{0,1\}^{n^2}\right\}
\end{equation*}
%----------------------------------------------------------------------------
where $\mathbf{x}$ is the vectorization of $\mathbf{X}$, and the right-hand side is the matrix version of the constraints defined by Eq.~\ref{eq-cts-mtx-perm}.

The quadratic term is able to incorporate a linear one. Indeed, any simultenous assignment of the same element $i\,{\in}\,\mathcal{X}$ to the same element $k\,{\in}\,\mathcal{Y}$ is penalized by the cost $d_{ik,ik}$, \textit{i.e.} a diagonal element of $\mathbf{D}$. Since $x_{i,k}\,{\in}\,\{0,1\}$, we have $d_{ik,ik}x_{i,k}x_{i,k}\,{=}\,d_{ik,ik}x_{i,k}$. Then the total contribution of diagonal elements to the quadratic cost is given by
%----------------------------------------------------------------------------
\begin{equation*}
	\sum_{i=1}^n\sum_{k=1}^n d_{ik,ik}x_{i,k}=\text{diag}(\mathbf{D})^T\mathbf{x}
\end{equation*}
%----------------------------------------------------------------------------
where $\text{diag}$ denotes the diagonal vector. So, if $\text{diag}(\mathbf{D})\,{\not=}\,\mathbf{0}$, the quadratic functional incorporates a linear term which decribes the linear sum assignment between the elements of $\mathcal{X}$ and $\mathcal{Y}$. When these constraints are also part of the underlying problem, it is sometimes more convenient to rewritte the QAP as
%----------------------------------------------------------------------------
\begin{equation}\label{eq:matrix_form}
\argmin\left\{\mathbf{x}^T\mathbf{D}\mathbf{x} + \mathbf{c}^T\mathbf{x}~|~\mathbf{L}\mathbf{x}\,{=}\,\mathbf{1}_{2n},~\mathbf{x}\,{\in}\,\{0,1\}^{n^2}\right\}
\end{equation}
%----------------------------------------------------------------------------
where $\text{diag}(\mathbf{D})\,{=}\,\mathbf{0}$, and $\mathbf{c}$ is the cost of assigning each element of $\mathcal{X}$ to each element of $\mathcal{Y}$.

As the GED, the QAP is in general NP-hard, and exact algorithms can only be used with sets of small cardinality \cite{bur09}. Indeed, the cost functional is generally not convex, and methods based on relaxation and linearization are usually considered to find an approximate solution. See Section~\ref{sec:qap_solvers} for more details.
%%% Local Variables: 
%%% mode: latex
%%% TeX-master: "technical_report_ged"
%%% End: 

\section{Bipartite Graph Edit Distance}
\label{sec:linear_formulation}
As already mentionned, the GED can be challenging to compute, even on
small graphs.  In order to deal with such a complexity, several
methods approximate the GED by computing an optimal linear sum
assignment between the nodes of the two graphs to be compared. This is
formalized through the definition of a specific
LSAP~\cite{Riesen2007,Riesen2009} that takes into account edit
operations, as described in this section. In particular, we show in
Section~\ref{sec:line-sum-assignm} that there is a one-to-one relation
between restricted edit paths and  assignment matrices. 
% ------------------------------------------------------------------------------------------
% ------------------------------------------------------------------------------------------
\subsection{Edit operations, assignments and restricted edit paths}\label{sec:line-sum-assignm}
% ------------------------------------------------------------------------------------------
%---------------------------------------------------------------------------------------
Let $\mathcal{X}$ and $\mathcal{Y}$ be two finite sets, with $|\mathcal{X}|=n$ and $|\mathcal{Y}|=m$. Without loss of generality, we assume that $\mathcal{X}\,{=}\,\{1,\ldots,n\}$ and $\mathcal{Y}\,{=}\,\{1,\ldots,m\}$. Each element of $\mathcal{X}$ can be assigned to an element of $\mathcal{Y}$. Such a mapping represents a possible substitution. Also each element of $\mathcal{X}$ can be removed, and each element of $\mathcal{Y}$ can be inserted into $\mathcal{X}$. In order to represent insertions, $\mathcal{X}$ is augmented by $m$ dummy elements $\mathcal{E}_\mathcal{X}\,{=}\,\{\epsilon_1,\ldots,\epsilon_m\}$, such that $j\,{\in}\,\mathcal{Y}$ can only be inserted into $\mathcal{Y}$ by assiging $\epsilon_j$ to $j$. Similarly, the set $\mathcal{Y}$ is augmented by $n$ dummy elements $\mathcal{E}_\mathcal{Y}\,{=}\,\{\epsilon_1,\ldots,\epsilon_n\}$, such that $i\,{\in}\,\mathcal{X}$ is removed by assigning it to $\epsilon_i$. In other terms, it is not possible to assign an element $i\,{\in}\,\mathcal{X}$ to an element $\epsilon_k\,{\in}\,\mathcal{E}_\mathcal{Y}$ with $k\,{\not=}\,i$, and similarly any assignment from $\epsilon_j\,{\in}\,\mathcal{E}_\mathcal{X}$ to $k\,{\in}\,\mathcal{Y}$ with $k\,{\not=}\,j$ is forbidden. 

Let $\mathcal{X}^\epsilon\,{=}\,\mathcal{X}\,{\cup}\,\mathcal{E}_\mathcal{X}$ and $\mathcal{Y}^\epsilon\,{=}\,\mathcal{Y}\,{\cup}\,\mathcal{E}_\mathcal{Y}$ be the two augmented sets, which thus have the same size $n\,{+}\,m$. We
assume without loss of generality that symbols $\epsilon_i$ and $\epsilon_j$ represent integers, \textit{i.e.} $\mathcal{E}_\mathcal{X}\,{=}\,\{n+1,\ldots,n+m\}$ and $\mathcal{E}_\mathcal{Y}\,{=}\,\{m+1,\ldots,m+n\}$. It is now possible to define assignments that take into account removal, substitution, and insertion of elements.
% ------------------------------------------------------------------------------------------
\begin{definition}[$\epsilon$-assignment]\label{def-epsass}
	An $\epsilon$-assignment from $\mathcal{X}$ to $\mathcal{Y}$ is a bijective mapping $\psi\,{:}\,\mathcal{X}^\epsilon\,{\rightarrow}\,\mathcal{Y}^\epsilon$, here a permutation, such that for each element of $\mathcal{X}^\epsilon$ one of the four following cases occurs:
%---------------------------------------------------------------------------------------
\begin{enumerate}
\item Substitutions: $\psi(i)\,{=}\,j$ with $(i,j)\,{\in}\,\mathcal{X}\,{\times}\,\mathcal{Y}$.
\item Removals: $\psi(i)\,{=}\,\epsilon_i$ with $i\,{\in}\,\mathcal{X}$.
\item Insertions: $\psi(\epsilon_j)\,{=}\,j$ with $j\,{\in}\,\mathcal{Y}$.
\item Finally $\psi(\epsilon_j)\,{=}\,\epsilon_i$ allow to complete 
  the bijective property of $\psi$, and thus should be ignored. This occurs when $i\,{\in}\,\mathcal{X}$ and $j\,{\in}\,\mathcal{Y}$ are both substituted.
\end{enumerate}
%---------------------------------------------------------------------------------------
Let $\Psi_\epsilon(\mathcal{X},\mathcal{Y})$ be the set of all $\epsilon$-assignments from $\mathcal{X}$ to $\mathcal{Y}$.
\end{definition} 
% ------------------------------------------------------------------------------------------
In other terms, an $\epsilon$-assignment is a bijection (or permutation) with additional constraints. The corresponding $(n\,{+}\,m)\,{\times}\,(m\,{+}\,n)$ permutation matrix can be decomposed as follows:
%---------------------------------------------------------------------------------------
\begin{equation}\label{def:assMatrix}
	\mathbf{X} ={~} \begin{blockarray}{@{\hspace{2pt}}c|c@{\hspace{2pt}}cc}
	1\cdots m&\epsilon_1\cdots\epsilon_n &&\\
	\begin{block}{(@{\hspace{2pt}}c|c)@{\hspace{2pt}}cc}\vspace{-0.15cm}
		&&&1\\\vspace{-0.15cm}
		\mathbf{X}^\text{sub}& \mathbf{X}^\text{rem} &&\vdots\\
		&  &&n\\\cline{1-4}\vspace{-0.15cm}
		& &&\epsilon_1\\\vspace{-0.15cm}
		\mathbf{X}^\text{ins}&\mathbf{X}^{\epsilon}&&\vdots\\\vspace{0.15cm}
		&  &&\epsilon_m\\
	\end{block}
	\end{blockarray}
\end{equation}
%---------------------------------------------------------------------------------------
where matrix $\mathbf{X}^{\text{sub}}\,{\in}\,\{0,1\}^{n\times m}$ encodes node substitutions, $\mathbf{X}^{\text{rem}}\,{\in}\,\{0,1\}^{n\times n}$ encodes node removals, and $\mathbf{X}^{\text{ins}}\,{\in}\,\{0,1\}^{m\times m}$ encodes node insertions. Matrix $\mathbf{X}^\epsilon\,{\in}\,\{0,1\}^{m\times n}$ is an auxiliary matrix (case 4 above), it 
 ensures that $\mathbf{X}$ is a permutation matrix. Due to the constraints on dummy nodes 
 (cases 2 and 3 above) matrices $\mathbf{X}^{\text{rem}}$ and 
 $\mathbf{X}^{\text{ins}}$ always satisfy:
%---------------------------------------------------------------------------------------
\begin{equation}\label{eq-cst-eps}
\begin{aligned}
&\forall (i,j)\in \{1,\dots,n\}^2, i\not=j,&& x_{i,j}^\text{rem}\,{=}\,0\\
&\forall (i,j)\in \{1,\dots,m\}^2, i\not=j,&& x_{i,j}^\text{ins}\,{=}\,0.
\end{aligned}
\end{equation}
%---------------------------------------------------------------------------------------
%---------------------------------------------------------------------------------------
\begin{definition}[$\epsilon$-assignment matrix]\label{def:epsmat}
	 A $(n\,{+}\,m)\,{\times}\,(m\,{+}\,n)$ matrix satisfying equations \ref{eq-cts-mtx-perm}, \ref{def:assMatrix} and \ref{eq-cst-eps} is called an $\epsilon$-assignment matrix. The set of all $(n\,{+}\,m)\,{\times}\,(m\,{+}\,n)$ $\epsilon$-assignment matrices is denoted by $\mathcal{A}_{n,m}$.
\end{definition}
%---------------------------------------------------------------------------------------
The auxiliary matrix $\mathbf{X}^\epsilon$ in Eq.~\ref{def:assMatrix} suggests the definition of an equivalence relation between $\epsilon$-assignment matrices.
%---------------------------------------------------------------------------------------
\begin{definition}\label{def:equiv_Am}
  Two $\epsilon$-assignment matrices $\mathbf{X}_1$ and $\mathbf{X}_2$, defined by the two sequences
  of block matrices $(\mathbf{X}^{\text{sub}}_1,\mathbf{X}^{\text{rem}}_1,\mathbf{X}^{\text{ins}}_1,\mathbf{X}^\epsilon_1)$ and
  $(\mathbf{X}^{\text{sub}}_2,\mathbf{X}^{\text{rem}}_2,\mathbf{X}^{\text{ins}}_2,\mathbf{X}^\epsilon_2)$, are equivalent iff $$(\mathbf{X}^{\text{sub}}_1=\mathbf{X}^{\text{sub}}_2)\wedge(\mathbf{X}^{\text{rem}}_1=\mathbf{X}^{\text{rem}}_2)\wedge(\mathbf{X}^{\text{ins}}_1=\mathbf{X}^{\text{ins}}_2).$$
The set of $\epsilon$-assignment matrices up to this equivalence relation is denoted by $\mathcal{A}_{n,m}^\sim$.
\end{definition}
\begin{proposition}\label{prop:corresp_AmMappings}
  There is a one-to-one relation between $\mathcal{A}_{n,m}^\sim$
  and the set of injective functions from a subset of $\mathcal{X}$ to
  $\mathcal{Y}$.
\end{proposition}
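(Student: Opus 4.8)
The plan is to exhibit the one-to-one relation explicitly by constructing a map in each direction and showing they are mutually inverse. Given an equivalence class in $\mathcal{A}_{n,m}^\sim$, pick any representative $\mathbf{X}$ with blocks $(\mathbf{X}^{\text{sub}},\mathbf{X}^{\text{rem}},\mathbf{X}^{\text{ins}},\mathbf{X}^\epsilon)$; since the equivalence relation of Definition~\ref{def:equiv_Am} forgets only $\mathbf{X}^\epsilon$, the three blocks $\mathbf{X}^{\text{sub}},\mathbf{X}^{\text{rem}},\mathbf{X}^{\text{ins}}$ are well-defined on the class. From $\mathbf{X}^{\text{sub}}$ I read off a partial function $\varphi_{\mathbf{X}}$ whose domain is $\hat{V_1}=\{i\in\mathcal{X}~|~\sum_{j=1}^m x^{\text{sub}}_{i,j}=1\}$, setting $\varphi_{\mathbf{X}}(i)=j$ whenever $x^{\text{sub}}_{i,j}=1$. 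Conversely, given an injective $f:\hat{V_1}\rightarrow\mathcal{Y}$ with $\hat{V_1}\subseteq\mathcal{X}$, I build an $\epsilon$-assignment by the rule of Definition~\ref{def-epsass}: substitute $i\mapsto f(i)$ for $i\in\hat{V_1}$, remove $i\mapsto\epsilon_i$ for $i\in\mathcal{X}\setminus\hat{V_1}$, insert $\epsilon_j\mapsto j$ for $j\in\mathcal{Y}\setminus f(\hat{V_1})$, and complete the remaining rows/columns by the forced bijection on the leftover $\epsilon$-symbols; I then take its class in $\mathcal{A}_{n,m}^\sim$.

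The key steps, in order, are: (i) check that $\varphi_{\mathbf{X}}$ is well-defined and injective — injectivity follows because $\mathbf{X}$ is a permutation matrix, so no column of $\mathbf{X}^{\text{sub}}$ can contain two ones; (ii) check that the construction from $f$ yields an element of $\mathcal{A}_{n,m}$, i.e. that the assembled matrix satisfies Eq.~\ref{eq-cts-mtx-perm}, Eq.~\ref{def:assMatrix} and Eq.~\ref{eq-cst-eps} — here one verifies that each row and column sum is $1$ by counting: rows indexed by $\hat{V_1}$ are handled by $\mathbf{X}^{\text{sub}}$, rows indexed by $\mathcal{X}\setminus\hat{V_1}$ by the diagonal of $\mathbf{X}^{\text{rem}}$, and the $\epsilon_j$-rows split between $\mathbf{X}^{\text{ins}}$ (for $j\notin f(\hat{V_1})$) and $\mathbf{X}^\epsilon$ (for $j\in f(\hat{V_1})$), with the columns handled symmetrically; and the off-diagonal vanishing of $\mathbf{X}^{\text{rem}},\mathbf{X}^{\text{ins}}$ is immediate from the construction; (iii) verify the two composites are identities: starting from $f$, forming the matrix, and reading off $\varphi$ recovers $f$ on the nose because $\mathbf{X}^{\text{sub}}$ was built to encode exactly $f$; starting from a class $[\mathbf{X}]$, reading off $\varphi_{\mathbf{X}}$ and rebuilding gives a matrix with the same $\mathbf{X}^{\text{sub}},\mathbf{X}^{\text{rem}},\mathbf{X}^{\text{ins}}$ blocks — the $\mathbf{X}^{\text{rem}}$ block agrees because both are the identity on $\mathcal{X}\setminus\hat{V_1}$ and zero elsewhere (using Eq.~\ref{eq-cst-eps}), similarly for $\mathbf{X}^{\text{ins}}$ — hence the same class.

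I expect the main obstacle to be purely bookkeeping: carefully matching the index conventions $\mathcal{E}_\mathcal{X}=\{n+1,\ldots,n+m\}$, $\mathcal{E}_\mathcal{Y}=\{m+1,\ldots,m+n\}$ with the block decomposition of Eq.~\ref{def:assMatrix}, and making sure the "leftover" $\epsilon$-to-$\epsilon$ assignments in case 4 of Definition~\ref{def-epsass} can always be completed to a bijection — which holds because after removing the substituted rows $\hat{V_1}$ and substituted columns $f(\hat{V_1})$ the remaining $\epsilon_j$-rows and $\epsilon_i$-columns that are not already used number exactly $|\hat{V_1}|$ on each side, so an arbitrary bijection between them exists and, by Definition~\ref{def:equiv_Am}, the choice is irrelevant. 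Finally I note that the special class corresponding to an empty $\mathbf{X}^{\text{sub}}$ matches the empty injective function $\varphi_0$ of Proposition~\ref{prop:edit_path_mapping}, so the correspondence is genuinely a bijection of the full sets.
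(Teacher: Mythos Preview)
Your proposal is correct and follows essentially the same route as the paper: both construct the map $\mathbf{X}\mapsto\varphi_{\mathbf{X}}$ from the substitution block, build the inverse by filling in $\mathbf{X}^{\text{sub}},\mathbf{X}^{\text{rem}},\mathbf{X}^{\text{ins}}$ from $f$ and completing the $\epsilon$-block, and then verify the row/column constraints. The only cosmetic difference is that the paper phrases bijectivity as ``$\chi$ is surjective'' (via the same construction you give) and ``$\chi$ is injective'' (via a case split on which of $\mathbf{Q},\mathbf{R},\mathbf{S}$ differs), whereas you argue the two composites are identities; your observation that $\mathbf{X}^{\text{rem}}$ and $\mathbf{X}^{\text{ins}}$ are forced by $\mathbf{X}^{\text{sub}}$ via Eq.~\ref{eq-cst-eps} is exactly what makes this work and is implicit in the paper's injectivity argument.
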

%---------------------------------------------------------------------------------------
\begin{proof}
  Recall that $\mathcal{X}\,{=}\,\{1,\ldots,n\}$ and $\mathcal{Y}\,{=}\,\{1,\ldots,m\}$. 
  Let $\mathbf{X}=(\mathbf{X}^\text{sub},\mathbf{X}^\text{rem},\mathbf{X}^\text{ins},\mathbf{X}^\epsilon)=(\mathbf{Q},\mathbf{R},\mathbf{S},\mathbf{T})$ denote an $\epsilon$-assignment matrix. If $\mathbf{Q}\,{=}\,\mathbf{0}$ we associate
  to $\mathbf{X}$ the application $\varphi_0$ from the empty set onto itself (Proposition~\ref{prop:edit_path_mapping}). Otherwise, let
  us introduce the set:
  \[
  \hat{\mathcal{X}}=\{i\,{\in}\,\mathcal{X}~|~\exists j\,{\in}\,\mathcal{Y},~q_{i,j}\,{=}\,1\}
  \]
  Since $\mathbf{X}$ is a permutation matrix, for any $i\,{\in}\,\hat{\mathcal{X}}$
  there is exactly one $j\,{\in}\,\mathcal{Y}$ such that $q_{i,j}\,{=}\,1$. We can thus define
  the mapping:
  \[
  \varphi_\mathbf{X} \,\left(
    \begin{array}{lll}
      \hat{\mathcal{X}}      &\rightarrow & \mathcal{Y}\\
      i&\mapsto&j\\
    \end{array}
    \right)
    \]
    Moreover if $\varphi_\mathbf{X}(i_1)\,{=}\,\varphi_\mathbf{X}({i_2})$ then we have
    $q_{i_1,j}\,{=}\,q_{i_2,j}\,{=}\,1$. Since $\mathbf{X}$ is a permutation matrix, such a
    case is possible only if $i_1\,{=}\,i_2$, and $\varphi_\mathbf{X}$ is thus injective. We can thus associate to each assignement matrix an injective
    function $\varphi_\mathbf{X}$ from a subset of $\mathcal{X}$ to
    $\mathcal{Y}$. We denote by $\chi$ this mapping. We have to show
    that $\chi$ is bijective.

    Consider an injective mapping $\varphi$ from a subset
    $\hat{\mathcal{X}}$ onto $\mathcal{Y}$. If $\varphi\,{=}\,\varphi_0$, we have $\chi(\mathbf{X})\,{=}\,\varphi_0$ with
    $\mathbf{X}\,{=}\,(\mathbf{0}_{n\times m},\mathbf{I}_{n\times n},\mathbf{I}_{m\times m},\mathbf{0}_{m\times n})$.  Otherwise, the sub-blocks of $\mathbf{X}$ are defined as follows:
    \[
\begin{aligned}
&q_{i,j}=\begin{cases}
	1&\mbox{iff}~\exists i\in \hat{\mathcal{X}},\mbox{ with  } \varphi(i)=j,\\
    0&\mbox{else}
    \end{cases}\\
&r_{i,i}=\begin{cases}
	1&\mbox{iff}~i\in \mathcal{X}\setminus\hat{\mathcal{X}}\\
    0&\mbox{else}
    \end{cases}\\
&s_{j,j}=\begin{cases}
	1&\mbox{iff}~j\in \mathcal{Y}\setminus\varphi[\hat{\mathcal{X}}]\\
    0&\mbox{else}
    \end{cases}
\end{aligned}
    \]
    Note that off-diagonal elements of $\mathbf{R}$ and $\mathbf{S}$ are equal to $0$ by
    definition of $\epsilon$-assignement matrices (Eq.~\ref{eq-cst-eps}).
    
    By using the above definition, $q_{i,j}\,{=}\,1\Rightarrow r_{i,i}\,{=}\,s_{j,j}\,{=}\,0$.
    Hence, if $\mathbf{T}$ was filled with $0$, the line corresponding to
    $\epsilon_i\,{\in}\,\mathcal{Y}^\epsilon$ (column $m\,{+}\,i$) and
    $\epsilon_j\,{\in}\,\mathcal{X}^\epsilon$ (row $n\,{+}\,j$) would be filled by $0$ in $\mathbf{X}$, which would thus not be an $\epsilon$-assignment matrix. Moreover, since $\varphi$ is injective, we have $|\hat{\mathcal{X}}|\,{=}\,|\varphi[\hat{\mathcal{X}}]|$. Hence the set
    of indices:
    \[
    \mathcal{A}=\{(j,i)\in\{1,\dots,m\}\times\{1,\dots,n\}~|~q_{i,j}\,{=}\,1\}
    \]
    defines a square submatrix of $\mathbf{T}$, which can be defined as
    a permutation matrix on $\mathcal{A}$ and $0$ elsewhere. Now we check that each row and column of $\mathbf{X}=(\mathbf{Q},\mathbf{R},\mathbf{S},\mathbf{T})$ contains exactly one value equal to $1$.
    \paragraph*{For the rows:}
      \begin{itemize}
      \item If $i\in \hat{\mathcal{X}}$, then $\varphi(i)=j$, $q_{i,j}=1$ and
        $r_{i,i}=0$. Moreover, since $\varphi$ is an application, we have
        $q_{i,j'}=0$ for any $j'\in\mathcal{Y}\setminus\{j\}$. 
      \item If $i\in \mathcal{X}\setminus\hat{\mathcal{X}}$, by definition $q_{i,j}=0$ for
        all $j\in\mathcal{Y}$ and $r_{i,i}=1$. There is thus a single
        $1$ in row $i$.
      \item For $j\in\mathcal{Y}$ with $s_{j,j}=1$, by definition of $\mathbf{S}$ we have
        $j\in\mathcal{Y}\setminus\varphi[\mathcal{\hat{X}}]$, and thus 
        $\forall i\in \mathcal{X},~q_{i,j}=0$. Hence there is no $i$ such that
        $(j,i)\in \mathcal{A}$. By definition of $\mathbf{T}$, its row
        $j$ is filled with $0$.
      \item For $j\in\mathcal{Y}$ with $s_{j,j}=0$, then
        $j\in \varphi[\mathcal{X}]$ and it exists a unique
        $i\in \mathcal{X}$ such that $q_{i,j}=1$. Hence
        $(j,i)\in \mathcal{A}$. Since $\mathbf{T}$ defines a permutation matrix
        on $\mathcal{A}$, it should exists a unique $k$ satisfying
        $(j,k)\in\mathcal{A}$ such that $t_{j,k}=1$. This value is
        unique on the row $j$ of $\mathbf{T}$ by definition of $\mathbf{T}$.
      \end{itemize}
      Hence for each element of $\mathcal{X}^\epsilon$, there is a unique $1$ on the corresponding row of $\mathbf{X}$. 
      
      The proof for columns is similar.
%      We give it anyway bellow.
%      \paragraph*{For the columns:}
%      \begin{itemize}
%      \item If $j\in \{1,\dots,m\}$ with
%        $y_j\in \varphi(\mathcal{X})$. It exists a single
%        $i\in \{1,\dots,n\}$ such that $\varphi(x_i)=y_j$ and
%        $Q_{ij}=1$. We have thus $S_{jj}=0$. Moreover, since $\varphi$
%        is injective $Q_{i',j}=0$ for any
%        $i'\in\{1,\dots,n\}-\{i\}$.
%
%      \item If $j\in \{1,\dots,m\}$ with
%        $y_j\not\in \varphi(\mathcal{\hat{X}})$.
%        By definition $Q_{ij}=0$ for
%        all $i\in \{1,\dots,n\}$ and $S_{jj}=1$. There is thus a
%        single one at column  $j$.
%
%      \item For $i\in\{1,\dots,n\}$ with
%        $x_i\in\mathcal{X}-\mathcal{\hat{X}}$ $R_{ii}=1$ by
%        definition of $R$, and thus
%        \[
%        \forall j\in \{1,\dots,m\}\; Q_{ij}=0.
%        \]
%        hence it does not exists a $j$ such that
%        $(j,i)\in \mathcal{A}$. By definition of $T$, the column
%        $i$ of $T$ is thus equal to $0$.
%      \item For $i\in\{1,\dots,n\}$ with $x_i\in
%        \mathcal{\hat{X}}$.
%        Then $R_{ii}=0$ and it exists one $j\in \{1,\dots,m\}$ such
%        that $Q_{ij}=1$. Hence $(j,i)\in \mathcal{A}$. Since $T$
%        defines a permutation matrix on $\mathcal{A}$, it should
%        exists one $k$ with $(k,i)\in\mathcal{A}$ such that
%        $T_{ki}=1$. This value $1$ is unique on the column $i$ of $T$ by
%        definition of $T$.
%      \end{itemize}
	So there is a unique $1$ value on each row and each column
    of $\mathbf{X}$. Matrix $\mathbf{X}$ is thus a permutation matrix. Moreover, since elements of the blocks $\mathbf{R}$ and $\mathbf{S}$ are equal to $0$ on off-diagonal indices, $\mathbf{X}$ is an $\epsilon$-assignement matrix with $\chi(\mathbf{X})=\varphi$, by construction.

    The application $\chi$ is thus surjective. Let us show that it is
    also injective. To this end, we consider two non-equivalent
    assignement matrices $\mathbf{X}=(\mathbf{Q},\mathbf{R},\mathbf{S},\mathbf{T})$ and $\mathbf{X}'=(\mathbf{Q}',\mathbf{R}',\mathbf{S}',\mathbf{T}')$.
    \begin{itemize}
    \item If $\mathbf{Q}\neq\mathbf{Q}'$, then we may suppose without loss of
      generality that there exists
      $(i,j)\in \mathcal{X}\times\mathcal{Y}$ such that
      $q_{i,j}=1$ and $q'_{i,j}=0$. Since $q_{i,j}=1$,
      $i\in \mathcal{\hat{X}}_\mathbf{X}$, where $\mathcal{\hat{X}}_\mathbf{X}$
      denotes the subset of $\mathcal{X}$ on which $\chi(\mathbf{X})$ is
      defined.
      \begin{itemize}
      \item If $q'_{i,j'}=0$ for all $j'\in\mathcal{Y}$ then
        $i\in \mathcal{X}\setminus\mathcal{\hat{X}}_{\mathbf{X}'}$.  Hence $\chi(\mathbf{X})$ and
        $\chi(\mathbf{X}')$ are not defined on the same set and are
        consequently not equal.
      \item If it exists $j'\in\mathcal{Y}$ such that $q'_{i,j'}=1$
        we have $\chi(\mathbf{X})(i)=j$ and
        $\chi(\mathbf{X}')(i)=j'$. Applications $\chi(\mathbf{X})$ and $\chi(\mathbf{X}')$
        are consequently not equal.
      \end{itemize}
    \item If $\mathbf{R}\neq\mathbf{R}'$ we may suppose that it exists
      $i\in\mathcal{X}$ such that $r_{i,i}=1$ and $r'_{i,i}=0$. Hence
      $i\not\in \mathcal{\hat{X}}_\mathbf{X}$ and
      $i\in \mathcal{\hat{X}}_{\mathbf{X}'}$. Applications $\chi(\mathbf{X})$ and
        $\chi(\mathbf{X}')$ are not defined on the same set and are
        consequently not equal.
      \item If $\mathbf{S}\neq\mathbf{S}'$, let us consider $j\in\mathcal{Y}$ such
        that $s_{j,j}=1$ and $s'_{j,j}=0$. We have
        $j\not\in \chi(\mathbf{X})[\mathcal{\hat{X}}_\mathbf{X}]$ and
        $j\in \chi(\mathbf{X}')[\mathcal{\hat{X}}_{\mathbf{X}'}]$. Applications
        $\chi(\mathbf{X})$ and $\chi(\mathbf{X}')$ have different set of mappings and
        are consequently not equal.
    \end{itemize}
    Note that we do need to consider matrix $\mathbf{T}$ since this matrix is
    not implied in the equivalence relationship. In all cases
    application $\chi$ maps two non-equivalent $\epsilon$-assignement matrices to
    different injective functions. The application $\chi$ is thus injective. 

    Application $\chi$ being both surjective and injective, it is bijective.
\end{proof}
%---------------------------------------------------------------------------------------

~\\\noindent
So, $\epsilon$-assignment matrices on $\mathcal{X}^\epsilon$ and $\mathcal{Y}^\epsilon$, and
injective functions defined on a subset of $\mathcal{X}$ onto $\mathcal{Y}$, are in one-to-one correspondence. 

It is now possible to link $\epsilon$-assignments to edit paths. Consider two simple graphs  $G_1$ and $G_2$, with node sets $V_1$ and $V_2$ respectively. An $\epsilon$-assignment from $V_1$ to $V_2$ can be defined by constructing the sets $V_1^\epsilon$ and $V_2^\epsilon$. According to the above proposition, there is a one-to-one correspondence between $\epsilon$-assignment matrices on $V_1^\epsilon$ and $V_2^\epsilon$, and injective functions defined on a subset of $V_1$ onto $V_2$. By using Proposition~\ref{prop:edit_path_mapping}, we can connect such a mapping to a restricted edit path between $G_1$ and $G_2$ (Definition \ref{def:editPath}). Up to the equivalence relation (Definition~\ref{def:equiv_Am}), there is thus a one-to-one correspondence between $\epsilon$-assignment matrices and restricted edit paths.

This shows that restricted edit paths can be deduced from $\epsilon$-assignments.
%---------------------------------------------------------------------------------------
%---------------------------------------------------------------------------------------
\subsection{LSAP for $\epsilon$-assignments}\label{sec-lsapeps}
%---------------------------------------------------------------------------------------
%---------------------------------------------------------------------------------------
Let $\mathcal{X}\,{=}\,\{1,\ldots,n\}$ and $\mathcal{Y}\,{=}\,\{1,\ldots,m\}$ be two sets. These two sets are augmented by dummy elements as described in the previous section, \textit{i.e.} $\mathcal{X}^\epsilon\,{=}\,\mathcal{X}\,{\cup}\,\mathcal{E}_\mathcal{X}$ and $\mathcal{Y}^\epsilon\,{=}\,\mathcal{Y}\,{\cup}\,\mathcal{E}_\mathcal{Y}$. An $\epsilon$-assignment from $\mathcal{X}$ to $\mathcal{Y}$, \textit{i.e.} a bijective mapping from $\mathcal{X}^\epsilon$ onto $\mathcal{Y}^\epsilon$, represents a set of edit operations.

The selection of a relevant $\epsilon$-assignment is realized through the design of a pairwise cost function adapted to edit operations. To this, each possible mapping of an element $i\,{\in}\,\mathcal{X}^\epsilon$ to an element $j\,{\in}\,\mathcal{Y}^\epsilon$ is penalized by a non-negative cost $c_{i,j}$. All costs can be encoded by a $(n\,{+}\,m)\,{\times}\,(m\,{+}\,n)$ matrix (having the same structure as $\epsilon$-assignment matrices)  \cite{Riesen2007,Riesen2009}
%---------------------------------------------------------------------------------------
\begin{equation}\label{eq:cost-matrix}
	\mathbf{C} = \begin{blockarray}{@{\hspace{2pt}}c|c@{\hspace{2pt}}cc}
	1\cdots m&\epsilon_1\cdots\epsilon_n &&\\
	\begin{block}{(@{\hspace{2pt}}c|c)@{\hspace{2pt}}cc}\vspace{-0.15cm}
		&&&1\\\vspace{-0.15cm}
		\mathbf{C}^\text{sub}& \mathbf{C}^\text{rem} &&\vdots\\
		&  &&n\\\cline{1-4}\vspace{-0.15cm}
		& &&\epsilon_1\\\vspace{-0.15cm}
		\mathbf{C}^\text{ins}&\mathbf{0}_{m,n}&&\vdots\\\vspace{0.15cm}
		&  &&\epsilon_m\\
	\end{block}
	\end{blockarray}
\end{equation}
%---------------------------------------------------------------------------------------
where the matrix
$\mathbf{C}^\text{sub}\,{\in}\,[0,{+\infty})^{n\times m}$ encodes
substitution costs,
$\mathbf{C}^\text{rem}\,{\in}\,[0,{+\infty})^{n\times n}$ encodes
removal costs, and
$\mathbf{C}^\text{ins}\,{\in}\,[0,{+\infty})^{m\times m}$ encodes
insertion costs. According to cases 2 and 3 in Definition~\ref{def-epsass}, off-diagonal values
of $\mathbf{C}^\text{rem}$ and $\mathbf{C}^\text{ins}$ are typically
set to a large value $\omega$, such that
$\max\{c_{i,\psi(i)}\,|\,{\forall i\,{\in}\,\mathcal{X}^\epsilon},
{\forall\psi\,{\in}\,\Psi_\epsilon(\mathcal{X},\mathcal{Y})}\}\,{\ll}\,\omega\,{<}\,{+\infty}$,
in order to avoid forbidden mappings. According to case 4 in Definition~\ref{def-epsass}, the
mapping of any $\epsilon_i$ to an $\epsilon_j$ should not induce any
cost, so the last block of $\mathbf{C}$ is set to the null matrix
$\mathbf{0}_{m,n}$. The cost of an $\epsilon$-assignment $\psi$ can then be measured by the sum (see Definition~\ref{def-epsass} for the decomposition)
%---------------------------------------------------------------------------------------
\begin{equation}
\label{eq-lsappsi}
	\sum_{i=1}^{n+m}c_{i,\psi(i)}=\underset{\text{substitutions}}{\underbrace{\sum_{i\in\hat{\mathcal{X}}}c_{i,\psi(i)}}}+\underset{\text{removals}}{\underbrace{\sum_{i\in\mathcal{X}\setminus\hat{\mathcal{X}}}c_{i,\epsilon_i}}}+\underset{\text{insertions}}{\underbrace{\sum_{j\in \mathcal{Y}-\psi[\mathcal{\hat{X}}]}c_{\epsilon_j,j}}}.
\end{equation}
%---------------------------------------------------------------------------------------
where $\hat{\mathcal{X}}\,{=}\,\{i\,{\in}\,\mathcal{X}~|~{\exists j\,{\in}\,\mathcal{Y}},~\psi(i)\,{=}\,j\}$.

An optimal $\epsilon$-assignment is then defined as one having a minimal cost (several optimal $\epsilon$-assignment may exist) among all $\epsilon$-assignments:
%---------------------------------------------------------------------------------------
\begin{equation}\label{pb-lsap-psi}
\hat{\psi}\in\argmin\left\{\sum_{i=1}^{n+m}c_{i,\psi(i)}~|~\psi\,{\in}\,\Psi_\epsilon(\mathcal{X},\mathcal{Y})\right\}
\end{equation}
%---------------------------------------------------------------------------------------
which is a LSAP (Section~\ref{sec:mainDefAssignment}). It can thus be rewritten as a binary programming problem (Eq.~\ref{eq:lsap_formulation})
%---------------------------------------------------------------------------------------
\begin{equation}\label{eq-lsapeps}
	\hat{\mathbf{x}}\in\argmin\left\{\mathbf{c}^T\mathbf{x}~|~\mathbf{x}\,{\in}\,\text{vec}[\mathcal{A}^\sim_{n,m}]\right\},
\end{equation}
%---------------------------------------------------------------------------------------
where
$\mathbf{x}\,{=}\,\text{vec}(\mathbf{X})\,{\in}\,\{0,1\}^{(n+m)^2}$ is
the vectorization of the $\epsilon$-assignment matrix $\mathbf{X}$
associated with $\psi$ (Eq.~\ref{def:assMatrix}),
$\mathbf{c}\,{=}\,\text{vec}(\mathbf{C})\,{\in}\,[0,{+\infty})^{(n+m)^2}$
is the vectorization of the cost matrix $\mathbf{C}$, and
$\text{vec}[\mathcal{A}^\sim_{n,m}]\,{\subset}\,\{0,1\}^{(n+m)^2}$ is
the set of all vectorized $\epsilon$-assignment matrices. Note that,
using equation~\ref{eq-lsappsi}, two equivalent $\epsilon$-assignment
matrices have a same cost.

The optimal solution of the LSAP defined by Eq.~\ref{eq-lsapeps} can be computed by any algorithm that solves LSAPs, such as Hungarian-type algorithms. Note that mappings in $\Psi_\epsilon$, or matrices in $\mathcal{A}_{n,m}$, are much more constrained than bijective mappings or permutation matrices. These constraints, \textit{i.e.} forbidden assignments, are satisfied in \cite{Riesen2007,Riesen2009} through the large $\omega$ values in the cost matrix. This is a classical trick used in LSAPs to avoid some specific assignments of elements \cite{bur09}. While these assignments are avoided, the corresponding large $\omega$ values are still treated by the algorithms. A better way to take into account the additionnal constraints would be to modify the algorithms such that forbidden assignments are not treated at all. This is the choice we made in our experimentations. This improves the time complexity.
%---------------------------------------------------------------------------------------
%---------------------------------------------------------------------------------------
\subsection{Bipartite GED}\label{sec:geda}
%---------------------------------------------------------------------------------------
%---------------------------------------------------------------------------------------
It is now possible to define a framework to approximate the GED, based
on $\epsilon$-assignments and the corresponding LSAP
\cite{Riesen2007,Zeng2009,Riesen2009,Gauzere2014a,rie15}. Within this
framework, a resulting approximate GED is called a bipartite GED.

Let $G_1$ and $G_2$ be two graphs, with node sets $V_1$ and $V_2$ respectively. The computation of a bipartite GED from $G_1$ to $G_2$ is performed in four main steps detailed below.
%---------------------------------------------------------------------------------------
\paragraph*{Step 1 - construction of the bags of patterns.}For each
node of each graph a bag of patterns is constructed. This bag
represents a part of the graph connected to a specific node by some
structured subgraphs, such as incident edges
\cite{Riesen2007,Riesen2009}, subtrees~\cite{Zeng2009} or walks
\cite{Gauzere2014a}. A set of bags of patterns is then obtained for
each graph. Let $\mathcal{B}_1$ and $\mathcal{B}_2$ be the ones
associated to $G_1$ and to $G_2$ respectively. We have thus
$|\mathcal{B}_1|\,{=}\,|V_1|$ and $|\mathcal{B}_2|\,{=}\,|V_2|$.  The
idea is then to find an optimal $\epsilon$-assignment from
$\mathcal{B}_1$ to $\mathcal{B}_2$, according to a given pairwise cost
matrix.
%---------------------------------------------------------------------------------------
\paragraph*{Step 2 - construction of the cost matrix.}Each possible mapping of a bag $b_i\,{\in}\,\mathcal{B}_1$ to a bag $b_j\,{\in}\,\mathcal{B}_2$ is penalized by a cost measuring the affinity between the two bags. This cost is initially defined as the cost of editing $b_i$ such that it is transformed into $b_j$, \textit{i.e.} the cost of an optimal $\epsilon$-assignment of the elements of the two bags \cite{Riesen2007,Riesen2009,Gauzere2014a,rie15}. Also the bags of $\mathcal{B}_1$ can be removed, and the bags of $\mathcal{B}_2$ can be inserted into $\mathcal{B}_1$, which is penalized by a cost measuring  the relevance of the bag. In order to approximate the GED, all theses costs depend on the original edit cost functions defined in Section~\ref{sec:mainDefEditDist}. They are encoded by a cost matrix $\mathbf{C}$ (Eq.~\ref{eq:cost-matrix}). Note that in this step $|V_1|\,{\times}\,|V_2|$ LSAP are solved for computing the costs of assigning bags of $\mathcal{B}_1$ to bags of $\mathcal{B}_2$.
%---------------------------------------------------------------------------------------
\paragraph*{Step 3 - construction of an $\epsilon$-assignment between the nodes.} Given the cost matrix $\mathbf{C}$ computed in the previous step, an optimal $\epsilon$-assignment from $\mathcal{B}_1$ to $\mathcal{B}_2$ is then computed by solving again a LSAP. The computed optimal assignment hence provides an optimal
mapping $\psi\,{\in}\,\Psi_\epsilon(V_1,V_2)$.
%---------------------------------------------------------------------------------------
\paragraph*{Step 4 - construction of a restricted edit path.} The $\epsilon$-assignment $\psi$ can be interpreted as
a partial edit path between the graphs $G_1$ and $G_2$. Indeed, it is only composed of edit
operations involving nodes. Therefore this partial edit path has to be
completed with edit operations applied on edges. This set of edit
operations is directly induced by the set of edit operations operating
on nodes, defined by the mapping computed in the previous step. The
substitution, removal or insertion of any edge depends thus on the
edit operations performed on its incident nodes. The cost of the complete edit
path is finally defined by the sum of edit operations on nodes and
edges. This cost only corresponds to an approximation of the GED between
$G_1$ and $G_2$ since the mapping computed during Step 3 may not be
optimal. Therefore, this cost corresponds to an overestimation of the
exact GED, known as bipartite GED.
%---------------------------------------------------------------------------------------
%---------------------------------------------------------------------------------------
%\subsubsection{Cost matrix definitions}\label{sec:cost-matr-defin}
%---------------------------------------------------------------------------------------

~\\\indent
The definition of the cost matrix $\mathbf{C}$ in Step 2 is a keypoint of the framework.
%Each element of the cost matrix $\C_\epsilon$ encodes the cost of mapping
%two nodes together or the deletion/insertion of one node. This cost is
%initially defined in~\cite{Riesen2009} by the cost of substituting a
%node and its direct neighboring to the target node and its direct
%neighboring. This cost is obviously computed using the same set of
%edit costs used to compute the graph edit distance.
%
%More formally, let us first consider an input labeled graph $G$. This
%graph can be associated to a set $B=\{B_i\}_{i=1,\ldots,|V|}$ where
%each element $B_i$ is defined as a bags of patterns. Every bag $B_i$ is associated to a node
%$u_i \in V$ and characterizes the local structure of $G$ around node
%$u_i$. The target graph $G'$ and its corresponding bags of structural
%patterns $B' = \{B_i'\}_{i=1,\ldots,|V'|}$ can be analogously
%stated. We can then define a cost $\cost(B_i \to B'_j)$ for the
%substitution of two bags of patterns. In order to define cost for
%inserting or removing bags of patterns, we introduce an empty bag of
%patterns $\varepsilon$. Then, costs $\cost(B_i \to \varepsilon)$ and
%$\cost(\varepsilon \to B_j')$ encode respectively removal and
%insertion of a bag.
%
The initial approach proposed in~\cite{Riesen2007,Riesen2009} defines bag of
patterns as the corresponding node itself and its direct
neighborhood, \text{i.e.} the set of incident edges and adjacent
nodes. The cost of assigning a bag $b_i\,{\in}\,\mathcal{B}_1$ to a bag $b_j\,{\in}\,\mathcal{B}_2$ is then defined as the substitution
cost of the associated node $i\,{\in}\,V_1$ and $j\,{\in}\,V_2$, plus the cost associated to an optimal $\epsilon$-assignment between the two sets composed of their incident edges and their adjacent nodes. Using such bags of patterns can be discriminant enough, in which case the bipartite GED provides a good approximation of the GED. But this approach lacks of
accuracy in some cases, in particular when the direct neighbourhood of the nodes
is homogeneous in the graph. When considering such graphs, the
cost associated to each pair of bags do not differ
sufficiently, and the optimal $\epsilon$-assignment depends more on the traversal of the nodes by the LSAP solver than on the graph's structure.

In order to improve the accuracy of the bipartite GED, the information attached to each node needs to be more global, for instance by considering bags of
walks up to a length $k$ \cite{Gauzere2014a}, instead of the direct
neighbourhood. This approach follows the same scheme as the one used
in \cite{Riesen2007,Riesen2009}, except that patterns 
associated to a node are defined as walks of length $k$ starting at this
node. Considering bags of such
patterns allow to extend the amount of information associated to the
nodes, which leads to a better approximation of the GED. However, the use of
bags of walks induces some drawbacks. First, the set of computed walks
suffers from the tottering phenomenon which leads to consider
irrelevant patterns, especially when considering high values of
$k$. These irrelevant patterns affect the cost of the $\epsilon$-assignment, and thus the
quality of the approximation of the GED. In addition, the mapping
cost between two bags of walks can only be approximated, which induces
another loss of accuracy.

These drawbacks can be avoided by using bags of subgraphs
rather than bags of walks, such as all subgraphs centered on a given
node and up to a radius $k$ \cite{Carletti2015}. The cost associated to the mapping of two
bags of such patterns is defined as the edit distance between the
two $\kgraphs$ centered on the respective nodes. Despite the fact that we can
control the size of these subgraphs thanks to the parameter $k$, this
approach requires significantly more computational time than the
previous ones. However, the use of accurate sub-structures allows to
obtain a better approximation of the GED.

%% Local Variables: 
%%% mode: latex
%%% TeX-master: "technical_report_ged"
%%% End: 

\section{GED as a Quadratic Assignment Problem}
\label{sec:qap_formulation}
The bipartite GED is a good candidate approximation of the GED, but it
is based on the construction of a restricted edit path which generally
does not have a minimal cost. Costs on edges can only be deduced from
operations performed on their two incident nodes. This cannot be taken
into account by the approach based on the LSAP, which considers
information about edges separately in each node. To fully describe the GED, the model must take into account simultaneous node and edge assignments. This can be formalized as a quadratic assignment problem \cite{Riesen2007,Riesen2009}. In this section, we propose to extend the linear model to a quadratic one based on $\epsilon$-assignments, and we show that this model corresponds to the cost of a restricted edit path.
% ---------------------------------------------------------------------
% ---------------------------------------------------------------------
\subsection{Simultaneous node assignment and quadratic cost}
% ---------------------------------------------------------------------
% ---------------------------------------------------------------------
Let $G_1\,{=}\,(V_1,E_1)$ and $G_2\,{=}\,(V_2,E_2)$ be two graphs, and 
let $\psi\in \Psi(V_1^\epsilon,V_2^\epsilon)$ be an $\epsilon$-assignment (Definition~\ref{def-epsass}). When a pair $(i,j)\in V_1^\epsilon\times V_1^\epsilon$ is assigned by $\psi$ to a pair $(\psi(i),\psi(j))\in V_2^\epsilon\times V_2^\epsilon$, one of the following cases occurs:
% ---------------------------------------------------------------------
\begin{enumerate}
\item Edge substitution: $(\psi(i),\psi(j))\,{\in}\,E_2$ with $(i,j)\,{\in}\,E_1$.
\item Edge removal: $(\psi(i),\psi(j))\,{\not\in}\,E_2$ with $(i,j)\,{\in}\,E_1$.
\item Edge insertion: $(\psi(i),\psi(j))\,{\in}\,E_2$ with $(i,j)\,{\not\in}\,E_1$.
\item Finally $(\psi(i),\psi(j))\,{\not\in}\,E_2$ with $(i,j)\,{\not\in}\,E_1$ allows to complete the bijection property.
\end{enumerate}
% ---------------------------------------------------------------------
Each possible simultaneous mapping of nodes $i,j\,{\in}\,V_1^\epsilon$
onto respectively nodes $k$ and $l$ in $V_2^\epsilon$, is penalized by
a non-negative cost $d_{ik,jl}$ which depends on the underlying edit
operation described by one of the cases above. The overall edge's cost
associated to a simultaneous node assignment is then measured by:
% ---------------------------------------------------------------------
\begin{equation}
	d(\psi)=\sum_{i=1}^{n+m}\sum_{j=1}^{n+m}d_{i\psi(i),j\psi(j)},
\end{equation}
% ---------------------------------------------------------------------
where cost values are defined as follows.

Recall that all mappings from a node of $V_1^\epsilon$ to a node of $V_2^\epsilon$ are not allowed. Indeed (Section \ref{sec:line-sum-assignm}), $i\,{\rightarrow}\,\epsilon_j$ with $i\,{\in}\,V_1$ and $j\,{\not=}\,i$, and reciprocally $\epsilon_k\,{\rightarrow}\,l$ with $l\,{\in}\,V_2$ and $k\,{\not=}\,l$ are forbidden. Then, a simultaneous node mapping involving at least one of these two cases is also forbidden. We denote by $\not\rightarrow$ a forbidden mapping. As in Section \ref{sec-lsapeps}, the cost is set to a (large) value $\omega$ in this case.

For any other simultaneous node mapping $(i\,{\rightarrow}\,k,j\,{\rightarrow}\,l)$, with $i,j\,{\in}\,V_1^\epsilon$ and $k,l\,{\in}\,V_2^\epsilon$, its cost depends on the presence or the absence of edges $(i,j)\,{\in}\,E_1$ and $(k,l)\,{\in}\,E_2$:
% ---------------------------------------------------------------------
\begin{enumerate}
	\item[$\bullet$]If $(i,j)\,{\in}\,E_1$ and $(k,l)\,{\in}\,E_2$ then $d_{ik,jl}$ is the cost of the edge assignment $(i,j)\,{\rightarrow}\,(k,l)$, \textit{i.e.} edge substitution.
	\item[$\bullet$] If $(i,j)\,{\in}\,E_1$ and $(k,l)\,{\not\in}\,E_2$ then $d_{ik,jl}$ is the cost of removing the edge $(i,j)$.
	\item[$\bullet$] If $(i,j)\,{\not\in}\,E_1$ and $(k,l)\,{\in}\,E_2$ then $d_{ik,jl}$ is the cost of inserting the edge $(k,l)$.
	\item[$\bullet$] Else, the simultaneous mapping must not influence the overall cost and so its cost is always set to $0$.
\end{enumerate}
% ---------------------------------------------------------------------
By using the edit cost functions defined in Section~\ref{sec:mainDefEditDist}, the cost of an allowed simultaneous node mapping is then defined by
% ---------------------------------------------------------------------
\begin{equation}\label{eq-cikjl}
	\begin{aligned}
	c_e(i\,{\rightarrow}\,k,j\,{\rightarrow}\,l)=\,&\,c_{{es}}\left((i,j)\,{\rightarrow}\,(k,l)\right)\delta_{(i,j)\in E_1}\delta_{(k,l)\in E_2}\\
		&+c_{{ed}}\left(i,j\right)\delta_{(i,j)\in E_1}(1\,{-}\,\delta_{(k,l)\in E_2})\\
		&+c_{{ei}}\left(k,l\right)(1\,{-}\,\delta_{(i,j)\in E_1})\delta_{(k,l)\in E_2}
	\end{aligned}
\end{equation}
% ---------------------------------------------------------------------
where $\delta_{e\,{\in}\,E}\,{=}\,1$ if $e\,{\in}\,E$ and $0$ else, $(i,j)\,{\rightarrow}\,\epsilon$ denotes edge removal and $\epsilon\,{\rightarrow}\,(k,l)$ denotes edge insertion. Since graphs do not have self-loops, we have $d_{ik,ik}\,{=}\,0$ for all $i\,{\in}\,V_1^\epsilon$ and $k\,{\in}\,V_2^\epsilon$. Remark also that the symmetry of $c_e(i\rightarrow k,j\rightarrow l)$ depends on the one of edit operations and the one of directed edges when the two graphs are directed.

Finally, the cost of a simultaneous node mapping is given by
% ---------------------------------------------------------------------
\begin{equation}\label{eq-Dfinal}
	d_{ik,jl}=\left\{\begin{aligned}
		&\omega&&\text{if}~\,(i\,{\not\rightarrow}\,k)\vee(j\,{\nrightarrow}\,l)\\
		&c_e(i\,{\rightarrow}\,k,j\,{\rightarrow}\,l)&&\text{else}
	\end{aligned}\right.
\end{equation}
% ---------------------------------------------------------------------

Let $\mathbf{x}\,{\in}\,\text{vec}[\mathcal{A}_{n,m}^\sim]\,{\subset}\,\{0,1\}^{(n+m)^2}$ be the vectorization of the $\epsilon$-assignment matrix associated to $\psi$. All costs can be represented by a $(n\,{+}\,m)^2 \times (n\,{+}\,m)^2$ matrix $\mathbf{D}\,{=}\,(d_{ik,jl})_{i,k,j,l}$ such that $d_{ik,jl}x_{ik}x_{jl}\,{=}\,d_{i\psi(i),j\psi(j)}$ if $x_{ik}\,{=}\,x_{jl}\,{=}\,1$, and $0$ else. So each row and each column of $\mathbf{D}$, and $\mathbf{x}$, have the same organization of pairwise indices, and then the total cost of the simultaneous node assignment can be written in quadratic form as:
%---------------------------------------------------------------------------------------
\begin{equation*}
	d(\psi)=\sum_{i=1}^{n+m}\sum_{k=1}^{m+n}\sum_{j=1}^{n+m}\sum_{l=1}^{m+n} d_{ik,jl}x_{ik}x_{jl}=\mathbf{x}^T\mathbf{D}\mathbf{x},
\end{equation*}
%---------------------------------------------------------------------------------------
The cost matrix $\mathbf{D}$ can be decomposed as follows into blocks:
%---------------------------------------------------------------------------------------
\begin{equation}\label{eq:matrixD}
	\mathbf{D} = \left(\begin{array}{ccc|ccc}
	\mathbf{D}^{1,1}&\cdots&\mathbf{D}^{1,n}&\mathbf{D}^{1,\epsilon_1}&\cdots&\mathbf{D}^{1,\epsilon_m}\\
	\vdots & \ddots &\vdots&\vdots & \ddots &\vdots\\
	\mathbf{D}^{n,1} &\cdots & \mathbf{D}^{n,n}&\mathbf{D}^{n,\epsilon_1} &\cdots & \mathbf{D}^{n,\epsilon_m}\\\hline
	\mathbf{D}^{\epsilon_1,1}& \cdots & \mathbf{D}^{\epsilon_1,n}&\mathbf{D}^{\epsilon_1,\epsilon_1}&\cdots&\mathbf{D}^{\epsilon_1,\epsilon_m}\\
	\vdots & \ddots &\vdots&\vdots & \ddots &\vdots\\
	\mathbf{D}^{\epsilon_m,1} &\cdots & \mathbf{D}^{\epsilon_m,n}&\mathbf{D}^{\epsilon_m,\epsilon_1}&\cdots&\mathbf{D}^{\epsilon_m,\epsilon_m}
	\end{array}\right)
\end{equation}
%---------------------------------------------------------------------------------------
where each block $\mathbf{D}^{i,j}\,{\in}\,[0,{+\infty})^{(m+n)\times(m+n)}$ defines the cost of assigning $i$ and $j$ of $\,V_1^\epsilon$ to respectively $k$ and $l$ for all $k,l\,{\in}\,V_2^\epsilon$, \textit{i.e.} $[\mathbf{D}^{i,j}]_{k,l}\,{=}\,d_{ik,jl}$. Remark that blocks $\mathbf{D}^{i,j}$ are organized in four main blocks corresponding to the nature of nodes $i$ and $j$ (dummy nodes or not). Each block $\mathbf{D}^{i,j}$ is itself decomposed into four blocks as follows:
%---------------------------------------------------------------------------------------
\begin{equation}\label{eq:matrixDb}
	\mathbf{D}^{i,j} =\begin{blockarray}{@{\hspace{2pt}}c|c@{\hspace{2pt}}cc}
	j1\cdots jm&j\epsilon_1\cdots j\epsilon_n &&\\
	\begin{block}{(@{\hspace{2pt}}c|c)@{\hspace{2pt}}cc}\vspace{-0.15cm}
		&&&i1\\\vspace{-0.15cm}
		\mathbf{D}^{i,j}_{1,1}& \mathbf{D}^{i,j}_{1,2} &&\vdots\\
		&  &&im\\\cline{1-4}\vspace{-0.15cm}
		& &&i\epsilon_1\\\vspace{-0.15cm}
		\mathbf{D}^{i,j}_{2,1}&\mathbf{D}^{i,j}_{2,2}	&&\vdots\\\vspace{0.15cm}
		&  &&i\epsilon_n\\
	\end{block}
	\end{blockarray}
\end{equation}
%---------------------------------------------------------------------------------------
where $\mathbf{D}^{i,j}_{1,1}\,{\in}\,[0,{+\infty})^{m\times m}$, $\mathbf{D}^{i,j}_{1,2}\,{\in}\,[0,{+\infty})^{m\times n}$, $\mathbf{D}^{i,j}_{2,1}\,{\in}\,[0,{+\infty})^{n\times m}$ and $\mathbf{D}^{i,j}_{2,2}\,{\in}\,[0,{+\infty})^{n\times n}$. The different values taken by the elements of $\mathbf{D}$, depending on the values of the indices, are reported in Table~\ref{tab:qap_formulation}.
% ---------------------------------------------------------------------
\begin{table}[!t]
  \begin{tabular}{|r|r|l|l|}
	\hline
     case&block&nodes in $V_2^\epsilon$&\multicolumn{1}{|c|}{$d_{(i,j),(k,l)}$}\\
    \hline
    1&$\mathbf{D}^{i,j}_{1,1}$ &$k,l$&\begin{tabular}{l}
               $c_{es}((i,j)\,{\rightarrow}\,(k,l))\delta_{(i,j)\in E_1}\delta_{(k,l)\in E_2}\,+$\\
               $c_{ed}(i,j)\delta_{(i,j)\in E_1}(1-\delta_{(k,l)\in E_2})\,+$\\
               $c_{ei}(k,l)(1-\delta_{(i,j)\in E_1})\delta_{(k,l)\in E_2}$\end{tabular}\\
    \hline
    \multirow{2}{*}{2}&\multirow{2}{*}{$\mathbf{D}^{i,j}_{1,2}$}&$k,\epsilon_j$&\begin{tabular}{l}$c_{ed}(i,j)\delta_{(i,j)\in E_1}$\end{tabular}\\\cline{3-4}
    &&else&\begin{tabular}{l}$\omega$\end{tabular}\\\hline
    \multirow{2}{*}{3}&\multirow{2}{*}{$\mathbf{D}^{i,j}_{2,1}$}&$\epsilon_i,l$&\begin{tabular}{l}$c_{ed}(i,j)\delta_{(i,j)\in E_1}$\end{tabular}\\\cline{3-4}
    &&else&\begin{tabular}{l}$\omega$\end{tabular}\\\hline
    \multirow{2}{*}{4}&\multirow{2}{*}{$\mathbf{D}^{i,j}_{2,2}$}&$\epsilon_i,\epsilon_j$&\begin{tabular}{l}$c_{ed}(i,j)\delta_{(i,j)\in E_1}$\end{tabular}\\\cline{3-4}
    &&else&\begin{tabular}{l}$\omega$\end{tabular}\\
    \hline
    \multirow{2}{*}{5}&\multirow{2}{*}{$\mathbf{D}^{i,\epsilon_l}_{1,1}$}&$k,l$&\begin{tabular}{l}$c_{ei}(k,l)\delta_{(k,l)\in E_2}$\end{tabular}\\\cline{3-4}
    &&else&\begin{tabular}{l}$\omega$\end{tabular}\\
    \hline
    6&$\mathbf{D}^{i,\epsilon_l}_{1,2}$&$k,\epsilon$&\begin{tabular}{l}$0$\end{tabular}\\\hline
    \multirow{2}{*}{7}&\multirow{2}{*}{$\mathbf{D}^{i,\epsilon_l}_{2,1}$}&$\epsilon_i,l$&\begin{tabular}{l}$0$\end{tabular}\\\cline{3-4}
    &&else&\begin{tabular}{l}$\omega$\end{tabular}\\
    \hline
    \multirow{2}{*}{8}&\multirow{2}{*}{$\mathbf{D}^{i,\epsilon_l}_{2,2}$}&$\epsilon_i,\epsilon$&\begin{tabular}{l}$0$\end{tabular}\\\cline{3-4}
    &&else&\begin{tabular}{l}$\omega$\end{tabular}\\
    \hline
    \multirow{2}{*}{9}&\multirow{2}{*}{$\mathbf{D}_{1,1}^{\epsilon_k,j}$}&$k,l$&\begin{tabular}{l}$c_{ei}\left(k,l\right)\delta_{(k,l)\in E_2}$\end{tabular}\\\cline{3-4}
    &&else&\begin{tabular}{l}$\omega$\end{tabular}\\
    \hline
    \multirow{2}{*}{10}&\multirow{2}{*}{$\mathbf{D}_{1,2}^{\epsilon_k,j}$}&$k,\epsilon_j$&\begin{tabular}{l}$0$\end{tabular}\\\cline{3-4}
    &&else&\begin{tabular}{l}$\omega$\end{tabular}\\
    \hline
    11&$\mathbf{D}_{2,1}^{\epsilon_k,j}$&$\epsilon,l$&\begin{tabular}{l}$0$\end{tabular}\\
    \hline
    \multirow{2}{*}{12}&\multirow{2}{*}{$\mathbf{D}_{2,2}^{\epsilon_k,j}$}&$\epsilon,\epsilon_j$&\begin{tabular}{l}$0$\end{tabular}\\\cline{3-4}
    &&else&\begin{tabular}{l}$\omega$\end{tabular}\\
    \hline
    \multirow{2}{*}{13}&\multirow{2}{*}{$\mathbf{D}_{1,1}^{\epsilon_k,\epsilon_l}$}&$k,l$&\begin{tabular}{l}$c_{ei}\left(k,l\right)\delta_{(k,l)\in E_2}$\end{tabular}\\\cline{3-4}
    &&else&\begin{tabular}{l}$\omega$\end{tabular}\\
    \hline
    \multirow{2}{*}{14}&\multirow{2}{*}{$\mathbf{D}_{1,2}^{\epsilon_k,\epsilon_l}$}&$k,\epsilon$&\begin{tabular}{l}$0$\end{tabular}\\\cline{3-4}
    &&else&\begin{tabular}{l}$\omega$\end{tabular}\\
    \hline
    \multirow{2}{*}{15}&\multirow{2}{*}{$\mathbf{D}_{2,1}^{\epsilon_k,\epsilon_l}$}&$\epsilon,l$&\begin{tabular}{l}$0$\end{tabular}\\\cline{3-4}
    &&else&\begin{tabular}{l}$\omega$\end{tabular}\\
    \hline
    16&$\mathbf{D}_{2,2}^{\epsilon_k,\epsilon_l}$&$\epsilon,\epsilon$&\begin{tabular}{l}$0$\end{tabular}\\
    \hline
  \end{tabular}
\caption{Elements of matrix $D$ according to the configuration of its indices. We consider that $i,j\,{\in}\,V_1$, $k,l\,{\in}\,V_2$, $\epsilon_k,\epsilon_l\,{\in}\,\mathcal{E}_1$, and $\epsilon_i,\epsilon_j\,{\in}\,\mathcal{E}_2$. Epsilon values without indices mean any $\epsilon$-value.}\label{tab:qap_formulation}
\end{table}
%---------------------------------------------------------------------------------------
% ---------------------------------------------------------------------
\begin{proposition}\label{prop:Dsymm}
  If both $G_1$ and $G_2$ are undirected, then:
  \[
  \forall (i,k,j,l)\in V_1^\epsilon\times V_2^\epsilon\times V_1^\epsilon\times V_2^\epsilon,~~ d_{ik,jl}=d_{jl,ik}.
  \]
\end{proposition}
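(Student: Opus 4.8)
The plan is to argue directly from the two-case definition of $d_{ik,jl}$ in Eq.~\ref{eq-Dfinal}, and to show that the swap $(i,k)\leftrightarrow(j,l)$ preserves each case separately. Recall $i,j\,{\in}\,V_1^\epsilon$ and $k,l\,{\in}\,V_2^\epsilon$, so the swapped tuple $(j,l,i,k)$ still lies in the domain of the proposition.

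First I would dispose of the forbidden case. A simultaneous mapping is assigned the cost $\omega$ exactly when $(i\,{\nrightarrow}\,k)\vee(j\,{\nrightarrow}\,l)$; since $\vee$ is commutative this predicate is invariant under exchanging $(i,k)$ with $(j,l)$, hence $d_{ik,jl}=\omega$ if and only if $d_{jl,ik}=\omega$, and then the two values coincide. In the remaining (allowed) case we have $d_{ik,jl}=c_e(i\,{\rightarrow}\,k,j\,{\rightarrow}\,l)$ and $d_{jl,ik}=c_e(j\,{\rightarrow}\,l,i\,{\rightarrow}\,k)$, with $c_e$ given by Eq.~\ref{eq-cikjl}. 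The swap turns the ordered pair $(i,j)$ into $(j,i)$ and $(k,l)$ into $(l,k)$ and changes nothing else in that formula, so it suffices to check that each of the three summands is unaffected by reversing those two pairs. Here I invoke that $G_1$ and $G_2$ are undirected (Definition~\ref{def:undirectedgraph}): $(i,j)\,{\in}\,E_1\iff(j,i)\,{\in}\,E_1$ and $(k,l)\,{\in}\,E_2\iff(l,k)\,{\in}\,E_2$, whence $\delta_{(i,j)\in E_1}=\delta_{(j,i)\in E_1}$ and $\delta_{(k,l)\in E_2}=\delta_{(l,k)\in E_2}$; moreover, when these edges are present their labels satisfy $\nu_1(i,j)=\nu_1(j,i)$ and $\nu_2(k,l)=\nu_2(l,k)$, i.e.\ $(i,j)$ and $(j,i)$ (resp.\ $(k,l)$ and $(l,k)$) denote one and the same undirected edge — the same observation already used in Remark~\ref{rem:doubleCounting}. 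Consequently $c_{ed}(i,j)=c_{ed}(j,i)$, $c_{ei}(k,l)=c_{ei}(l,k)$ and $c_{es}((i,j)\,{\rightarrow}\,(k,l))=c_{es}((j,i)\,{\rightarrow}\,(l,k))$, since these edit costs are functions of the labels of the edges involved. Substituting into Eq.~\ref{eq-cikjl} and using commutativity of addition shows $c_e(j\,{\rightarrow}\,l,i\,{\rightarrow}\,k)=c_e(i\,{\rightarrow}\,k,j\,{\rightarrow}\,l)$, which closes the allowed case.

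The only point needing a word of care — and hence the closest thing to an obstacle — is the presence of dummy nodes. If at least one of $i,j$ is a dummy node then neither $(i,j)$ nor $(j,i)$ belongs to $E_1$, so the corresponding $\delta$'s vanish on both sides and no label identity is required; symmetrically for $k,l$ and $E_2$. Otherwise $i,j\,{\in}\,V_1$ and $k,l\,{\in}\,V_2$ and Definition~\ref{def:undirectedgraph} applies verbatim. I do not anticipate any genuine difficulty beyond this bookkeeping: the content of the statement is simply that, once edge orientation is immaterial, the cost in Eq.~\ref{eq-cikjl} depends on $(i,j)$ and $(k,l)$ only through the unordered pairs $\{i,j\}$ and $\{k,l\}$, and the outer predicate in Eq.~\ref{eq-Dfinal} is already symmetric in the two index pairs.
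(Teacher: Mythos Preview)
Your argument is correct and rests on the same three identities the paper uses, namely $\delta_{(i,j)\in E_1}=\delta_{(j,i)\in E_1}$, $\delta_{(k,l)\in E_2}=\delta_{(l,k)\in E_2}$, and $c_{es}((i,j)\rightarrow(k,l))=c_{es}((j,i)\rightarrow(l,k))$, $c_{ed}(i,j)=c_{ed}(j,i)$, $c_{ei}(k,l)=c_{ei}(l,k)$ in the undirected case. The execution differs: the paper establishes these identities and then, for the cases where at least one index is a dummy, appeals to Table~\ref{tab:qap_formulation} and checks that the swap $(i,k)\leftrightarrow(j,l)$ permutes the sixteen rows according to the cycle structure $(2,3)(4)(5,9)(6,11)(7,10)(8,12)(13)(14,15)(16)$, verifying each permuted pair gives the same value. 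You instead argue uniformly from Eq.~\ref{eq-Dfinal} and Eq.~\ref{eq-cikjl}: the $\omega$-branch is symmetric because $\vee$ is, and the $c_e$-branch is symmetric because each summand is, with dummies handled by the observation that the relevant $\delta$'s simply vanish. Your route is shorter and avoids the table enumeration; the paper's route has the advantage of making the block structure of $\mathbf{D}$ explicit, which is useful later when manipulating it. Both are complete.
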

% ---------------------------------------------------------------------
\begin{proof}
  If both $G_1$ and $G_2$ are undirected, then
  $\delta_{(i,j)\in E_1}=\delta_{(j,i)\in E_1}$,
  $\delta_{(k,l)\in E_2}=\delta_{(l,k)\in E_2}$ and :
  \[
  \left\{
    \begin{array}{lll}
      c_{es}((i,j)\rightarrow(k,l))&=& c_{es}((j,i)\rightarrow(l,k))\\
      c_{ed}(i,j)&=&      c_{ed}(j,i)\\
      c_{ei}(k,l)&=&      c_{ei}(l,k)\\
    \end{array}
    \right.
    \]
    Hence if none of $i,j,k$ or $l$ are equal to $\epsilon$, the first line of
    Table~\ref{tab:qap_formulation} remains unchanged. Moreover, if
    $\epsilon\in \{i,j,k,l\}$, then permuting indices $(i,k)$ and
    $(j,l)$ leads to the following permutations of the lines of
    Table~\ref{tab:qap_formulation} (after the appropriate renaming
    of variables):
    \[
    (2,3)(4)(5,9)(6,11)(7,10)(8,12)(13)(14,15)(16)
    \]
    One can check that in each case $d_{ik,jl}=d_{jl,ik}$.
\end{proof}
% ---------------------------------------------------------------------
\begin{remarque}
  If the rows of matrix $\mathbf{D}$ correspond to $(i,k)$ and the
  columns to $(j,l)$, then under the hypothesis of
  Proposition~\ref{prop:Dsymm}, $\mathbf{D}$ is symmetric and we get
  $\mathbf{D}^T=\mathbf{D}$.
\end{remarque}
% ---------------------------------------------------------------------

To fully represent edit operations we also need to consider the ones performed on nodes. This can be measured by the linear sum $\mathbf{c}^T\mathbf{x}$ defined in Section~\ref{sec-lsapeps}, where $\mathbf{c}=\text{vec}(\mathbf{C})\in[0,{+\infty})^{(n+m)^2}$ represents the cost of edit operations on nodes (Eq.~\ref{eq:cost-matrix}):
% ---------------------------------------------------------------------
\begin{equation}\label{eq-costlin}
	\begin{aligned}
	&c^\text{sub}_{i,k}=c_{{vs}}(i\,{\rightarrow}\,k)\\
	&c^\text{rem}_{i,k}=\left\lbrace\begin{aligned}
		&c_{{vd}}(i)&&\text{if}~k\,{=}\,i\\
		&\omega&&\text{else}
	\end{aligned}\right.\\
	&c^\text{ins}_{i,k}=\left\{\begin{aligned}
		&c_{{vi}}(k)&&\text{if}~i\,{=}\,k\\
		&\omega&&\text{else}
	\end{aligned}\right.
	\end{aligned}
\end{equation}
% ---------------------------------------------------------------------
% ---------------------------------------------------------------------
\subsection{QAP for $\epsilon$-assignments, restricted edit paths and GED}
% ---------------------------------------------------------------------
% ---------------------------------------------------------------------
According to the following result, summing the quadratic and the linear costs defined above leads to the cost of a restricted edit path.
% ---------------------------------------------------------------------
\begin{proposition}\label{prop:qap_ged}
  Let $\Delta=\mathbf{D}$ if both $G_1$ and $G_2$ are undirected and $\Delta=\mathbf{D}+\mathbf{D}^T$ else. Note that using Proposition~\ref{prop:Dsymm}, $\Delta$ is symmetric. Any non-infinite value of $\tfrac{1}{2}\mathbf{x}^T\Delta\mathbf{x}+\mathbf{c}^T\mathbf{x}$ corresponds
  to the cost of a minimal edit path. Conversely the cost of any
  minimal edit path may be written as $\tfrac{1}{2}\mathbf{x}^T\Delta\mathbf{x}+\mathbf{c}^T\mathbf{x}$ with the
  appropriate $\mathbf{x}$.
\end{proposition}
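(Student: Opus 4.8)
The plan is to compose bijections that are already available. By Proposition~\ref{prop:corresp_AmMappings}, an $\epsilon$-assignment matrix taken up to the equivalence of Definition~\ref{def:equiv_Am} is the same data as an injective partial function $\varphi:\hat{V}_1\to V_2$, and by Proposition~\ref{prop:edit_path_mapping} such a function is the same data as a restricted edit path $P$ between $G_1$ and $G_2$. So I would fix a vectorized $\epsilon$-assignment matrix $\mathbf{x}$ --- equivalently an $\epsilon$-assignment $\psi:V_1^\epsilon\to V_2^\epsilon$ --- let $P$ be the associated restricted edit path, and prove the single identity $\tfrac{1}{2}\mathbf{x}^T\Delta\mathbf{x}+\mathbf{c}^T\mathbf{x}=\gamma(P)$; the converse then follows by reading the bijections backwards, since a restricted edit path comes from exactly one equivalence class, any representative of which is built only from allowed mappings and hence gives a finite value. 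The ``non-infinite'' hypothesis is dealt with first: the values $\omega$ inserted in $\mathbf{c}$ (Eq.~\ref{eq-costlin}) and in $\mathbf{D}$ (Eq.~\ref{eq-Dfinal}) exceed every attainable finite total, so a finite value of $\tfrac{1}{2}\mathbf{x}^T\Delta\mathbf{x}+\mathbf{c}^T\mathbf{x}$ forces $\mathbf{x}$ to avoid every forbidden single mapping $i\nrightarrow k$, i.e.\ to be (equivalent to) a genuine $\epsilon$-assignment matrix; moreover within a class only the dummy-to-dummy part of $\mathbf{x}^T\mathbf{D}\mathbf{x}$ varies, and on allowed configurations it carries cost $0$, so the value depends only on the class.

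For the linear term I would write $\mathbf{c}^T\mathbf{x}=\sum_{i=1}^{n+m}c_{i,\psi(i)}$, split it into substitutions, removals and insertions as in Eq.~\ref{eq-lsappsi}, and substitute the explicit costs of Eq.~\ref{eq-costlin}; this gives
\[
\mathbf{c}^T\mathbf{x}=\sum_{i\in\hat{V}_1}c_{vs}(i\to\varphi(i))+\sum_{i\in V_1\setminus\hat{V}_1}c_{vd}(i)+\sum_{k\in V_2\setminus\hat{V}_2}c_{vi}(k),\qquad \hat{V}_2=\varphi(\hat{V}_1).
\]
By Proposition~\ref{prop:defSRIsets}, $V_1\setminus\hat{V}_1$ and $V_2\setminus\hat{V}_2$ are exactly the node sets removed and inserted by $P$, so this is precisely the node contribution to $\gamma(P)$ in Corollary~\ref{coro:costEditPath}.

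The quadratic term is the crux. Since $x_{ik}x_{jl}=1$ iff $k=\psi(i)$ and $l=\psi(j)$, we have $\mathbf{x}^T\mathbf{D}\mathbf{x}=\sum_{i,j}d_{i\psi(i),j\psi(j)}$, so $\tfrac{1}{2}\mathbf{x}^T\Delta\mathbf{x}$ equals $\tfrac{1}{2}\mathbf{x}^T\mathbf{D}\mathbf{x}$ when $G_1,G_2$ are undirected (then $\Delta=\mathbf{D}$, symmetric by Proposition~\ref{prop:Dsymm}) and equals $\mathbf{x}^T\mathbf{D}\mathbf{x}$ when they are directed (since $\mathbf{x}^T\mathbf{D}^T\mathbf{x}=\mathbf{x}^T\mathbf{D}\mathbf{x}$). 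I would then evaluate $\sum_{i,j}d_{i\psi(i),j\psi(j)}$ via Table~\ref{tab:qap_formulation}, classifying each ordered pair $(i,j)\in V_1^\epsilon\times V_1^\epsilon$ by whether $i$ (resp.\ $j$) is a substituted node, a removed node, or a dummy realizing an insertion (pairs whose two images are both dummies contribute $0$). For each pair the table yields $c_{es}((i,j)\to(\varphi(i),\varphi(j)))$ exactly when $(i,j)\in\hat{E}_1$, $c_{ed}(i,j)$ exactly when $(i,j)\in E_1\setminus\hat{E}_1$, $c_{ei}(\psi(i),\psi(j))$ exactly when $(\psi(i),\psi(j))\in E_2\setminus\hat{E}_2$, and $0$ otherwise --- the membership being exactly what Proposition~\ref{prop:defSRIsets} records, so that $R_{12}$ and $I_{21}$ are absorbed into the removed and inserted edge sets and are never double-counted as substitutions. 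Summing gives $\sum_{i,j}d_{i\psi(i),j\psi(j)}=\sum_{e\in\hat{E}_1}c_{es}(e\to\varphi(e))+\sum_{e\in E_1\setminus\hat{E}_1}c_{ed}(e)+\sum_{e\in E_2\setminus\hat{E}_2}c_{ei}(e)$, each directed edge counted once; in the directed case this is $\tfrac{1}{2}\mathbf{x}^T\Delta\mathbf{x}$ and is the edge part of $\gamma(P)$ in Corollary~\ref{coro:costEditPath}, while in the undirected case each edge is hit by both of its ordered pairs, so the factor $\tfrac{1}{2}$ restores the value and matches $\gamma_e(P)$ of Remark~\ref{rem:doubleCounting}. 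Adding the node and edge parts yields $\tfrac{1}{2}\mathbf{x}^T\Delta\mathbf{x}+\mathbf{c}^T\mathbf{x}=\gamma(P)$, and reading the chain of bijections backwards proves the converse (the case $\varphi=\varphi_0$, i.e.\ $\hat{V}_1=\hat{E}_1=\emptyset$, being covered by the same computation).

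The main obstacle is the bookkeeping in the quadratic term: one has to verify that the sixteen cases of Table~\ref{tab:qap_formulation}, read through Proposition~\ref{prop:defSRIsets}, attach to every ordered pair of $V_1^\epsilon$ exactly the cost of the corresponding edge operation and nothing spurious --- that a substituted edge is not re-counted as removed or inserted via $R_{12}$ or $I_{21}$, that the dummy-involving blocks contribute the insertion cost of each edge incident to an inserted node exactly once per ordered pair (hence twice per undirected edge), and that all remaining configurations genuinely carry cost $0$. Everything else is an assembly of results already in hand.
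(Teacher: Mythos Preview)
Your proposal is correct and follows essentially the same route as the paper: compose the bijections of Propositions~\ref{prop:edit_path_mapping} and~\ref{prop:corresp_AmMappings}, identify $\mathbf{c}^T\mathbf{x}$ with $\gamma_v(P)$ via Eq.~\ref{eq-lsappsi} and Eq.~\ref{eq-costlin}, and identify $\mathbf{x}^T\mathbf{D}\mathbf{x}$ with the edge-operation sum by classifying ordered pairs $(i,j)\in V_1^\epsilon\times V_1^\epsilon$ through Table~\ref{tab:qap_formulation} and Proposition~\ref{prop:defSRIsets}, finishing with Corollary~\ref{coro:costEditPath} and Remark~\ref{rem:doubleCounting} for the directed/undirected split. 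The paper carries out the same computation slightly more explicitly --- it first writes $[\mathbf{x}^T\mathbf{D}]_{jl}$ as a sum over the three blocks $\mathbf{Q},\mathbf{R},\mathbf{S}$, then multiplies by $\mathbf{x}$ again to get a $3\times 3$ grid of nine sub-sums, each identified via the table --- whereas you go straight to $\sum_{i,j}d_{i\psi(i),j\psi(j)}$ and classify pairs in one pass; but the logical content and the ingredients used are identical.
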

\begin{proof}
  If $\tfrac{1}{2}\mathbf{x}^T\Delta\mathbf{x}+\mathbf{c}^T\mathbf{x}$ is not infinite, $\mathbf{x}$ corresponds to a matrix of
  assignment (Definition~\ref{def:assMatrix}). Hence by
  propositions~\ref{prop:corresp_AmMappings}
  and~\ref{prop:edit_path_mapping}, there is a unique restricted edit
  path $P$ such that $P$ and $\mathbf{x}$ correspond to the same injective
  function $\varphi$ from a subset $\hat{V}_1$ of $V_1$ onto
  $V_2$. The edit path $P$ substitutes each node $v$, belonging to
  $\hat{V}_1$ onto $\varphi(v)$, removes all nodes belonging to
  $V_1\setminus\hat{V}_1$ and insert all nodes belonging to
  $V_2\setminus\varphi[\hat{V}_1]$ 
  (Proposition~\ref{prop:corresp_AmMappings}). 
  
	Consider the notations of the proof of Proposition~\ref{prop:corresp_AmMappings}, \textit{i.e.} the matrix $\mathbf{X}=(\mathbf{Q},\mathbf{R},\mathbf{S},\mathbf{T})$ defining $\mathbf{x}$. The three first blocks are defined as: 
  $q_{i,\varphi(i)}=1$ for any $i\in \hat{V}_1$, $r_{i,i}=1$
  for any $i\in V_1\setminus\hat{V}_1$ and $s_{j,j}=1$ for any
  $j\in V_2\setminus\varphi[\hat{V}_1]$. These blocks are filled with $0$
  outside these indices (Proposition~\ref{prop:edit_path_mapping}).

  Let us decompose the cost $\gamma(P)$ of
  $P$ into a cost $\gamma_v(P)$ related to operations on nodes and
  a cost $\gamma_e(P)$ related on operation on edges:
  \[
  \gamma(P)=\gamma_v(P)+\gamma_e(P)
  \]
  We have according to corollary~\ref{coro:costEditPath}:
  \[
  \gamma_v(P) =     \sum_{v\in V_1\setminus\hat{V}_1} c_{vd}(v)+
  \sum_{v\in \hat{V}_1} c_{vs}(v)+ \sum_{v\in V_2\setminus\hat{V}_2} c_{vi}(v)
  \]
  On the other hand, let us denote by $\mathbf{C}^\text{sub}$, $\mathbf{C}^\text{rem}$ and
  $\mathbf{C}^\text{ins}$ the blocks corresponding to $(\mathbf{Q},\mathbf{R},\mathbf{S})$ in the cost matrix $\mathbf{C}$ defining $\mathbf{c}$ (Section~\ref{sec-lsapeps}). The term $\mathbf{c}^T\mathbf{x}$ is equal to:
  \[
  \begin{aligned}
    \mathbf{c}^T\mathbf{x}&=\sum_{i=1}^n\sum_{j=1}^m c^\text{sub}_{i,j}q_{i,j}+
            \sum_{i=1}^n\sum_{j=1}^n c^\text{rem}_{i,j}r_{i,j}+
            \sum_{i=1}^m\sum_{j=1}^m c^\text{ins}_{i,j}s_{i,j}\\
    &=\sum_{i\in \hat{V}_1} c^\text{sub}_{i,\varphi(i)}+
        \sum_{i\in V_1\setminus\hat{V}_1} c^\text{rem}_{i,i}+
        \sum_{j\in V_2\setminus\varphi[\hat{V}_1]} c^\text{ins}_{j,j}\\
    &=\sum_{i\in \hat{V}_1} c_{vs}(i\rightarrow \varphi(i))+
        \sum_{i\in V_1\setminus\hat{V}_1} c_{vd}(i)+
        \sum_{j\in V_2\setminus\varphi[\hat{V}_1]} c_{vi}(j)
  \end{aligned}
  \]
  The first line takes into account that the block corresponding to
  $\mathbf{T}$ in matrix $\mathbf{C}$ is equal to $\mathbf{0}$. Thus, this block does not play any
  role in the computation of $\mathbf{c}^T\mathbf{x}$. The last line is due to the
  definition of the remaining blocks of $\mathbf{C}$
  (equation~\ref{eq-costlin}).

  We have thus $\gamma_v(P)=\mathbf{c}^T\mathbf{x}$. Let us show that
  $\gamma_e(P)=\tfrac{1}{2}\mathbf{x}^T\Delta\mathbf{x}$. We have:
  \[
  \begin{aligned}
  \left[\mathbf{x}^T\mathbf{D}\right]_{jl}&=\sum_{i,k}d_{ik,jl}x_{ik}=\sum_{i=1}^n\sum_{k=1}^m d_{ik,jl}q_{i,k}+
        \sum_{i=1}^n\sum_{k=1}^m d_{ik,jl}r_{i,k}+
        \sum_{k=1}^m\sum_{i=1}^n d_{ik,jl}s_{i,k}\\
    &=\sum_{i=1}^n\sum_{k=1}^m d_{ik,jl}q_{i,k}+
        \sum_{i=1}^n d_{i\epsilon,jl}r_{i,i}+
        \sum_{k=1}^m d_{\epsilon k,jl}s_{k,k}\\
    &=\sum_{i\in \hat{V}_1} d_{i\varphi(i),jl}+
        \sum_{i\in V_1\setminus\hat{V}_1} d_{i\epsilon,jl}+
        \sum_{k\in V_2\setminus\hat{V}_2} d_{\epsilon k,jl}
  \end{aligned}
\]
where $\hat{V_2}=\varphi[\hat{V}_1]$ (Proposition~\ref{prop:defSRIsets}). In the same way we have:
\[
  \mathbf{x}^T\mathbf{D}\mathbf{x}=\sum_{j,l}\left[\mathbf{x}^T\mathbf{D}\right]_{jl}x_{jl}=\sum_{j\in \hat{V}_1} \left[\mathbf{x}^T\mathbf{D}\right]_{j\varphi(j)}+
        \sum_{j\in V_1\setminus\hat{V}_1} \left[\mathbf{x}^T\mathbf{D}\right]_{j\epsilon}+
        \sum_{l\in V_2\setminus\hat{V}_2} \left[\mathbf{x}^T\mathbf{D}\right]_{\epsilon l}
\]
We have:
\[
  \begin{aligned}
  \sum_{j\in \hat{V}_1} \left[\mathbf{x}^T\mathbf{D}\right]_{j\varphi(j)}    
    &=\sum_{j\in \hat{V}_1}\left(\sum_{i\in \hat{V}_1} d_{i\varphi(i),j\varphi(j)}+
        \sum_{i\in V_1\setminus\hat{V}_1} d_{i\epsilon,j\varphi(j)}+
        \sum_{k\in V_2\setminus\hat{V}_2} d_{\epsilon k,j\varphi(j)}\right)\\
    &=\underset{(i,j)\in \hat{V}_1^2}{\sum} d_{i\varphi(i),j\varphi(j)}+
        \underset{(i,j)\in E_1\cap (V_1\setminus\hat{V}_1)\times \hat{V}_1}{\sum}\hspace*{-0.7cm}c_{ed}(i,j)~+\underset{(k,l)\in E_2\cap (V_2\setminus\hat{V}_2)\times \hat{V}_2}{\sum}\hspace*{-0.7cm}c_{ei}(k,l)\\
\\
    \sum_{j\in V_1\setminus\hat{V}_1} \left[\mathbf{x}^T\mathbf{D}\right]_{j\epsilon}&=
    \sum_{j\in V_1\setminus\hat{V}_1}\left(\sum_{i\in \hat{V}_1} d_{i\varphi(i),j\epsilon}+
        \sum_{i\in V_1\setminus\hat{V}_1} d_{i\epsilon,j\epsilon}+
        \sum_{k\in V_2\setminus\hat{V}_2} d_{\epsilon k,j\epsilon}\right)\\
    &=\underset{(i,j)\in E_1 \cap V_1\times (V_1\setminus\hat{V}_1)}{\sum}\hspace*{-0.7cm}c_{ed}(i,j)\\
\\
    \sum_{l\in V_2\setminus\hat{V}_2} \left[\mathbf{x}^T\mathbf{D}\right]_{\epsilon l}&=\sum_{l\in V_2\setminus\hat{V}_2} \left(\sum_{i\in \hat{V}_1} d_{i\varphi(i),\epsilon l}+
    \sum_{i\in V_1\setminus\hat{V}_1} d_{i\epsilon,\epsilon l}+
    \sum_{k\in V_2\setminus\hat{V}_2} d_{\epsilon k,\epsilon l}\right)\\
&=\underset{(k,l)\in E_2 \cap  V_2\times(V_2\setminus\hat{V}_2)}{\sum}\hspace*{-0.7cm}c_{ei}(k,l)\\
  \end{aligned}
\]
where all substitutions of $d_{ij,kl}$ by the corresponding removal
or insertion costs are deduced from
Table~\ref{tab:qap_formulation}.

Moreover, we have, using the notations of Proposition~\ref{prop:defSRIsets}:
\[
\begin{array}{lll}
  \{(i,j)\in \hat{V_1}^2~|~\delta_{(i,j)\in E_1}=\delta_{(\varphi(i),\varphi(j))\in E_2}=1\}&=&\hat{E_1}\\
    \{(i,j)\in \hat{V_1}^2~|~\delta_{(i,j)\in E_1}=1, \delta_{(\varphi(i),\varphi(j))\in E_2}=0\}&=&R_{12}\\
    \{(k,l)\in \hat{V_2}^2~|~\delta_{(\varphi^{-1}(k),\varphi^{-1}(l))\in E_1}=0, \delta_{(k,l)\in E_2}=1\}&=&I_{21}\\
\end{array}
\]
Hence:
\[
\sum_{j\in \hat{V}_1}\sum_{i\in \hat{V}_1} d_{i\varphi(i),j\varphi(j)} =
\sum_{(i,j)\in \hat{E}_1}c_{es}((i,j)\rightarrow (\varphi(i),\varphi(j)))+
\sum_{(i,j)\in R_{12}}c_{ed}(i,j)+
\sum_{(k,l)\in I_{21}}c_{ei}(k,l)
\]
Moreover we have:
\[
(V_1\setminus\hat{V_1})\times\hat{V_1}\cup V_1\times(V_1\setminus\hat{V_1})=(V_1\setminus\hat{V_1})\times V_1\cup V_1\times (V_1\setminus\hat{V_1})
\]
Indeed:
\[
\begin{array}{lll}
(V_1\setminus\hat{V_1})\times V_1\cup V_1\times (V_1\setminus\hat{V_1})  &=&
(V_1\setminus\hat{V_1})\times \hat{V_1}\cup
(V_1\setminus\hat{V_1})\times (V_1\setminus\hat{V_1})\cup
V_1\times (V_1\setminus\hat{V_1})\\
&=&(V_1\setminus\hat{V_1})\times \hat{V_1}\cup
V_1\times (V_1\setminus\hat{V_1})\\
\end{array}
\]
Hence by grouping appropriate terms we get:
\[
\begin{aligned}
  \mathbf{x}^T\mathbf{D}\mathbf{x}=&\sum_{(i,j)\in \hat{E}_1}c_{es}((i,j)\rightarrow (\varphi(i),\varphi(j)))\\
        &+\sum_{(i,j)\in E_1\cap \left((V_1\setminus\hat{V_1})\times V_1\cup V_1\times (V_1\setminus\hat{V_1})\right)\cup R_{12}}\hspace*{-2cm} c_{ed}(i,j)~~~~~~~~~+\sum_{(k,l)\in E_2\cap\left((V_2\setminus\hat{V_2})\times V_2\cup V_2\times (V_2\setminus\hat{V_2})\right)\cup I_{21}}\hspace*{-2cm}c_{ei}(k,l)\\
\end{aligned}
\]
Using Proposition~\ref{prop:defSRIsets} we finally get:
\[
  \mathbf{x}^T\mathbf{D}\mathbf{x}=\sum_{(i,j)\in \hat{E}_1}c_{es}((i,j)\rightarrow (\varphi(i),\varphi(j)))~+\sum_{(i,j)\in E_1\setminus\hat{E_1}}\hspace*{-0.3cm}c_{ed}(i,j)~+\sum_{(k,l)\in E_2\setminus\hat{E_2}}\hspace*{-0.3cm}c_{ei}(k,l)
\]
Using Corollary~\ref{coro:costEditPath} if both $G_1$ and $G_2$ are
directed we get $\gamma_e(P)=\mathbf{x}^T\mathbf{D}\mathbf{x}$. However in this case:
\[
\frac{1}{2}\mathbf{x}^T\Delta\mathbf{x}=\frac{1}{2}\mathbf{x}^T\left(\mathbf{D}+\mathbf{D}^T\right)\mathbf{x}=
\frac{1}{2}\left(\mathbf{x}^T\mathbf{D}\mathbf{x}+\mathbf{x}^T\mathbf{D}^T\mathbf{x}\right)=
\frac{1}{2}\left(\mathbf{x}^T\mathbf{D}\mathbf{x}+\mathbf{x}^T\mathbf{D}\mathbf{x}\right)=\mathbf{x}^T\mathbf{D}\mathbf{x}
\]
Hence $\gamma_e(P)=\frac{1}{2}\mathbf{x}^T\Delta\mathbf{x}$.

Using Remark~\ref{rem:doubleCounting} we get
$\gamma_e(P)=\frac{1}{2}\mathbf{x}^T\mathbf{D}\mathbf{x}=\frac{1}{2}\mathbf{x}^T\Delta\mathbf{x}$ if both $G_1$
and $G_2$ are undirected. 

Thus $\gamma(P)=\frac{1}{2}\mathbf{x}^T\Delta\mathbf{x}+\mathbf{c}^T\mathbf{x}$ with
$\Delta=\mathbf{D}$ if both $G_1$ and $G_2$ are undirected and $\Delta=\mathbf{D}+\mathbf{D}^T$
else.
\end{proof}
% ---------------------------------------------------------------------

~\\\indent
Hence, the determination of a restricted edit path with a minimal cost is equivalent to searching for an optimal $\epsilon$-assignment 
% ---------------------------------------------------------------------
\begin{equation}\label{eq:qap_formulation}
  \hat{\mathbf{x}} \in \argmin\left\{\,\frac{1}{2}\mathbf{x}^T\Delta\mathbf{x} + \mathbf{c}^T\mathbf{x}~|~\mathbf{x}\,{\in}\,\text{vec}[\mathcal{A}^\sim_{n,m}]\,\right\}
\end{equation}
% ---------------------------------------------------------------------
In other terms, for the class of graphs under consideration, \textit{i.e.} simple graphs, we have
% ---------------------------------------------------------------------
\begin{equation}
	\text{GED}(G_1,G_2)=\min\left\{\,\frac{1}{2}\mathbf{x}^T\Delta\mathbf{x} + \mathbf{c}^T\mathbf{x}~|~\mathbf{x}\,{\in}\,\text{vec}[\mathcal{A}^\sim_{n,m}]\,\right\}
\end{equation}
% ---------------------------------------------------------------------
This is a QAP, see \cite{Lowler1963,bur09} for more details on QAPs. In
particular, QAPs are NP-hard and exact algorithms can solve QAPs of small size only. So, 
many heuristics able to find suboptimal solutions in short computing time have been explored. %Note that even when a local optimal solution is provided, deciding if this solution is a global optimum is NP-complete \citep{pap88}.
% ---------------------------------------------------------------------
\begin{remarque}
Note that the functional involved in the QAP defined by Eq.~\ref{eq:qap_formulation} can be rewritten as a general quadratic term:
% ---------------------------------------------------------------------
\begin{equation}
	\frac{1}{2}\mathbf{x}^T\Delta\mathbf{x} + \mathbf{c}^T\mathbf{x}=\frac{1}{2}\mathbf{x}^T\Delta\mathbf{x} + \mathbf{x}^T\text{diag}(\mathbf{c})\mathbf{x}=\mathbf{x}^T\left(\frac{1}{2}\Delta+\text{diag}(\mathbf{c})\right)\mathbf{x}
\end{equation}
% ---------------------------------------------------------------------
where $\text{diag}(\mathbf{c})$ is the diagonal matrix with $\mathbf{c}$ as diagonal. So the GED can be equivalently defined by 
% ---------------------------------------------------------------------
\begin{equation}
	\text{GED}(G_1,G_2)=\min\left\{\,\mathbf{x}^T\overline{\Delta}\mathbf{x}~|~\mathbf{x}\,{\in}\,\text{vec}[\mathcal{A}^\sim_{n,m}]\,\right\}
\end{equation}
% ---------------------------------------------------------------------
where $\overline{\Delta}\,{=}\,\tfrac{1}{2}\Delta\,{+}\,\text{diag}(\mathbf{c})$ represents the cost of both node and edge edit operations. As graphs are simple, they have no self-loops and then the diagonal elements of $\Delta$ are all equal to $0$. So the diagonal of $\overline{\Delta}$ is always equal to $\mathbf{c}$.
\end{remarque}

\section{Solving QAPs with the Integer Projected Fixed Point Algorithm}
\label{sec:qap_solvers}
We propose to compute an approximate GED by finding a solution of the QAP defined by Eq.~\ref{eq:qap_formulation}, and rewritten here as the following binary quadratic programming problem:
%-----------------------------------------------------------------
\begin{equation}\label{eq-qapsym}
	\argmin\left\{S(\mathbf{x})\overset{\text{def.}}{=}\frac{1}{2}\mathbf{x}^T\Delta\mathbf{x}+\mathbf{c}^T\mathbf{x}~|~\mathbf{A}\mathbf{x}\,{=}\,\mathbf{1}_n,~\mathbf{x}\,{\in}\,\{0,1\}^{n}\right\}
\end{equation}
%-----------------------------------------------------------------
where $\mathbf{A}\mathbf{x}\,{=}\,\mathbf{1}_n$, with $\mathbf{x}\,{\in}\,\{0,1\}^n$ and $\mathbf{A}\,{\in}\,\{0,1\}^{n\times n}$, is the matrix version of the bijectivity constraints given by Eq.~\ref{eq-cts-mtx-perm}, see \cite{bur09,sier01} for more details. Also, we suppose that $\mathbf{c}\,{\in}\,[0,{+\infty})^n$, and $\Delta\,{\in}\,[0,{+\infty})^{n\times n}$ is assumed to be symmetric. Note that Eq.~\ref{eq:qap_formulation} is equivalent to Eq.~\ref{eq-qapsym}, with additional constraints on $\mathbf{x}$ (Eq.~\ref{eq-cst-eps}) imposed by $\omega$ values in the expression of $\Delta$ (Eq.~\ref{eq-Dfinal} and Proposition~\ref{prop:qap_ged}) and $\mathbf{c}$ (Eq.~\ref{eq-costlin}).

QAPs are generally NP-hard, which depends on the structure of the cost matrix $\overline{\Delta}\,{=}\,\tfrac{1}{2}\mathbf{\Delta}\,{+}\,\text{diag}(\mathbf{c})$ (see previous section), and so most algorithms find approximate local or global optimal solutions by relaxing the bijectivity constraints on the solution, which leads to find a continuous solution instead of a discrete one:
%-----------------------------------------------------------------
\begin{equation}\label{eq-relaxed}
\argmin\left\{S(\mathbf{x})~|~\mathbf{A}\mathbf{x}\,{=}\,1,~\mathbf{x}\,{\in}\,[0,{+\infty})^n\right\}.
\end{equation}
%-----------------------------------------------------------------
While this relaxed problem is also NP-hard, several polynomial-time
algorithms have been designed to converge close to a local or global
solution in a short computing time. The ones based on linearization of
the cost function $S$ are known to be particularly efficient. They
transform the relaxed problem into a sequence of convex problems, such
that a given initial solution is improved iteratively by decreasing
the cost function up to a fixed point. Then, the final continuous
solution is binarized and used as a solution of the QAP. But as shown
experimentally in \cite{Leordeanu2009} in the context of graph
matching, the continuous optimum is not necessarily close to the
global discrete optimum.  To try to overcome this
problem, it seems to be more efficient to try to find a discrete
solution as close as possible to a continuous one, at each iteration,
as done by Sof-Assign \cite{Gold1994,Gold1996,Gold1996a} or Integer
Projected Fixed Point (IPFP) \cite{Leordeanu2009}.

We present here an adaptation of the IPFP algorithm, originally proposed for maximization of a quadratic term \cite{Leordeanu2009}, to minimization. We also improve the computational complexity of several steps of the algorithm.

Given an initial continuous (or discrete) candidate solution $\mathbf{x}_0$,  The idea of \cite{Leordeanu2009} is to iteratively improve (here reduce) the corresponding quadratic cost in two steps at each iteration:
%-----------------------------------------------------------------
\begin{enumerate}
\item Compute a discrete linear approximation $\mathbf{b}_{k+1}$ of the quadratic cost $S$ around the current solution $\mathbf{x}_k$ by solving a LSAP.
\item Compute the next candidate solution $\mathbf{x}_{k+1}$ by
  solving the relaxed problem, reduced to compute the extremum of $S$
  between $\mathbf{x}_{k}$ and $\mathbf{b}_{k+1}$ included.
\end{enumerate}
%-----------------------------------------------------------------
The iteration of these steps converges to an optimum of the relaxed
problem, which is either continuous or discrete but generally not the global one. This last point depends on the initialization. The whole process is
detailed in Algorithm~\ref{alg:ipfp_opt}.

%-----------------------------------------------------------------
\begin{algorithm}[!t]
\begin{algorithmic}[1]
\State $k\leftarrow 0$,~~$L\leftarrow\mathbf{c}^T\mathbf{x}_0$,~~$S_k\leftarrow\tfrac{1}{2}\mathbf{x}_0^T\Delta\mathbf{x}_0 + L$
\Repeat
\State {\color{gray}//~\,Projection of the cost by solving a LSAP}
\State $\mathbf{b}_{k+1}\leftarrow \argmin\{(\mathbf{x}_k^T\Delta+\mathbf{c}^T)\mathbf{b}~|~\mathbf{b}\,{\in}\,\mathcal{A}_{n,m}^\sim \}$
\State {\color{gray}//~\,Minimize the quadratic cost along the direction $\mathbf{b}_{k+1}$}
\State $L^\prime\leftarrow \mathbf{c}^T\mathbf{b}_{k+1}$
\State $S_{\mathbf{b}_{k+1}} \leftarrow \tfrac{1}{2}\mathbf{b}_{k+1}^T \Delta \mathbf{b}_{k+1} + L^\prime$
\State $\alpha\leftarrow R(\mathbf{b}_{k+1})-2S_k+L$
\State $\beta\leftarrow S_{\mathbf{b}_{k+1}}+S_k-R(\mathbf{b}_{k+1})-L$
\State $t_0\leftarrow -\alpha/(2\beta)$
\If{$(\beta \leq 0)\vee(t_0\geq 1)$} 
\State $\mathbf{x}_{k+1} \leftarrow \mathbf{b}_{k+1}$,~~$S_{k+1} \leftarrow S_{\mathbf{b}_{k+1}}$,~~$L\leftarrow L^\prime$
\Else
\State $\mathbf{x}_{k+1} \leftarrow \mathbf{x}_k + t_0(\mathbf{b}_{k+1}-\mathbf{x}_k)$
\State $S_{k+1} \leftarrow S_k -\alpha^2/(4\beta)$
\State $L\leftarrow \mathbf{c}^T\mathbf{x}_{k+1}$
\EndIf
\State $k \leftarrow k + 1$
\Until{$(\mathbf{x}_{k+1} =\mathbf{x}_k)\vee(k \geq k_\text{max})$}
\State \textbf{if}~{$x_{k+1}=b_{k+1}$}~\textbf{then~return}~$(\mathbf{x}_{k+1},S_{k+1})$
\State $\mathbf{x}_{k+1}\leftarrow \argmin\{\mathbf{x}_{k+1}^T\mathbf{b}~|~\mathbf{b}\,{\in}\,\mathcal{A}_{n,m}^\sim \}$
\State \textbf{return}~$(\mathbf{x}_{k+1},S_{k+1})$
\end{algorithmic}
\label{alg:ipfp_opt}
\caption{~~IPFPmin$(\mathbf{x}_0,\mathbf{c},\Delta,k_\text{max})$}
\end{algorithm}
%-----------------------------------------------------------------
At each iteration, in the first step, the cost $S$ is linearly approximated. The differential of $S$ in $\mathbf{x}_k$ in the direction $\mathbf{h}$ is given by:
\[
DS(\mathbf{x}_k)\cdot\mathbf{h}=\mathbf{x}_k^T\Delta\mathbf{h}+\mathbf{c}^T\mathbf{h}~~~~\text{ (since }\Delta\text{ is symmetric).}
\]
Hence the first-order Taylor expansion of $S$ around the current solution $\mathbf{x}_k$ is given by:
%-----------------------------------------------------------------
\begin{equation}
\begin{aligned}
	S(\mathbf{b})&\approx S(\mathbf{x}_k)+\left(\mathbf{x}_k^T\Delta+\mathbf{c}^T\right)(\mathbf{b}\,{-}\,\mathbf{x}_k)\\
	&\approx S(\mathbf{x}_k)+R(\mathbf{b})-R(\mathbf{x}_k)
	\end{aligned}
\end{equation}
%-----------------------------------------------------------------
where $R(\mathbf{x})=(\mathbf{x}_k^T\Delta\,{+}\,\mathbf{c}^T)\mathbf{x}$ and $\mathbf{b}\,{\geq}\,\mathbf{0}$. Keeping $\mathbf{x}_k$ fixed, $S(\mathbf{x}_k)$ and $R(\mathbf{x}_k)$ are constant, and so the minimization of $S(\mathbf{b})$ is approximatively equivalent to the minimization of $R(\mathbf{b})$:
%-----------------------------------------------------------------
\begin{equation}
	\mathbf{b}_{k+1}\,{\in}\,\argmin\left\{\left(\mathbf{x}^T_k\Delta\,{+}\,\mathbf{c}^T\right)\mathbf{b}~|~\mathbf{A}\mathbf{b}\,{=}\,\mathbf{1},~\mathbf{b}\,{\geq}\,\mathbf{0}_n\right\}.
\end{equation}
%-----------------------------------------------------------------
This is a linear programming problem with totally unimodular constraint matrix $\mathbf{A}$ and the right-hand side vector of the linear system $\mathbf{A}\mathbf{b}\,{=}\,\mathbf{1}$ is integer valued. So, by standard tools in linear programming, there is an integer optimal solution, here binary and equal to the solution of the LSAP \cite{sier01,bur09}
%-----------------------------------------------------------------
\begin{equation}
	\mathbf{b}_{k+1}\,{\in}\,\argmin\left\{\left(\mathbf{x}^T_k\Delta\,{+}\,\mathbf{c}^T\right)\mathbf{b}~|~\mathbf{A}\mathbf{b}\,{=}\,\mathbf{1},~\mathbf{b}\,{\in}\,\{0,1\}^{n\times n}\right\}.
\end{equation}
%-----------------------------------------------------------------
In our experiments, this problem is solved by the $O(n^3)$ version of the Hungarian algorithm \cite{law76,bur09}, modified such that forbidden assignments represented by $\omega$ values in $\mathbf{\Delta}$ and $\mathbf{c}$ are not treated. The resulting assignment $\mathbf{b}_{k+1}$ determines a direction of largest possible decrease of $S$ in the first-order approximation. Let us additionally note that the first order approximation of $S(\mathbf{b})$
is lower than $S(\mathbf{x}_k)$ since $R(\mathbf{b}_{k+1})\leq R(\mathbf{x}_k)$. However we cannot yet conclude since this is only an approximation.

The second step of each iteration of Algorithm~\ref{alg:ipfp_opt} consists in minimizing the quadratic function $S$ in the continuous domain along the direction given by $\mathbf{b}_{k+1}$. This can be done analytically. Let $\mathbf{x}_t\,{=}\,\mathbf{x}_k\,{+}\,t(\mathbf{b}_{k+1}\,{-}\,\mathbf{x}_k)$, with $t\,{\in}\,[0,1]$, be a parameterization of the segment between $\mathbf{x}_{k}$ and $\mathbf{b}_{k+1}$. The evolution
of $S$ on this segment is provided by:
%-----------------------------------------------------------------
\begin{equation*}
\begin{aligned}
S(\mathbf{x}_t)=&~S(\mathbf{x}_k\,{+}\,t(\mathbf{b}_{k+1}\,{-}\,\mathbf{x}_k))\\
=&~\tfrac{1}{2}[\mathbf{x}_k+t(\mathbf{b}_{k+1}-\mathbf{x}_k)]^T\Delta{}[\mathbf{x}_k+t(\mathbf{b}_{k+1}-\mathbf{x}_k)]+\mathbf{c}^T[\mathbf{x}_k+t(\mathbf{b}_{k+1}-\mathbf{x}_k)]\\        
        =&~\tfrac{1}{2}\mathbf{x}_k^T\Delta{}\mathbf{x}_k+\mathbf{c}^T\mathbf{x}_k+t\mathbf{x}_k^T\Delta{}(\mathbf{b}_{k+1}-\mathbf{x}_k)+
        \tfrac{1}{2}t^2(\mathbf{b}_{k+1}-\mathbf{x}_k)^T\Delta{}(\mathbf{b}_{k+1}-\mathbf{x}_k)\\
        &~+t\mathbf{c}^T(\mathbf{b}_{k+1}-\mathbf{x}_k)\\
        =&~S(\mathbf{x}_k)+t[\mathbf{x}_k^T\Delta{}(\mathbf{b}_{k+1}-\mathbf{x}_k)+\mathbf{c}^T(\mathbf{b}_{k+1}-\mathbf{x}_k)]+
        \tfrac{1}{2}t^2(\mathbf{b}_{k+1}-\mathbf{x}_k)^T\Delta(\mathbf{b}_{k+1}-\mathbf{x}_k)\\
        =&~S(\mathbf{x}_k)+t R(\mathbf{b}_{k+1}-\mathbf{x}_k)+\tfrac{1}{2}t^2(\mathbf{b}_{k+1}-\mathbf{x}_k)^T\Delta(\mathbf{b}_{k+1}-\mathbf{x}_k)\\
=&~S(\mathbf{x}_k)+\alpha t+\beta t^2
\end{aligned}
\end{equation*}
%-----------------------------------------------------------------
where
%-----------------------------------------------------------------
\begin{equation*}
  \begin{aligned}
    \alpha=&~R(\mathbf{b}_{k+1}-\mathbf{x}_k)=R(\mathbf{b}_{k+1})-R(\mathbf{x}_k)\leq 0\\
  =&~R(\mathbf{b}_{k+1})-\mathbf{x}_k^T\Delta{}\mathbf{x}_k-\mathbf{c}^T\mathbf{x}_k=R(\mathbf{b}_{k+1})-2(\tfrac{1}{2}\mathbf{x}_k^T\Delta{}\mathbf{x}_k+\mathbf{c}^T\mathbf{x}_k)+\mathbf{c}^T\mathbf{x}_k\\
    =& ~R(\mathbf{b}_{k+1})-2S(\mathbf{x}_k)+\mathbf{c}^T\mathbf{x}_k\\
    \beta=&~\tfrac{1}{2}(\mathbf{b}_{k+1}-\mathbf{x}_k)^T\Delta(\mathbf{b}_{k+1}-\mathbf{x}_k)\\
    =&~\tfrac{1}{2}\mathbf{b}_{k+1}^T\Delta{}\mathbf{b}_{k+1}-\mathbf{x}_k^T\Delta{}\mathbf{b}_{k+1}+\tfrac{1}{2}\mathbf{x}_k^T\Delta{}\mathbf{x}_k\\
     =&~\tfrac{1}{2}\mathbf{b}_{k+1}^T\Delta{}\mathbf{b}_{k+1}+\mathbf{c}^T\mathbf{b}_{k+1}-R(\mathbf{b}_{k+1})+\tfrac{1}{2}\mathbf{x}_k^T\Delta{}\mathbf{x}_k\\
    =&~S(\mathbf{b}_{k+1})+S(\mathbf{x}_k)-R(\mathbf{b}_{k+1})-\mathbf{c}^T\mathbf{x}_k
  \end{aligned}
\end{equation*}
%-----------------------------------------------------------------
The main advantage of the above expression for the calculation of
$\alpha$, and not used in the original algorithm \cite{Leordeanu2009},
is that $R(\mathbf{b}_{k+1})$ is already computed by the LSAP
algorithm that computes $\mathbf{b}_{k+1}$. Moreover $S(\mathbf{x}_k)$
and $\mathbf{c}^T\mathbf{x}_k$ may be stored from the previous
iteration of the algorithm. Note that since
$\alpha=R(\mathbf{b}_{k+1})-R(\mathbf{x}_k)$ we have $\alpha\leq
0$. For $\beta$, the improvement is a bit more tedious. We have indeed
to compute $S(\mathbf{b}_{k+1})$. Hence the gain is not obvious
compared to the direct computation of
$(\mathbf{b}_{k+1}-\mathbf{x}_k)^T\Delta(\mathbf{b}_{k+1}-\mathbf{x}_k)$
as done in the original algorithm \cite{Leordeanu2009}. Note however
that $S(\mathbf{b}_{k+1})$ will also be used in a following step, so
computations are factorized.
%-----------------------------------------------------------------
The problem is thus transformed into finding the optimal value
%-----------------------------------------------------------------
\begin{equation}
	t_0=\argmin\left\{S(\mathbf{x}_t)=S(\mathbf{x}_k)+\alpha t+\beta t^2~|~t\,{\in}\,[0,1]\right\}.
\end{equation}
%-----------------------------------------------------------------
The derivative of $S(\mathbf{x}_t)$ cancels at $t_0=-\alpha/(2\beta)$. Then as shown in Fig.~\ref{fig:cases}, we have:
%-----------------------------------------------------------------
\begin{itemize}
\item If $\beta > 0$ 
  \begin{itemize}
  \item If $t_0\leq 1$, $S(\mathbf{x}_{t_0})$ is the minimum of $S(\mathbf{x}_t)$, in
    particular it is lower than $S(\mathbf{x}_k)$ and $S(\mathbf{b}_{k+1})$. Moreover:
    \[
    S(\mathbf{x}_{t_0})=S(\mathbf{x}_k)-\frac{\alpha^2}{2\beta}+\frac{\alpha^2}{4\beta}=S(\mathbf{x}_k)-\frac{\alpha^2}{4\beta}
    \]
  \item If $t_0\geq 1$, then $S'(\mathbf{x}_t)<0$ $\forall t\in [0,1]$, and the minimal value of $S(\mathbf{x}_t)$ is $S(\mathbf{x}_1)=S(\mathbf{b}_{k+1})$.
  \end{itemize}
\item   If $\beta\leq 0$, since $\alpha\leq 0$, $S(\mathbf{x}_t)$ decreases between
  $t=0$ and $t=1$. Its minimal value is thus $S(\mathbf{x}_1)=S(\mathbf{b}_{k+1})$.
\end{itemize}
%-----------------------------------------------------------------
So, if either $\beta\,{<}\,0$ or $\beta\,{>}\,0$, but $t_0\,{\geq}\,1$, the minima of $S(\mathbf{x}_t)$ within the range $t\,{\in}\,[0,1]$ is
obtained for $t_0\,{=}\,1$, \textit{i.e.} $\mathbf{x}_{t_0}\,{=}\,\mathbf{b}_{k+1}$ (lines 11-12). Note that in this case the new solution is discrete. In the remaining
case (lines 13-16), $S(\mathbf{x}_t)$ passes by a minimal value lower than $S(\mathbf{x}_k)$ and $S(\mathbf{b}_{k+1})$. In both cases $\mathbf{x}_{t_0}$ is taken as the solution $\mathbf{x}_{k+1}$ for the next iteration. Hence, as in the original algorithm \cite{Leordeanu2009}, $S(\mathbf{x}_k)$ decreases at each iteration, and since $\Delta$ and $\mathbf{c}$ are positive, $S$ is bounded bellow $0$ and the algorithm converges.
%-----------------------------------------------------------------
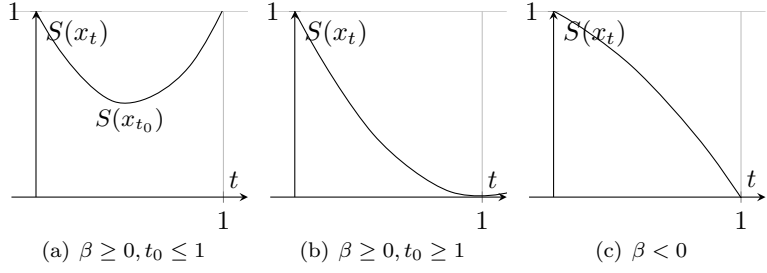
\begin{figure}[!t]
  \centering
  \subfigure[$\beta\geq 0, t_0\leq 1$]{\begin{tikzpicture}[baseline]
    \begin{axis}[ axis y line=center, axis x line=middle, axis equal,
      grid=both, xmax=1,xmin=0, ymin=0,ymax=1,
      xlabel=$t$,ylabel=$S(x_t)$, xtick={0,...,1}, ytick={0,...,1},
      width=.34\textwidth, anchor=center, ]
      \addplot[smooth] {2*x^2-2*x+1} ; 
      \node[above] at (axis cs:0.5,0.3){\small $S(x_{t_0})$};
    \end{axis}
  \end{tikzpicture}}
  \subfigure[$\beta\geq 0, t_0\geq 1$]{\begin{tikzpicture}[baseline]
    \begin{axis}[ axis y line=center, axis x line=middle, axis equal,
      grid=both, xmax=1,xmin=0, ymin=0,ymax=1,
      xlabel=$t$,ylabel=$S(x_t)$, xtick={0,...,1}, ytick={0,...,1},
      width=.34\textwidth, anchor=center, ]
      \addplot[smooth]{x^2-2*x+1} ; 
    \end{axis}
  \end{tikzpicture}}
  \subfigure[$\beta<0$]{\begin{tikzpicture}[baseline]
    \begin{axis}[ axis y line=center, axis x line=middle, axis equal,
      grid=both, xmax=1,xmin=0, ymin=0,ymax=1,
      xlabel=$t$,ylabel=$S(x_t)$, xtick={0,...,10}, ytick={0,...,10},
      width=.34\textwidth, anchor=center, ]
      \addplot[smooth] {-0.5*x^2-0.5*x+1} ;
    \end{axis}
  \end{tikzpicture}}
  \caption{Illustration of the 3 cases relating $\beta$ and $t_0$.}
\label{fig:cases}
\end{figure}
%-----------------------------------------------------------------

The whole process is iterated until a fixed point is reached, in which
case $\mathbf{x}_{k+1}$ is ensured to be a minimum of the relaxed
problem defined by Eq.~\ref{eq-relaxed}. When a minimum of the
original problem defined by Eq.~\ref{eq-qapsym} is requested as for
the GED, the discrete vector closest to the minimum of the relaxed
problem is selected. Note that the method~\cite{Justice2006} does not
guarantee the solution to be binary. The minimum of the relaxed
problem is not guaranteed to be global. This depends on the initial
vector $x_0$, which influences both the value of the resulting cost
and the number of iterations required to reach the convergence. For
the approximation of the GED, we have tested several initializations
based on the LSAP, as described in the following section. We have
observed that the exact GED is often obtained, meaning that the
optimal solution of the original problem can be reached by the
algorithm.

%%% Local Variables:
%%% mode: plain-tex
%%% TeX-master: "technical_report_ged"
%%% End:

\section{Experiments}
\label{sec:experiments}
In this section, we present some experimentation results to show the
effectiveness of our quadratic assignment approach with respect
to the ones based on LSAPs. To this purpose, we compare our
approach to three methods based on the LSAP and one method to compute the
exact graph edit distance based on $A^*$ algorithm.
The latter method is used as a baseline to measure the error induced
by approximations. However, $A^*$ is restricted to very simple graphs
and it wasn't possible to compute exact graph edit distances for
two over the four used datasets. The three LSAP methods, \textit{i.e.} bipartite GEDs, are the ones proposed in \cite{Riesen2009,Gauzere2014a,Carletti2015}. They differ on the definition of the cost between elements, as already discussed. 

\begin{table}[!t]
\caption{Characteristics of the four GREYC's chemistry datasets.}\label{tab:datasets}
\centering
\begin{small}
\begin{tabular}{|l|c|c|c|c|c|}
\hline
\emph{Dataset} & \emph{Number of graphs} & \emph{Avg Size} & 
\emph{Avg Degree} \\
\hline \hline
$Alkane$		& 150 & 8.9 & 1.8 \\
\hline
$Acyclic$	& 183 & 8.2 & 1.8 \\
\hline
$MAO$		& 68 & 18.4 & 2.1 \\
\hline
$PAH$		& 94 & 20.7 & 2.4 \\
\hline
\end{tabular}
\end{small}
\end{table}
The experiments have been performed on four chemoinformatics datasets\footnote{Available at \url{https://brunl01.users.greyc.fr/CHEMISTRY/}}
(see Table~\ref{tab:datasets} for their characteristics) plus one
synthetic dataset to test larger graphs.  The variety of these
datasets allows to see the behavior of different approaches on four kind
of graphs: acyclic labeled (Acyclic), acyclic not labeled (Alkane),
cyclic labeled (MAO), cyclic not labeled (PAH). Finally, the synthetic
dataset is composed of graphs having same characteristics as the ones
in MAO dataset but extended up to 100 nodes (details bellow).

Table~\ref{tab:results} shows the results of our experiments on the
four GREYC's datasets. Note that, in order to avoid some bias, these
results have been computed using random permutations of the adjacency
matrices before computing graph edit distances. Moreover, all
experiments have been computed on the same machine, hence providing
comparable computational times. In order to show the relevancy of our
proposal, we compute the average edit distance ($d$), the average
approximation error ($e$) with respect to the exact graph edit
distance and the average computational time ($t$) required to get the
graph edit distance for a pair of graphs. For these three measures,
lower values correspond to better results. Indeed, since approximation
approaches overestimate graph edit distance, a lower
average edit distance can be considered as better than an higher
one. Due to the computational complexity required by A* algorithm, the
exact graph edit distance has not been computed on PAH and MAO
datasets which are composed of larger graphs than the ones in Acyclic
and Alkane datasets. % The same observation stands for the computation
% of Soft-Assign.

\begin{table*}
\caption{Accuracy and complexity scores. $d$ is the average edit distance, $e$ the average error and
  $t$ the average computational time.}\label{tab:results}
\centering
\begin{small}
\begin{tabular}{lllllllllll}
  \toprule
  \multirow{2}{*}{Algorithm} & \multicolumn{3}{c}{Alkane} & \multicolumn{3}{c}{Acyclic}
  & \multicolumn{2}{c}{MAO} & \multicolumn{2}{c}{PAH}\\
  \cline{2-11} & $d$ & $e$ & $t$& $d$ & $e$ & $t$& $d$ & $t$& $d$ & $t$\\
  \midrule
  $A^*$&15  &&1.29  &17  &  &6.02 & & & &\\
  \cite{Riesen2009}&35  &18&$\simeq 10^{-3}$  &35  &18 &$\simeq 10^{-3}$ &105 &$\simeq 10^{-3}$ &138 &$\simeq 10^{-3}$\\
  \cite{Gauzere2014a}&33  &18&$\simeq 10^{-3}$  &31  &14 &$\simeq 10^{-2}$ &49 &$\simeq 10^{-2}$ &120 &$\simeq 10^{-2}$\\
  \cite{Carletti2015} &26  &11&2.27  &28  &9 &0.73 &44 &6.16  &129 &2.01\\
  \hline
  $IPFP_{\text{Random init}}$& 22.6 & 7.1& 0.007 &  23.4  & 6.1 &0.006& 65.2 &  0.031 &63 &0.04 \\
  $IPFP_{\text{init~\cite{Riesen2009} }}$& 22.4 & 7.0&  0.007 &  22.6  & 5.3
                     &0.006 & 59 &0.031 & 62.2 & 0.04 \\
  $IPFP_{\text{Init~\cite{Gauzere2014a}}}$& 20.5  & 5 & 0.006 & 20.7  & 3.4 &  0.005 & 33.6 & 0.016
                                      & 52.5& 0.037\\
   \bottomrule
\end{tabular}
\end{small}
\end{table*}
IPFP approach allows to drastically improve the accuracy of the
approximation with respect to LSAP approaches while keeping a
reasonable computational time. This observation can be made on the
four datasets, hence showing the consistency of this QAP approach. 
 
Results shown in Table~\ref{tab:results} also highlight the importance
of the initialization step. Since the functional of QAP formulation is
not convex, different initializations can lead to different local
minima. Obviously, initializations close to the global minimum have an
higher probability to reach it than initializations far from it. This
behavior is observed in the results where better approximations are
obtained by using the approach giving the best approximation
considering LSAP framework. Moreover, less iterations are required to
reach convergence since the algorithm is initialized close to the
minima. This phenomenon explains the low differences of computational
time between the different approaches. Note that we didn't test the
method presented in~\cite{Carletti2015} due to its high
computational time. In conclusion, these results show that the QAP
approach is a relevant approach to approximate graph edit distance and
outperforms methods based on LSAP formulation while keeping an
interesting computational time with respect to the one required to
compute an exact graph edit distance.

\begin{figure}[!t]
  \centering
  \includegraphics[width=0.9\linewidth]{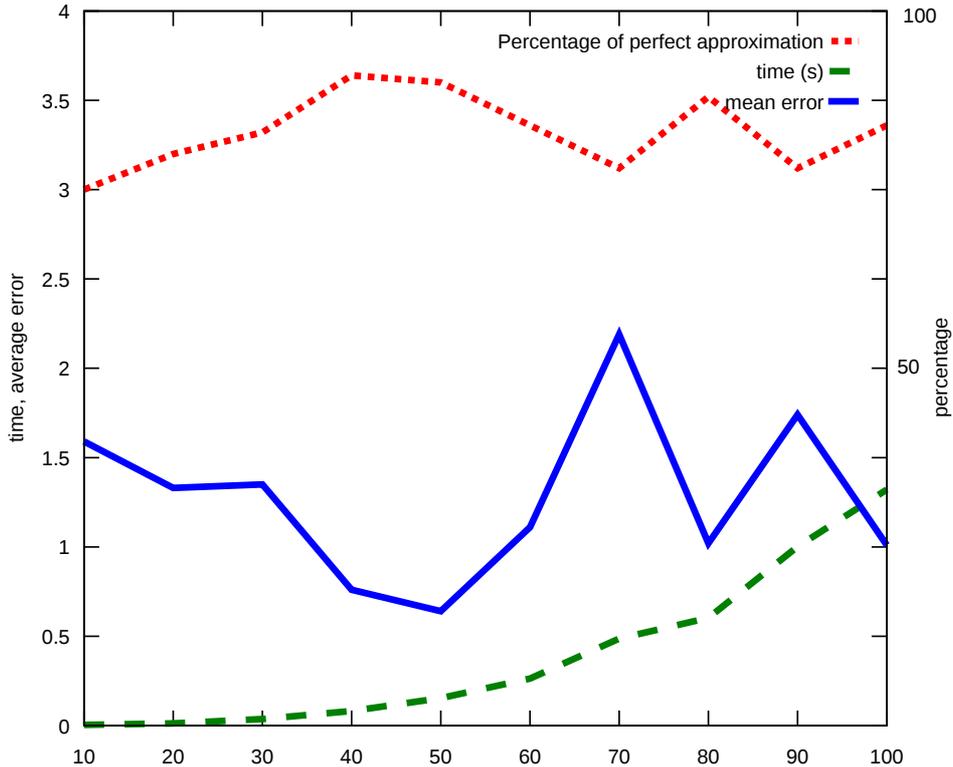}%\vspace*{-0.7cm}
  \caption{Analysis of complexity on a synthetic dataset.}\label{fig:time-analysis}
\end{figure}
Figure~\ref{fig:time-analysis} shows how our IPFP approach scales with
the size of graphs. This results have been computed on a synthetic
dataset having same node's and edge's labels distribution and same
ratio between the number of edges and the number of nodes as MAO
dataset but generalized to different graph sizes. For a given number
of nodes, a synthetic dataset is composed of 100 pairs of source and
target graphs. Each target graph has been generated by removing one
node and substituting another one from the associated source
graph. The overall edit distance between source and target graphs is
then defined by the cost associated to this two node operations
together with the induced edit operations on edges. The graph edit
distance between each pair of graphs is around 10. Given this
protocol, we generated a synthetic dataset composed of 100 couples of
graphs for different graph's sizes, hence obtaining 10 datasets from 10 to
100 nodes.

In Figure~\ref{fig:time-analysis}, the dashed green line corresponds
to the computational time required to compute an approximation of the
graph edit distance, the plain blue line to the average approximation
error and the dotted red line corresponds to the percentage of pairs
for which the exact graph edit distance is computed using IPFP. The
$x$ axis corresponds to the size of the graphs. The $y$ axis on the
left of Figure~\ref{fig:time-analysis} represents simultaneously the
mean execution times and the average error using a same
scaling. Hence, for example, $0.5$ should be read as $0.5$ seconds on
the dashed green line and as an average error of 0.5 on the plain blue
curve. The $y$ axis on the right corresponds to the percentage of
exact graph edit distances computed by our algorithm and should be
used for the analysis of the dotted red curve. As we can see, the
accuracy of the approximation using IPFP is stable for all tested
sizes. The average error for each dataset remains about 5 to 10 $\%$ of the exact graph edit
distance which corresponds to a good approximation. Moreover, the
percentage of perfect approximation shows  that we are able to
compute the exact graph edit distance for 75\% to 91\% pairs of
graphs. From a computational point of view, the curve seems to
describe a polynomial function with respect to the size of
graphs. Considering a bounded number of iterations, this observation
is conform with the cubic complexity associated to the algorithm used
to resolve LSAP problems, which is used in each iteration of the IPFP
algorithm.

%%% Local Variables:
%%% mode: latex
%%% TeX-master: "ged"
%%% End:

\section{Conclusion}

We have in this paper characterized the solutions of some LSAP and QAP
as edit paths belonging to a certain family.  This
characterization has allowed us to solve the graph edit distance
problem through a QAP with a clear definition of the family of edit
paths implied in the minimization process.

The proposed algorithm used to solve the QAP usually provides a
permutation matrix, and hence a value which can be directly
interpreted as an approximation of the GED.  If the final matrix is
only  stochastic, we simply project it on the set of
permutation matrices in order to obtain the desired value.

Our experiments show that the proposed algorithm usually find values
quite close from the optimal solutions within computational times
which allow to apply it on graphs of non trivial sizes. These
experiments also show the importance of the initialization step since a
good initialization both improves the final result and reduces the
number of iterations.

Our further works should investigate different directions: the
initialization step, the removal within our algorithm of the different
instances of $\epsilon$ values and finally the definition of
alternative QAP solvers. Such improvements should allow in a near
future to compute efficiently close approximations of the graph edit
distance on graphs composed of a thousand of nodes.

%%% Local Variables:
%%% mode: latex
%%% TeX-master: "ged"
%%% End:

\footnotesize
\bibliographystyle{plain}
\bibliography{technical_report_ged}

\end{document}